\newcommand{\ignore}[1]{}
\newcommand{\finish}[1]{{\bf #1}}
\newcommand{\skipit}[1]{{ #1}}
\newcommand{\DL}{{ DL}}
\newcommand{\pd}[1]{+\partial #1}
\newcommand{\ppd}[1]{+\partial_{||} #1}
\newcommand{\mmd}[1]{-\partial_{||} #1}
\newcommand{\pl}{\partial_{||}}
\newcommand{\PD}[1]{+\Delta #1}
\newcommand{\MD}[1]{-\Delta #1}
\newcommand{\supp}{\sigma}   
\newcommand{\non}{{\thicksim}}
\newcommand{\false}{{\bf false}}
\renewcommand{\imath}{i}
\newcommand{\nesubset}{\rotatebox[origin=c]{45}{$\mathbf{\subset}$}}
\newcommand{\sesubset}{\rotatebox[origin=c]{-45}{$\mathbf{\subset}$}}
\newtheorem{theorem}{Theorem}
\newtheorem{lemma}[theorem]{Lemma}
\newtheorem{proposition}[theorem]{Proposition}
\newtheorem{corollary}[theorem]{Corollary}
\newtheorem{example}[theorem]{Example}
\pgfplotsset{compat=1.14}
\begin{document}

\title{Rethinking Defeasible Reasoning:\\A Scalable Approach}

\author[M.J. Maher et al]
{MICHAEL J. MAHER \\
Reasoning Research Institute, Australia  \\  
\email{michael.maher@reasoning.org.au}
\and
ILIAS TACHMAZIDIS, GRIGORIS ANTONIOU, STEPHEN WADE \\
University of Huddersfield, UK \\
\email{\{i.tachmazidis, g.antoniou, s.j.wade\}@hud.ac.uk}
\and
LONG CHENG \\
Dublin City University, Ireland  \\
\email{long.cheng@dcu.ie}
}

\maketitle
\bibliographystyle{acmtrans}

\begin{abstract}
Recent technological advances have led to unprecedented amounts of
generated data that originate from the Web, sensor networks and social
media. Analytics in terms of defeasible reasoning -- for example for decision
making -- could provide
richer knowledge of the underlying domain. Traditionally, defeasible
reasoning has focused on complex knowledge structures 
over small to medium amounts of data,
but recent research efforts have attempted to parallelize the reasoning
process over theories with large numbers of facts.  Such work has shown
that traditional defeasible logics come with overheads that limit
scalability.  In this work, we design a new logic for defeasible
reasoning, thus ensuring scalability by design. We establish several
properties of the logic, including its relation to existing defeasible
logics. Our experimental results indicate that our approach is indeed
scalable and defeasible reasoning can be applied to billions of facts.

This paper is under consideration in Theory and Practice of Logic Programming (TPLP).
\end{abstract}

\begin{keywords}
    Defeasible Reasoning, Parallel Reasoning, Scalability
\end{keywords}
  
\section{Introduction}

Recent technological advances have led to unprecedented amounts of generated 
data that originate from the Web, sensor networks and social media. Once this 
data is stored, the challenge becomes developing solutions for efficient processing 
of the vast amounts of data in order to extract additional value.
Analytics in terms of reasoning -- for example, for decision making -- should be performed using rule sets that would 
allow the aggregation, visualization, understanding and 
interpretation 
of given 
datasets and their interconnections. Specifically, one should use rules able to
encode inference semantics, as well as commonsense and practical conclusions in 
order to infer new and useful knowledge based on the data. 

Various monotonic logics have been implemented with this large scale of data in mind.
Work includes 
Datalog~\cite{DBLP:conf/lpnmr/LeoneAACCCFFGLC19,DBLP:journals/tplp/CondieDISYZ18,DBLP:conf/inap/Martinez-AngelesDCB13}, 
$\mathcal{E \kern -0.2em L}^+$ \cite{DBLP:conf/esws/MutharajuHML15},
OWL Horst~\cite{DBLP:conf/bigcomp/KimP15,DBLP:journals/ws/UrbaniKMHB12}, 
RDFS~\cite{DBLP:conf/semweb/HeinoP12,crayreasoning,marvin} and Fuzzy 
logics~\cite{zhou-scale-wl4ai2013,fuzzyHorst_guilin_iswc2011},
scaling reasoning up to billions of facts. For a comprehensive overview of existing approaches
on large-scale reasoning, readers are referred to~\cite{DBLP:journals/ker/AntoniouBMPQTUZ18}.

Nevertheless, it should be pointed out that available data often come from heterogeneous sources that are 
not necessarily controlled by the data engineer, and therefore may contain 
imperfect, incomplete or conflicting information. Other reasons for imperfect data may be faults in sensors or the communication infrastructure. It is evident that monotonic reasoning
is not suited for such data processing, which subsequently led to the study of large-scale
nonmonotonic reasoning. In particular, logic programs under the well-founded 
semantics~\cite{WFS} has been addressed~\cite{DBLP:conf/ruleml/TachmazidisA13,DBLP:journals/tplp/TachmazidisAF14}.
However, this semantics can only indirectly address conflicting information.

Defeasible reasoning provides facilities to directly address conflicting information. While computationally simple, it has found numerous applications in the modelling of legal reasoning~\cite{Prakken,GM17}, regulations~\cite{acis,DBLP:journals/ail/IslamG18}, business rules~\cite{DBLP:conf/sigecom/GrosofLC99}, contracts~\cite{DR-CONTRACT}, negotiation~\cite{DR-NEGOTIATE} and business process compliance management~\cite{FCL,BPCsurvey}. 

However, its scalability is still in question.
Propositional defeasible logics can be executed in linear time \cite{linear,TOCL10} 
but that algorithm does not easily support parallelism, 
nor does it easily extend to first-order defeasible logics.
Implementations of defeasible logic for big data have been limited to subsets of the full logic \cite{DBLP:conf/kr/TachmazidisAFK12,DBLP:conf/ecai/TachmazidisAFKM12}.
Both approaches have been applied to billions of facts, but neither approach was able to
capture the general case.
A fundamental problem in existing defeasible logics identified in~\cite{DBLP:conf/ecai/TachmazidisAFKM12}, 
is that the notion of provable-failure-to-prove, which is central to these logics,
requires the generation and retention of a prohibitive amount of negative derivation conclusions.
This inhibits the scalability of implementations of such logics.

In this work, we propose a novel approach for defeasible reasoning over large data
by defining a scalable defeasible logic. 
The new inference rules of the logic avoid reliance on provable-failure-to-prove
by building solely on definitely and defeasibly provable conclusions.
With this approach, the new logic provides scalability by design.
The result is a reasoning process
that is comparable in terms of scalability with existing methods for monotonic logics. In the context of this work, a scalable method allows large-scale inference computation by utilizing parallel and distributed settings over big data.
Experimental results highlight the scalability properties of the proposed logic, while showing that our approach
can scale up to 1 billion facts over a real-world case study.

The paper is structured as follows. Section~\ref{sec:defeasible_logic} provides a brief outline 
of defeasible logic. 
Section~\ref{sec:parallel_dr} discusses an existing implementation of defeasible logic over stratified rule sets,
to demonstrate the general process of inferring defeasible conclusions in a distributed setting and the problems that arise.
The new logic is introduced in Section~\ref{sec:sdl}, while Section~\ref{sec:Properties} 
establishes its theoretical properties. 
Section~\ref{sec:Implementation}
describes the implementation and Section~\ref{sec:Experimental_results} the experimental 
evaluation. We conclude in Section~\ref{sec:CONCLUSION}.
Proofs are available in the appendices.

\section{Defeasible Logics}  \label{sec:defeasible_logic}

A defeasible theory $D$ is a triple $(F,R,>)$ where $F$ is a finite set of facts (literals), $R$ a finite set of rules,
and $>$ a superiority (or priority) relation (a binary acyclic relation) on $R$,
specifying when one rule overrides another, given that both are applicable.
Rules and facts may be labelled, to enable reference to them.
The set of labels is denoted by $\Lambda(D)$.

A rule $r$ consists (a) of its antecedent (or body) $A(r)$ which is a finite set of literals, (b) an arrow, and, (c) its
consequent (or head) $C(r)$ which is a literal. There are three types of rules: strict rules, defeasible rules and
defeaters represented by a respective arrow $\rightarrow$, $\Rightarrow$ and $\leadsto$. Strict rules are rules in
the classical sense: whenever the premises are indisputable (e.g., facts) then so is the conclusion. Defeasible rules
are rules that can be defeated by contrary evidence. Defeaters are rules that cannot be used to draw any conclusions; their only use is to prevent some conclusions.

A literal is a possibly negated predicate symbol applied to a sequence of variables and constants.
We will require that any variable in the head of a rule also occurs in the body,
and that every fact is variable-free,
a property known as \emph{range-restricted}\footnote{
This is not a requirement of defeasible theories, it simply eases discussion and implementation.
It is a common requirement in work on deductive elements of databases,
and is not very restrictive in practice.
}.
Given a fixed finite set of constants,
any rule is equivalent to a finite set of variable-free rules, and
any defeasible theory is equivalent to a variable-free defeasible theory, for the purpose of semantical analysis.
We refer to variable-free defeasible theories, etc as \emph{propositional}, since there is only a syntactic difference between such theories and true propositional defeasible theories.
Consequently, we will formulate definitions and semantical analysis in propositional terms.
However, for computational analyses and implementation we will also address defeasible theories that are not propositional.

Given a set $R$ of rules, we denote the set of all strict rules in $R$ by
$R_{s}$, and the set of strict and defeasible rules in $R$ by $R_{sd}$. $R[q]$ denotes the set of rules in $R$ with consequent $q$.
If $q$ is a literal, $\non q$ denotes the complementary literal (if $q$ is a positive literal $p$ then $\non q$ is $\neg p$;
and if $q$ is $\neg p$, then $\non q$ is $p$).
A \emph{conclusion} takes the forms $+d \: q$ or $-d \: q$, where $q$ is a literal and $d$ is a tag indicating which inference rules were used.
Given a defeasible theory $D$,
$+d \: q$ expresses that $q$ can be proved via inference rule $d$ from $D$,
while $-d \: q$ expresses that it can be established that $q$ cannot be proved from $D$.

\begin{example}    \label{ex:tweety}
To demonstrate defeasible theories,
we consider the representation of the Tweety problem as a defeasible theory.
The defeasible theory $D$ consists of the rules and facts
\[
\begin{array}{rrcl}
r_{1}: & bird(X) & \Rightarrow & \phantom{\neg} fly(X) \\
r_{2}: & penguin(X) & \Rightarrow & \neg fly(X) \\
r_{3}: & penguin(X) & \rightarrow & \phantom{\neg} bird(X) \\
e      : & bird(eddie) &  &  \\
f      : & penguin(tweety) &  &  \\
\end{array}
\]
and a priority relation $r_{2} > r_{1}$.

Here $r_1, r_2, r_3, e, f$ are labels and
$r_3$ is (a reference to) a strict rule, while $r_1$ and $r_2$ are defeasible rules,
and $e$ and $f$ are facts.
Thus $F = \{e, f\}$, $R_s = \{ r_3 \}$ and $R_{sd} = R = \{r_1, r_2, r_3 \}$
and $>$ consists of the single tuple $(r_2, r_1)$.
The rules express that birds usually fly ($r_1$),
penguins usually don’t fly ($r_2$),
and that all penguins are birds ($r_3$).
In addition, the priority of $r_{2}$ over $r_{1}$ expresses that when something is both a bird and a penguin (that is, when both rules can fire) it usually cannot fly
(that is, only $r_{2}$ may fire, it overrules $r_{1}$).
Finally, we are given the facts that $eddie$ is a bird and $tweety$ is a penguin.
\end{example}

As an example of a defeasible logic, in \cite{TOCL01} a defeasible logic now called $\DL(\partial)$ is defined
with the following inference rules, phrased as conditions on proofs\footnote{
Here, 
$D$ is a defeasible theory  $(F,R,>)$,
$q$ is a variable-free literal,
$P$ denotes a proof (a sequence of conclusions constructed by the inference rules), 
$P[1..i]$ denotes the first $i$ elements of $P$,
and $P(i)$ denotes the $i^{th}$ element of $P$.
}

\smallskip
\smallskip
\noindent\begin{minipage}[t]{.45\textwidth}
\begin{tabbing}
90123456\=7890\=1234\=5678\=9012\=3456\=\kill

$+\Delta)$  If  $P(i+1) = +\Delta q$  then either \\
\hspace{0.2in}  (1)  $q \in F$;  or \\
\hspace{0.2in}  (2)  $\exists r \in R_{s}[q] \  \forall a \in A(r),
+\Delta a \in P[1..i]$.
\end{tabbing}
\end{minipage}
\begin{minipage}[t]{.45\textwidth}
\begin{tabbing}
90123456\=7890\=1234\=5678\=9012\=3456\=\kill

$-\Delta)$  If  $P(i+1) = -\Delta q$  then \\
\hspace{0.2in}  (1)  $q \notin F$,  and \\
\hspace{0.2in}  (2)  $\forall r \in R_{s}[q] \  \exists a \in A(r),
-\Delta a \in P[1..i]. $
\end{tabbing}
\end{minipage}
\smallskip
\smallskip

These two inference rules concern reasoning about definitive information,
involving only strict rules and facts.
They define conventional monotonic inference.
The next rules refer to defeasible reasoning.

\smallskip
\smallskip
\noindent\begin{minipage}[t]{.45\textwidth}
\begin{tabbing}
$+\partial)$  If  $P(i+1) = +\partial q$  then either \\
\hspace{0.2in}  (1)  $+\Delta q \in P[1..i]$; or  \\
\hspace{0.2in}  (2)  The following three conditions all hold. \\
\hspace{0.4in}      (2.1)  $\exists r \in R_{sd}[q] \  \forall a \in A(r)$,  \\
\hspace{1.1in}                                  $+\partial a \in P[1..i]$,  and \\
\hspace{0.4in}      (2.2)  $-\Delta \non q \in P[1..i]$,  and \\
\hspace{0.4in}      (2.3)  $\forall s \in R[\non q]$  either \\
\hspace{0.6in}         (2.3.1)  $\exists a \in A(s),  -\partial a \in P[1..i]$; \\
\hspace{0.6in}         \ \ or \\
\hspace{0.6in}          (2.3.2)  $\exists t \in R_{sd}[q]$  such that \\
\hspace{0.8in}                $\forall a \in A(t),  +\partial a \in P[1..i]$,  and \\
\hspace{0.8in}                $t > s$.
\end{tabbing}
\end{minipage}
\begin{minipage}[t]{.45\textwidth}
\begin{tabbing}
$-\partial)$  If  $P(i+1) = -\partial q$  then \\
\hspace{0.2in}  (1)  $-\Delta q \in P[1..i]$, and \\
\hspace{0.2in}  (2)  either \\
\hspace{0.4in}      (2.1)  $\forall r \in R_{sd}[q] \  \exists a \in A(r)$,  \\
\hspace{1.1in}                                        $-\partial a \in P[1..i]$; or \\
\hspace{0.4in}      (2.2)  $+\Delta \non q \in P[1..i]$; or \\
\hspace{0.4in}      (2.3)  $\exists s \in R[\non q]$  such that \\
\hspace{0.6in}          (2.3.1)  $\forall a \in A(s),  +\partial a \in P[1..i]$, \\
\hspace{0.6in}           \ \ and \\
\hspace{0.6in}          (2.3.2)  $\forall t \in R_{sd}[q]$  either \\
\hspace{0.8in}                $\exists a \in A(t),  -\partial a \in P[1..i]$;  or \\
\hspace{0.8in}                not$(t > s)$.\\
\end{tabbing}
\end{minipage}

$+\partial q$ is a \emph{consequence} of a defeasible theory $D$ if there is a proof containing $+\partial q$.

In the $+\partial$ inference rule,
(1) ensures that any monotonic consequence is also a defeasible consequence.
(2) allows the application of a rule (2.1) with head $q$, provided that
monotonic inference provably cannot prove $\non q$ (2.2)
and every competing rule either provably fails to apply (2.3.1)
or is overridden by an applicable rule for $q$ (2.3.2).
The $-\partial$ inference rule is the strong negation \cite{flexf} of the $+\partial$ inference rule.
It establishes when a literal is provably not provable in the logic.

To demonstrate these inference rules, we apply them to the Tweety
defeasible theory in the previous example.
\begin{example}   \label{ex:tweety2}
We infer $+\Delta penguin(tweety)$ by application of (1) of the $+\Delta$ inference rule,
and then \linebreak
$+\Delta bird(tweety)$ by application of (2) of that rule.
We also infer $+\Delta bird(eddie)$.
From the $-\Delta$ inference rule we infer
$-\Delta penguin(eddie)$, $-\Delta fly(eddie)$, $-\Delta \neg fly(eddie)$,
$-\Delta fly(tweety)$, and $-\Delta \neg fly(tweety)$,
establishing that these literals cannot be definitely established.

All $+\Delta$ conclusions can also be derived defeasibly, using (1) of the $+\partial$ inference rule.
Also note that $-\partial penguin(eddie)$ is derived because $-\Delta penguin(eddie)$
and there is no (instance of) a rule with head $penguin(eddie)$,
so (1) and (2.1) of the $-\partial$ inference rule are satisfied.
Consequently, we can infer $-\partial \neg fly(eddie)$
because (1) and (2.1)  of the $-\partial$ inference rule
are satisfied by $-\Delta \neg fly(eddie)$ and $-\partial penguin(eddie)$.
Finally, we can now infer $+\partial fly(eddie)$ by (2) of the $+\partial$ inference rule
because $r_1$ and $+\partial bird(eddie)$ combine to satisfy (2.1),
\linebreak
$-\Delta \neg fly(eddie)$ satisfies (2.2),
and (2.3) is satisfied because the only (instance of a) rule for $fly(eddie)$, $r_2$,
has $penguin(eddie)$ in its body, and we derived $-\partial penguin(eddie)$.

In contrast, we infer $+\partial \neg fly(tweety)$ because
$r_2$ and $+\partial penguin(tweety)$ satisfy (2.1),
$-\Delta fly(tweety)$ satisfies (2.2), and
(2.3) is satisfied because the only rule $s = r_1$ with head $\neg fly(tweety)$
is overruled by $t = r_2$ in (2.3.2) using $+\partial penguin(tweety)$.
Without the priority statement, we would not infer $+\partial \neg fly(tweety)$,
and instead infer $-\partial \neg fly(tweety)$ (as well as $-\partial fly(tweety)$),
thus being unable to come to any positive conclusion about the ability of $tweety$ to fly.
\end{example}

A tag/inference rule $d$ in a logic is \emph{consistent} if,
for every defeasible theory $D$ in the logic
and every proposition $q$,
we do not have both consequences
$+d q$
and
$+d\neg q$
unless we also have consequences
$\PD{q}$
and
$\PD{\neg q}$.
This property expresses that defeasible reasoning does not cause inconsistencies:
any inconsistency in consequences is caused by inconsistency in the monotonic part of the defeasible theory. 
We say a logic is consistent if its main inference rule is consistent.
$\DL(\partial)$ is consistent, as are the other logics in \cite{TOCL10}.

\section{Parallel Stratified Defeasible Reasoning}\label{sec:parallel_dr}

In order to facilitate the discussion in the following sections, we first need to 
discuss fundamental notions of parallel stratified defeasible reasoning as presented 
in~\cite{DBLP:conf/ecai/TachmazidisAFKM12}.

A rule set is \emph{stratified} if all of its predicates can be assigned a rank
such that: (a) no predicate depends on one of equal or greater rank, and 
(b) no predicate is assigned a rank not equal to its complement.
Note that a predicate that is found in the head of a rule \emph{depends} on the 
predicates that are found in the body of the same rule.

Consider the following stratified rule set:
\[
\begin{array}{rrcl}
r_{1}: & r(X,Z), s(Z,Y) & \Rightarrow & \phantom{\neg}q(X,Y) \\
r_{2}: & t(X,Z), u(Z,Y)  & \Rightarrow & \neg q(X,Y) \\
       & r_1 > r_2 \\
\end{array}
\]
where predicates $r(X,Z)$, $s(Z,Y)$, $t(X,Z)$ and $u(Z,Y)$ 
are assigned to rank 0, while both $q(X,Y)$ and $\neg q(X,Y)$ are
assigned to rank 1.

Predicates that are assigned to rank 0 do not appear in the head of any rule,
and thus, only a transformation of given facts into $+\Delta$ and $+\partial$ 
conclusions is required. Given the facts $r(a,b)$, $s(b,b)$, $t(a,e)$, $u(e,b)$ and $u(e,g)$, this transformation will create the following conclusions
(assuming a \emph{key}-\emph{value} storage, where the \emph{key} stores the conclusion 
itself while the \emph{value} stores the knowledge about the conclusion):
\begin{center}
	\begin{tabular}{ l l }	
	$<r(a,b),~(+\Delta, +\partial)>$ & ~~~~~ $<s(b,b),~(+\Delta, +\partial)>$ \\
	$<t(a,e),~(+\Delta, +\partial)>$ & ~~~~~ $<u(e,b),~(+\Delta, +\partial)>$ \\
	 & ~~~~~ $<u(e,g),~(+\Delta, +\partial)>$
	\end{tabular}
\end{center}

For rank 1, defeasible reasoning needs to be performed in order to resolve the conflict
between $q(X,Y)$ and $\neg q(X,Y)$. Due to the nature of defeasible reasoning,
parallel reasoning is performed in two passes. The first pass computes applicable rules
for $q(X,Y)$ and $\neg q(X,Y)$. Notice that unlike monotonic reasoning, in 
defeasible reasoning applicable rules might not lead to new conclusions. Hence,
the second pass performs the actual defeasible reasoning and computes for each literal 
whether it is definitely or defeasibly provable. 

The following is based on a distributed system with two nodes. However, the same process
is applicable to any parallel and distributed setting. Note that in order to perform parallel and distributed defeasible reasoning, each node 
requires a complete knowledge of the given rule set, thus enabling both
parallel rule applications (first pass) and parallel defeasible reasoning (second pass).
Consider the following distribution
of the aforementioned conclusions (for the sake of readability, all knowledge is assumed to be 
stored in memory):
\begin{center}
	\begin{tabular}{ l l }
	node 1 &   ~~~~~ node 2\\	
	$<r(a,b),~(+\Delta, +\partial)>$ & ~~~~~ $<s(b,b),~(+\Delta, +\partial)>$ \\
	$<t(a,e),~(+\Delta, +\partial)>$ & ~~~~~ $<u(e,b),~(+\Delta, +\partial)>$ \\
	 & ~~~~~ $<u(e,g),~(+\Delta, +\partial)>$
	\end{tabular}
\end{center}

Considering the first pass, namely computing applicable rules, joins on common arguments
for rule $r_1$ (resp. $r_2$) can only be performed if literals $r(a,b)$ and 
$s(b,b)$ (resp. $t(a,e)$, $u(e,b)$ and $u(e,g)$) are located in the same node, performing joins on argument $b$ (resp. argument $e$). Thus,
the existing knowledge needs to be shuffled as follows:
\begin{center}
	\begin{tabular}{ l l }
	node 1 & ~~~~~ node 2\\	
	$<r(a,b),~(+\Delta,+\partial)>$ & ~~~~~ $<t(a,e),~(+\Delta,+\partial)>$ \\
	$<s(b,b),~(+\Delta,+\partial)>$ & ~~~~~ $<u(e,b),~(+\Delta,+\partial)>$ \\
	 & ~~~~~ $<u(e,g),~(+\Delta,+\partial)>$
	\end{tabular}
\end{center}
Note that \emph{key}-\emph{value} shuffling for the first pass can be performed according to the hash value of the join argument (argument $Z$ in rules $r_1$ and $r_2$), namely after applying a hash function, argument $b$ is assigned to node~1 while argument $e$ is assigned to node~2.
In this way, joins can be performed locally in each node and in parallel since each 
node works independently. Such computation will lead to the following knowledge base:
\begin{center}
	\begin{tabular}{ l l }
	node 1 & ~~~~~ node 2\\	
	$<r(a,b),~(+\Delta,+\partial)>$ & ~~~~~ $<t(a,e),~(+\Delta,+\partial)>$ \\
	$<s(b,b),~(+\Delta,+\partial)>$ & ~~~~~ $<u(e,b),~(+\Delta,+\partial)>$ \\
	$<q(a,b), (+\partial, r_1)>$ & ~~~~~ $<u(e,g),~(+\Delta,+\partial)>$ \\
	 & ~~~~~ $<q(a,b), (\neg,+\partial, r_2)>$	\\
	 & ~~~~~ $<q(a,g), (\neg,+\partial, r_2)>$
	\end{tabular}
\end{center}
where $<q(a,b), (+\partial, r_1)>$ means that $q(a,b)$ is supported by rule $r_1$, $<q(a,b), (\neg,+\partial, r_2)>$ means that $\neg q(a,b)$ is supported by
rule $r_2$, and $<q(a,g), (\neg,+\partial, r_2)>$ means that $\neg q(a,g)$ is supported by
rule $r_2$.

At this point, neither $q(a,b)$ nor $\neg q(a,b)$ can be concluded since the
required knowledge for defeasible reasoning is scattered among different nodes. Thus,
all knowledge for $q(a,b)$ and $\neg q(a,b)$ must be located in a single
node. Hence, the second pass, namely defeasible reasoning, groups all relevant data 
for each potential conclusion in a single node, with different nodes performing reasoning
(in parallel) over different conclusions. Thus, the knowledge will be shuffled as follows:
\begin{center}
	\begin{tabular}{ l l }
	node 1 & ~~~~~ node 2\\	
	$<r(a,b),~(+\Delta,+\partial)>$ & ~~~~~ $<t(a,e),~(+\Delta,+\partial)>$ \\
	$<s(b,b),~(+\Delta,+\partial)>$ & ~~~~~ $<u(e,b),~(+\Delta,+\partial)>$ \\
	$<q(a,b), (+\partial, r_1)>$   & ~~~~~ $<u(e,g),~(+\Delta,+\partial)>$ \\
	$<q(a,b), (\neg,+\partial, r_2)>$	& ~~~~~ $<q(a,g), (\neg,+\partial, r_2)>$
	\end{tabular}
\end{center}

This \emph{key}-\emph{value} shuffling for the second pass can be performed according to the hash value of 
the \emph{key}, namely after applying a hash function, literal $q(a,b)$ is assigned to node~1 while literal $q(a,g)$ is assigned to node~2.
Note that during the second pass only knowledge for literals $q(a,b)$, $\neg q(a,b)$ and $\neg q(a,g)$
is relevant (while literals $r(a,b)$, $s(b,b)$, $t(a,e)$, $u(e,b)$ and $u(e,g)$ are ignored).
Notice that conclusions $q(a,b)$ and $\neg q(a,g)$ are computed in parallel by node 1 and
node 2 respectively.
Finally, after performing defeasible reasoning, the knowledge about applicable rules is replaced by
the final conclusions. Thus, the final knowledge base will contain the following conclusions:
\begin{center}
	\begin{tabular}{ l l }
	node 1 & ~~~~~ node 2\\	
	$<r(a,b),~(+\Delta,+\partial)>$ & ~~~~~ $<t(a,e),~(+\Delta,+\partial)>$ \\
	$<s(b,b),~(+\Delta,+\partial)>$ & ~~~~~ $<u(e,b),~(+\Delta,+\partial)>$ \\
	$<q(a,b), (+\partial)>$           & ~~~~~ $<u(e,g),~(+\Delta,+\partial)>$ \\
	 & ~~~~~ $<\neg q(a,g), (+\partial)>$
	\end{tabular}
\end{center}

For a more elaborate description of parallel stratified defeasible reasoning, readers 
are referred to~\cite{DBLP:phd/ethos/Tachmazidis15}.

Note that in~\cite{DBLP:conf/ecai/TachmazidisAFKM12} only positive conclusions are computed in
order to ensure scalability. In theory, provable-failure-to-prove (e.g., $-\Delta$ and $-\partial$ 
inference rules) could be computed by first calculating applicable rules and then 
applying conflict resolution. However, such computation does not lead to scalable 
solutions.
Consider the following strict rules:
\[
\begin{array}{lrcl}
r^{*}: & p(X,Z),~ q(Z,Y)  & \rightarrow & p(X,Y) \\
r': & q(X,Z), p(Z,V), t(V,Y) & \rightarrow & q(X,Y)
\end{array}
\]

In order to establish that $p(a,b)$ is not definitely provable ($-\Delta$) every possible 
instantiated rule needs to be checked, namely for every value of $Z$ either $p(a,Z)$ 
or $q(Z,b)$ should be established as $-\Delta$. For the aforementioned rule ($r^*$), 
if there are $N$ constants in the given dataset, $N$ instantiated rules need to be checked for each 
$-\Delta$ conclusion.

In general, for $N$ constants in the given dataset and $k$ variables in a given rule that do 
not appear in the head of the rule (e.g., the variable $Z$ in the aforementioned rule $r^{*}$), 
every $-\Delta$ conclusion (say $-\Delta p(X,Y)$) will require $N^{k}$ instantiated rules to be checked. By checking 
every possible instantiated rule, a significant overhead is introduced that can become prohibitive 
even for relatively small datasets (e.g. if $N=10^5$ and $k=3$ then each conclusion would require the computation of $10^{15}$ rules). 

Once all relevant rules
are computed, all available information for each literal (such as $p(a, b)$) must be processed by a single node (containing all relevant information for the literal to be proved). However, this leads to memory and load balancing problems.
For example, if conclusion $-\Delta p(c,b)$ depends on $10^{5}$ instantiated rules (where $N=10^5$ and $k=1$, for rule $r^{*}$), 
while conclusion $-\Delta q(c,b)$ depends on $10^{10}$ instantiated rules (where $N=10^5$ and $k=2$, for rule $r'$), then there 
is a clear difference in the amount of information that needs to be processed by each node 
(a node computing a $-\Delta p(c,b)$ conclusion is expected to terminate significantly faster 
than a node computing a $-\Delta q(c,b)$ conclusion).


This problem motivates the definition of a new logic.

\section{A Scalable Defeasible Logic}\label{sec:sdl}

The defeasible logic $\DL(\pl)$ involves three tags: 
$\Delta$, which we have already seen;
$\lambda$, an auxiliary tag;
and $\pl$, which is the main notion of defeasible proof in this logic.

For a defeasible theory $D$,
we define $P_{\Delta}$ to be 
the set of consequences in the largest proof satisfying the proof condition $+\Delta$,
and call this the $\Delta$ \emph{closure}.
It contains all $+\Delta$ consequences of $D$.

Once $P_{\Delta}$ is computed, we can apply the $+\lambda$ inference rule.
$+\lambda q$ is intended to mean that $q$ is potentially defeasibly provable in $D$.
The $+\lambda$ inference rule is as follows.

\begin{tabbing}
$+\lambda$: \=We may append $P(\imath + 1) = +\lambda q$ if either \\
\> (1) \=$+\Delta q \in P_{\Delta}$ or \\
\> (2)	\>(2.1) $\exists r \in R_{sd}[q] ~ \forall \alpha \in A(r): +\lambda \alpha \in P(1..\imath)$ and \\
\> \>(2.2) $+\Delta \non q \notin P_{\Delta}$ 
\end{tabbing}
\smallskip

Using this inference rule, and given $P_{\Delta}$, we can compute the $\lambda$ closure $P_{\lambda}$,
which contains all $+\lambda$ consequences of $D$.

$+\pl q$ is intended to mean that $q$ is defeasibly provable in $D$.
Once $P_{\Delta}$ and $P_{\lambda}$ are computed, we can apply the $+\pl$ inference rule.

\begin{tabbing}
$+\pl$: \=We may append $P(\imath + 1) = +\pl q$ if either \\
\> (1) \=$+\Delta q  \in P_{\Delta}$ or \\
\> (2)	\>(2.1) $\exists r \in R_{sd}[q] ~ \forall \alpha \in A(r): +\pl \alpha \in P(1..\imath)$ and \\
\> \>(2.2) $+\Delta \non q \notin P_{\Delta}$ and \\
\> \>(2.3) \=$\forall s \in R[\non q]$ either \\
\> \> \>(2.3.1) $\exists \alpha \in A(s): +\lambda \alpha \notin P_{\lambda}$ or \\
\> \>\>(2.3.\=2) $\exists t \in R_{sd}[q]$ such that \\
\> \>\>\>$\forall \alpha \in A(t): +\pl \alpha \in P(1..\imath)$ and $t > s$
\end{tabbing}

The $\pl$ closure $P_{\pl}$ contains all $\pl$ consequences of $D$.

Notice that the structure of the inference rule for $\pl$ is the same as the structure of the inference rule for $\partial$.
However there are important differences to note:
\begin{itemize}
\item
The inference rule for $\pl$ uses the closures $P_\Delta$ and $P_\lambda$
in addition to the single proof $P$ to which it is applied.
These closures are pre-computed.
In contrast, in $\DL(\partial)$ the proof $P$ incorporates both $\Delta$ and $\partial$ conclusions.
\item
At (2.3.1), the $\pl$ inference rule refers to $\lambda$ rather than $\partial$;
in terms of which conclusions are drawn, this is the most significant variation from $\partial$.
\item
Furthermore,
at (2.2) and (2.3.1), the $\pl$ inference rule does not use negative tags, such as $-\Delta$,
which represent provable failure to prove.  Instead, $\pl$ uses $\notin P$, which represents
failure to prove at the meta level, rather than from within the logic.
This use of $\notin P$ is only possible because it refers to closures that have already been computed.
\end{itemize}

Since the proof rules of our logic do not require $-\lambda$ or $-\pl$ conclusions, we do not present the inference rules for $-\lambda$ and $-\pl$ here.
They are in \ref{app:relexp}.

It is straightforward to see that $\lambda$ is not consistent.
Nevertheless, $\DL(\pl)$ is consistent.
The proof is in \ref{app:ris}.

\begin{proposition}   \label{prop:consistent}
The inference rule $+\pl$ is consistent.
\end{proposition}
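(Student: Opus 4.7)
The plan is to argue by contradiction: assume there is a defeasible theory $D$ and a literal $q$ such that both $+\pl q$ and $+\pl \neg q$ are consequences of $D$, yet we do not have both $+\Delta q$ and $+\Delta \neg q$ in $P_\Delta$. I will do a case split on which clause of the $+\pl$ inference rule was used to derive each conclusion. The cases where at least one of them is derived via clause (1) are immediate: clause (1) forces $+\Delta q \in P_\Delta$ (or the analogous statement for $\neg q$), and if only one side uses clause (1) then clause (2.2) on the other side is directly contradicted, while if both use clause (1) we contradict the hypothesis. So the real work lies in the case where both $+\pl q$ and $+\pl \neg q$ are derived via clause (2).

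Before attacking that case, I would first establish an auxiliary lemma: if $+\pl q$ is a consequence of $D$, then $+\lambda q \in P_\lambda$. This follows by induction on the length of a $+\pl$-proof. If clause (1) was used, then $+\Delta q \in P_\Delta$, so clause (1) of $+\lambda$ applies. If clause (2) was used, then there is a rule $r \in R_{sd}[q]$ whose body literals are all $+\pl$-provable (hence $+\lambda$-provable by the induction hypothesis), and $+\Delta \non q \notin P_\Delta$, so clause (2) of $+\lambda$ fires. This lemma is the bridge that lets the negative failure-to-prove condition (2.3.1) in the $+\pl$ rule play well with the strictly positive $+\pl$ chains used elsewhere.

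For the main case, let $r_0 \in R_{sd}[\neg q]$ be the rule witnessing clause (2.1) in the derivation of $+\pl \neg q$, so every $\alpha \in A(r_0)$ satisfies $+\pl \alpha$. Apply clause (2.3) of the derivation of $+\pl q$ to $s = r_0 \in R[\neg q]$: by the lemma, every body literal of $r_0$ is in $P_\lambda$, so (2.3.1) fails, and (2.3.2) must yield a rule $t_0 \in R_{sd}[q]$ with $t_0 > r_0$ and all body literals of $t_0$ being $+\pl$-provable. Symmetrically, applying clause (2.3) of the derivation of $+\pl \neg q$ to $s = t_0 \in R[q]$ forces, by the same reasoning, a rule $r_1 \in R_{sd}[\neg q]$ with $r_1 > t_0$ whose body is $+\pl$-provable. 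Iterating, I build an infinite sequence $r_0 < t_0 < r_1 < t_1 < \ldots$ in $R$.

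Finally, since $R$ is finite, this sequence must revisit some rule, producing a cycle in $>$ and contradicting its acyclicity. I expect the main obstacle to be setting up the auxiliary lemma cleanly and then being careful in the inductive step that the $+\pl$-proof witnesses at each level of the chain genuinely exist (so that the clause (2.3.2) for each side continues to fire); the acyclicity closing argument is then straightforward given the finiteness of $R$.
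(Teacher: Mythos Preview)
Your proposal is correct and follows essentially the same approach as the paper: a case split on which of $+\Delta q$, $+\Delta \neg q$ lie in $P_\Delta$, with the nontrivial case handled by building an unbounded $>$-chain of rules (alternating between $q$ and $\neg q$) via repeated applications of (2.3.2), contradicting finiteness of $R$ and acyclicity of $>$. The only cosmetic difference is that the paper invokes the containment $\pl \subseteq \lambda$ from Proposition~\ref{prop:contain} rather than reproving it inline as your auxiliary lemma.
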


Inference rules $\partial$ and $\pl$ employ the notion of ``team defeat'',
where it doesn't matter which rule overrides an opposing rule, as long as all opposing rules are overridden.
This is expressed in (2.3.2).
We can also have a version of $\pl$ with ``individual defeat'',
where all opposing rules must be overridden by the same rule,
which we denote by $\pl^*$.
The inference rule for $+\pl^*$ replaces (2.3.2) in $\pl$ by $r > s$.
It, too, is consistent.

To demonstrate the use of $\DL(\pl)$, we provide a simple example.

\begin{example}   \label{ex:reachable}
Consider the following defeasible theory describing reachability in a directed graph, where some edges may be broken.

\[
\begin{array}{c}

\begin{array}{lrcl}
r: &   reachable(X), link(X, Y)  & \rightarrow & \phantom{\neg} reachable(Y) \\
s: &  edge(X, Y)  & \Rightarrow & \phantom{\neg} link(X, Y) \\
t: & broken(X, Y) & \Rightarrow & \neg link(X, Y) \\
\end{array} \\
\ \\
\begin{array}{llll}
reachable(a). \\
edge(a, b). & 
edge(b, c). &
edge(b, e). &
edge(c, a). \\
edge(c, d). &
edge(d, e). &
edge(e, d). &
edge(f, e). \\
broken(c, d). &
broken(b, e). \\
\end{array}
\end{array}
\]
with $t > s$.

In this defeasible theory $r$ is a strict rule, $s$ and $t$ are defeasible rules with $t$ overriding $s$ when both are applicable,
and there are facts defining the predicates $edge$ and $broken$, as well as a fact for $reachable$
identifying the starting point for the reachability calculation.

All the facts are known definitely, so they appear in $P_\Delta$; there are no other facts in $P_\Delta$
because there is no definite information about $link$, and so the only strict rule cannot fire.
In addition to all the facts, $P_\lambda$ contains $link(X, Y)$ for each $edge(X, Y)$ fact,
and $\neg link(X, Y)$ for each $broken(X, Y)$ fact. $P_\lambda$ also contains $reachable(X)$ for every $X$ that is reachable from $a$, ignoring the information about broken edges.

$P_{\pl}$ contains all the facts, and $link(X, Y)$ for each unbroken edge and $\neg link(X, Y)$ for each broken edge.
The superiority relation $t > s$ ensures that $+\pl \neg link(X, Y)$ appears and $+\pl link(X, Y)$ does not appear, for each broken edge.
$P_{\pl}$ also contains $reachable(X)$ for every $X$ that is reachable from $a$, via only unbroken edges.
\end{example}

If we compare $\DL(\pl)$ with $\DL(\partial)$ on this defeasible theory
we find that they agree on the defeasible conclusions.
Similarly, on the Tweety theory (Examples \ref{ex:tweety} and \ref{ex:tweety2})
the two logics agree on the defeasible conclusions.

The new inference rules provide a scalability advantage when compared to existing inference rules like $\partial$.
As pointed out in~\cite{DBLP:conf/ecai/TachmazidisAFKM12}, provable-failure-to-prove 
(e.g., $-\Delta$ and $-\partial$ inference rules)
inhibits the scalability of existing defeasible logics. 
We saw a little of this in Example \ref{ex:tweety2},
where many negative conclusions were needed to derive positive conclusions,
but the greater issue arises when rules have variables local to the body of the rule (as discussed in Section~\ref{sec:parallel_dr}).

On the other hand, as illustrated in Section~\ref{sec:Implementation}, the new inference rules that are proposed in this work result in a defeasible
logic that is comparable in terms of scalability to existing monotonic logics, and thus able to benefit
from available optimizations (readers are referred to~\cite{DBLP:journals/ker/AntoniouBMPQTUZ18}
for a comprehensive overview of existing large-scale reasoning methods).

\section{Properties of the Logic}   \label{sec:Properties}

In this section we address properties of $\DL(\pl)$:
the computational complexity of the inference problem,
the relative expressiveness of the logic,
and the relative inference strength of the logic
compared to existing defeasible logics.

\subsection{Computational Complexity}

We first formalize the inference problem for defeasible logics.

\ \\
\noindent
\textbf{The Inference Problem for a Defeasible Logic}

\noindent
\textbf{Instance} \\
A defeasible logic $L$, a defeasible theory $T$, a tag/inference rule $d$ and a literal $q$.

\noindent
\textbf{Question} \\
Is $+d q$ derivable from $T$ using the inference rules of $L$?
\ \\

The computational complexity of inference reflects the difficulty of a scalable implementation.
We show that $\DL(\pl)$ has linear complexity for propositional defeasible theories,
but exponential for arbitrary defeasible theories.
We have three inference rules to consider.
As a result of the structure of the inference rules,
it is straightforward to compute the consequences of $\Delta$, and then $\lambda$, efficiently.

The inference problem for $\partial$ in propositional defeasible logic has linear complexity \cite{linear},
and we use the same techniques to show that
the inference problem for $\pl$ also has linear complexity.
The proof is available in \ref{app:complexity}.

\begin{theorem}   \label{thm:linear}
The set of all consequences of a propositional defeasible theory can be computed in time linear in the size of the defeasible theory.
Consequently, 
the inference problem for propositional $\DL(\pl)$ can be solved in linear time.
\end{theorem}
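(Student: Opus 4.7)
The plan is to compute the three closures $P_\Delta$, $P_\lambda$, and $P_{\pl}$ in that order, each in time linear in the size $|D|$ of the defeasible theory, and then observe that the second half of the theorem is immediate by membership lookup. The staged structure of the inference rules — $+\lambda$ depending only on $P_\Delta$ and earlier $+\lambda$ consequences, and $+\pl$ depending only on $P_\Delta$, $P_\lambda$ and earlier $+\pl$ consequences — makes this sequential decomposition valid, so no fixed-point iteration between closures is required.

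For $P_\Delta$ I would reuse the standard forward-chaining technique of \cite{linear}: maintain for each strict rule $r$ a counter initialized to $|A(r)|$, a reverse index from each body literal to the rules containing it, and a worklist seeded with the facts. When $+\Delta q$ is dequeued, decrement the counter of every $r \in R_s$ with $q \in A(r)$; when a counter reaches zero, enqueue $+\Delta\, C(r)$ unless already tagged. Each body-literal occurrence is touched only once, so the total work is $O(|D|)$. The $+\lambda$ closure is computed analogously over $R_{sd}$, augmented with a one-off $O(|D|)$ pre-check that suppresses any head literal $q$ with $+\Delta \non q \in P_\Delta$, thereby handling clause (2.2) uniformly.

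The interesting stage is $P_{\pl}$. For each rule $r \in R_{sd}$ I keep the same body counter, and say that $r$ \emph{fires} when its counter reaches zero. In parallel, for each literal $q$ with $+\Delta \non q \notin P_\Delta$ I pre-compute the set $S(q) = \{\, s \in R[\non q] : +\lambda \alpha \in P_\lambda \text{ for every } \alpha \in A(s)\,\}$ of opposing rules not already eliminated by clause (2.3.1); building all the $S(q)$ is $O(|D|)$, since each body occurrence in $R$ is examined once. I then maintain for each such $q$ a counter $u(q) = |S(q)|$ of undefeated elements together with a flag recording whether any $t \in R_{sd}[q]$ has yet fired. When $t \in R_{sd}[q]$ fires I set the flag and, for each $s \in S(q)$ with $t > s$ that has not yet been marked defeated, mark it and decrement $u(q)$; if $u(q) = 0$ and the flag is set, $+\pl q$ enters the worklist and the usual counter-decrement cascade proceeds. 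Each unit of work is charged either to one consumption of a body-literal occurrence or to one pair of the superiority relation, so the total work remains $O(|D|)$.

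The main obstacle is verifying that clause (2.3) really admits this amortized treatment: that an opposing rule $s$ never needs to be re-examined once marked defeated, that a firing $t$ can be charged against the $>$-pairs in which it is the larger element rather than against all of $S(q)$, and that the monotone growth of $P_{\pl}$ means no literal ever needs to be retracted. Given these invariants, termination and completeness follow from the standard worklist argument used for $\DL(\partial)$ in \cite{linear}, and the inference problem for any $d \in \{\Delta, \lambda, \pl\}$ and query $q$ is then solved by a single $O(1)$ lookup in the corresponding pre-computed closure $P_d$.
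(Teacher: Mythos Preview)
Your proposal is correct and gives a direct worklist algorithm, but it takes a different route from the paper's proof. The paper does not build a bespoke algorithm for clause~(2.3). Instead it argues that the three transformations of \cite{TOCL01} --- $\mathit{regular}$, $\mathit{elim\_dft}$, and $\mathit{elim\_sup}$ --- remain sound for $\DL(\pl)$ because the inference rule for $\pl$ has the same overall shape as that for $\partial$; these transformations eliminate defeaters and the superiority relation entirely, at linear blow-up. After that, the paper performs three one-pass simplifications driven by $P_\Delta$ and $P_\lambda$ (delete defeasible rules for $\non q$ when $+\Delta q$ holds, delete rules containing a body literal outside $P_\lambda$, delete strict rules), and then observes that what remains is handled by a small subset of the existing transition system of \cite{linear}. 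So the paper's argument is essentially a reduction to prior results, whereas yours re-engineers the saturation directly, tracking team defeat via the counters $u(q)$ and charging each decrement to a pair of~$>$.

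Both approaches are valid. The paper's buys brevity and modularity: correctness of the transformations and of the transition system are already established, so only their applicability to $\pl$ needs checking. Your approach is more self-contained and yields an explicit algorithm without the detour through $\mathit{elim\_sup}$; its cost is that the amortization argument for (2.3.2) must be spelled out. Two small points you glossed over: the $\pl$ worklist must be seeded with every $q$ such that $+\Delta q \in P_\Delta$ (clause~(1)), and you should note that defeaters in $R[\non q]$ enter $S(q)$ but never fire, which your phrasing ``for each rule $r \in R_{sd}$'' handles implicitly but is worth making explicit.
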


However, when variables are permitted in rules the inference problem is EXPTIME-complete.

\begin{corollary}   \label{cor:exp}
The set of all consequences of a defeasible theory can be computed in time exponential in the size of the defeasible theory.
Furthermore, the inference problem for defeasible theories is EXPTIME-complete.
\end{corollary}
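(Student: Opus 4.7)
The plan is to establish the exponential upper bound by grounding, and then establish EXPTIME-hardness by a direct reduction from Datalog evaluation.

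For the upper bound, I would exploit the equivalence observed in Section~\ref{sec:defeasible_logic}: any defeasible theory $D$ is equivalent (for semantic purposes) to the variable-free theory $D'$ obtained by instantiating each rule in every possible way using the constants appearing in $D$. Because we work with a fixed finite Herbrand universe (no function symbols, and range-restrictedness guarantees all ground instances involve only those constants), if $D$ uses $n$ constants and each rule contains at most $k$ variables, then $|D'| = O(|D| \cdot n^k)$, which is at most exponential in $|D|$. Applying Theorem~\ref{thm:linear} to $D'$ yields all consequences of $D$ in time linear in $|D'|$, i.e., exponential in $|D|$. This gives both the claimed upper bound on computing all consequences and places the inference problem in EXPTIME.

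For EXPTIME-hardness, I would reduce from the Datalog inference problem, whose combined complexity is well known to be EXPTIME-complete (Vardi; Dantsin, Eiter, Gottlob, Voronkov). Given a Datalog program $\Pi = (F, R)$ and a ground atom $q$, construct the defeasible theory $D_\Pi = (F, R, \emptyset)$ in which every rule of $R$ is interpreted as a strict rule with the arrow $\rightarrow$. The reduction is clearly polynomial (in fact linear) in $|\Pi|$. Because $D_\Pi$ contains no rules producing negative literals and no defeasible rules or defeaters, the $+\Delta$ inference rule reduces exactly to standard Datalog bottom-up derivation, so $q$ is a Datalog consequence of $\Pi$ iff $+\Delta q$ is derivable from $D_\Pi$. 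This shows the inference problem is EXPTIME-hard already when restricted to the tag $\Delta$, hence EXPTIME-hard in general.

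The main obstacle will be articulating the grounding argument cleanly: one must be careful that the grounded theory $D'$ really does capture all $\pl$-consequences of $D$ (which follows from the discussion in Section~\ref{sec:defeasible_logic} that propositional and range-restricted theories are semantically interchangeable once a finite constant set is fixed), and that $|D'|$ is at most singly exponential in $|D|$ under a reasonable size measure. Everything else is essentially bookkeeping: the Datalog reduction transports hardness with no subtlety, since no use is made of the superiority relation, defeaters, or defeasible rules, so the three tags $\Delta$, $\lambda$, $\pl$ all coincide on $D_\Pi$ and hardness transfers uniformly.
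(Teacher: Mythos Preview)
Your proposal is correct and follows essentially the same approach as the paper: ground the theory and apply Theorem~\ref{thm:linear} for the upper bound, then reduce from Datalog for EXPTIME-hardness. The only cosmetic differences are that the paper encodes Datalog rules as \emph{defeasible} rules and argues hardness via $\pl$ (rather than strict rules and $\Delta$), and it explicitly accounts for the $c^{2n}$ blow-up in grounding the superiority relation, which you correctly treat as bookkeeping.
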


From a scalability point of view, the potential for parallelism is important.
Unfortunately, the inference problem for $\DL(\pl)$ is not parallelizible in a theoretical sense,
even for propositional defeasible theories.
Inference of $\pl$ consequences of propositional defeasible theories is P-complete, which is generally regarded as a sign that the problem is
not parallelizible (i.e., not computable in poly-log time with polynomially many processors), 
unless all polynomial-time problems are parallelizible.
Actually, inference of $\PD{q}$ is already P-complete,
so all defeasible logics are not parallelizible in this sense. 
But the proof extends to practically every defeasible logic, even without strict rules.

\begin{theorem}   \label{thm:Pcomplete}
The inference problem for propositional defeasible logics is P-complete.
\end{theorem}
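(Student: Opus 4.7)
The plan is to establish P-completeness by the two standard ingredients: membership in P, and P-hardness via a log-space reduction from the classical P-complete problem \textsc{HornSat} (entailment of a propositional atom from a set of propositional Horn clauses).

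Membership in P is already in hand. Theorem \ref{thm:linear} gives a linear-time algorithm for $\DL(\pl)$ on propositional theories, and for the tags $\Delta$ and $\partial$, as well as for the other defeasible logics surveyed in \cite{linear,TOCL10}, analogous linear-time algorithms are known. Since the inference problem asks only whether a single conclusion $+d\, q$ lies in the computed closure, inference is therefore in P for each of these logics.

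For P-hardness I would reduce from \textsc{HornSat}. Given a Horn formula with clauses $p_1 \wedge \cdots \wedge p_k \rightarrow q$ and unit clauses (facts), I build a propositional defeasible theory by mapping each Horn clause to a single rule with body $\{p_1,\ldots,p_k\}$ and head $q$, mapping each unit clause to a member of $F$, and taking $>$ empty. For the tag $\Delta$, use strict arrows, so that $q$ is a Horn consequence of the input iff $+\Delta q$ is derivable; the translation is local and clearly log-space computable. For $\pl$ (and analogously for $\partial$, $\pl^*$, and the main tags of the logics in \cite{TOCL10}), use defeasible arrows $\Rightarrow$ instead. Since only positive atoms occur, the complementary literal $\non q$ never appears in any rule head or fact. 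Hence in the $+\pl$ inference rule, condition (2.2) holds trivially because $+\Delta \non q$ is unreachable, and condition (2.3) is vacuous because $R[\non q]$ is empty, so $+\pl q$ is derivable iff some rule for $q$ has its body already $+\pl$-provable. A straightforward induction on derivation length then matches the $+\pl$ closure with Horn entailment, giving P-hardness.

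The main obstacle, and precisely the point the paper flags by saying ``the proof extends to practically every defeasible logic, even without strict rules,'' is packaging the reduction so that it applies uniformly across defeasible tags rather than just to $\Delta$. The key observation that makes this routine is that, in a theory built from a Horn formula, no complementary literals ever occur, so every conflict-handling clause in every sensible main inference rule collapses vacuously; defeasible rules and strict rules behave identically and degenerate to monotonic Horn reasoning. Once this uniform collapse is articulated, the same encoding witnesses P-hardness of the inference problem for each tag. Combined with the P upper bound from Theorem \ref{thm:linear}, this establishes P-completeness.
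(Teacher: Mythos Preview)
Your proposal is correct and follows essentially the same approach as the paper: reduce from Horn reasoning, exploit the absence of complementary literals so that the conflict-handling clauses of the defeasible inference rules become vacuous, and appeal to the linear-time results for membership. The only cosmetic difference is that the paper phrases the hard problem as Horn \emph{satisfiability} and introduces a fresh atom $\false$ to encode goal clauses (so $+\Delta\,\false$ iff unsatisfiable), whereas you phrase it as Horn \emph{entailment} of a target atom; these are log-space equivalent formulations, and both arguments use the same observation that in the absence of negation the various defeasible tags collapse to monotonic chaining.
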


The proof is by reduction of the Horn satisfiability problem,
which is P-complete \cite{CookNguyen}.

\subsection{Relative Expressiveness}

Relative expressiveness of defeasible logics is defined in terms of the ability
of one logic to simulate another \cite{Maher12,Maher13}, even in the presence of some additions to a theory.
The \emph{addition} of two defeasible theories $(F_1, R_1, >_1) + (F_2, R_2, >_2)$ is
$(F_1 \cup F_2, R_1 \cup R_2, >_1 \cup >_2)$.
Let $\Sigma(D)$ denote the vocabulary of propositions
and $\Lambda(D)$ denote the vocabulary of labels  for $D$.
Given a theory $D$ and a possible simulating theory $D'$,
an addition $A$ is required to be \emph{modular}:
$\Sigma(A) \cap \Sigma(D') \subseteq \Sigma(D)$,
$\Lambda(D) \cap \Lambda(A) = \emptyset$, and
$\Lambda(D') \cap \Lambda(A) = \emptyset$.
This property ensures that the addition $A$ cannot interfere with auxiliary propositions in $D'$,
nor can it interfere by overruling rules in $D$ or $D'$.

A defeasible theory $D_1$ in logic $L_1$ is \emph{simulated} by $D_2$ in $L_2$ with respect to a class $C$ of additions 
if, for every modular addition $A$ in $C$,
$D_1 + A$ and $D_2 + A$ have the same consequences in $\Sigma(D_1 + A)$, modulo tags\footnote{
That is, $D_1$ in $L_1$ might produce $+d_1 q$ while 
$D_2$ in $L_2$ produces $+d_2 q$, due to different inference rules in the different logics,
but the set of literals $q$ that are derived is the same.
}.
The classes $C$ of additions considered in \cite{Maher13} are:
the empty theory, theories consisting only of facts, theories consisting only of rules,
and arbitrary theories.  These represent progressively stronger notions of simulation.

We say a logic $L_1$ can be simulated by a logic $L_2$ with respect to a class $C$
if every theory in $L_1$ can be simulated by some theory in $L_2$ with respect to additions from $C$.
We say $L_2$ is \emph{more (or equal) expressive than} $L_1$ wrt $C$
if $L_1$ can be simulated by $L_2$ with respect to $C$.
$L_2$ is \emph{strictly more expressive than} $L_1$ wrt $C$
if $L_2$ is more expressive than $L_1$ and $L_1$ is not more expressive than $L_2$, wrt $C$.

To see the necessity of the restriction to modular additions we present the following example.
\begin{example}
Consider a conventional defeasible logic $L_1$
that is to be simulated by a similar logic $L_2$ that allows only two literals in the body of a rule.
A theory $D_1$ consisting of a single rule 
\[
\begin{array}{lrcl}
r: & b_1, b_2, b_3 & \Rightarrow & h \phantom{p} \\
\end{array}
\]
in $L_1$ might be represented as $D_2$:
\[
\begin{array}{lrcl}
s: & b_1, b_2 \phantom{b_3} & \Rightarrow & tmp \\
r: & tmp, b_3 \phantom{b_3} & \Rightarrow & h \\
\end{array}
\]
in $L_2$.
However, if an addition $A$ were permitted to include the fact $tmp$ (and $b_3$) then
$D_1 + A$ cannot infer $h$, but $D_2 + A$ can infer $h$.
Similarly, if $A$ contains facts $b_1, b_2, b_3$ and
\[
\begin{array}{lrcl}
s': & \phantom{b_1, b_2, b_3, i} & \Rightarrow & \neg tmp \\
\end{array}
\]
with $s' > s$
then 
$D_1 + A$ can infer $h$, but $D_2 + A$ cannot, because $r'$ overrules $r$.
In either case, $L_2$ does not simulate $L_1$, despite the close similarity
of the two logics.
\end{example}
Thus if non-modular additions were permitted, only simulations that do not use 
auxiliary predicates and labels are possible,
and the notion of relative expressiveness would be useless.

In this section we investigate the relative expressiveness of $\DL(\pl)$.
We first show that 
every theory in $\DL(\partial)$ can be simulated in $\DL(\pl)$.
On the other hand, 
$\DL(\partial)$ cannot simulate $\DL(\pl)$.
In fact, there is a single defeasible theory
whose behaviour in $\DL(\pl)$ cannot be simulated in $\DL(\partial)$.
Thus $\DL(\partial)$ is less expressive than $\DL(\pl)$.

\begin{theorem}   \label{thm:pl_v_dl}
$\DL(\partial)$ is strictly less expressive than $\DL(\pl)$ when there are no additions.
More specifically, 
\begin{itemize}
\item
every defeasible theory in $\DL(\partial)$ can be simulated by a defeasible theory in $\DL(\pl)$
\item
there is a defeasible theory $D$ whose consequences in $\DL(\pl)$
cannot be expressed by any defeasible theory in $\DL(\partial)$
\end{itemize}
\end{theorem}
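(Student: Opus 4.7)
The plan is to address the two parts of the theorem separately, using a constructive simulation for the first direction and a single small witness for the second.

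For part~(i), I would fix a defeasible theory $D_1 = (F, R, >)$ in $\DL(\partial)$ and build a $\DL(\pl)$ theory $D_2$ that hard-codes the derivations of $D_1$. First I would compute the sets $\mathit{Str} = \{q : +\Delta q$ is a consequence of $D_1\}$ and $\mathit{Dfs} = \{q : +\partial q$ and $-\Delta q$ are both consequences of $D_1\}$, and set $D_2 = (\mathit{Str},\, \{r_q : {}\Rightarrow q \mid q \in \mathit{Dfs}\},\, \emptyset)$. The remaining work is a case analysis on each literal $q \in \Sigma(D_1)$ checking that the signed-literal consequences of $D_1$ in $\DL(\partial)$ and of $D_2$ in $\DL(\pl)$ coincide: if $q \in \mathit{Str}$ both sides produce $\{(+, q)\}$; if $q \in \mathit{Dfs}$ the empty-bodied defeasible rule in $D_2$ fires at both $+\lambda$ and $+\pl$ while $-\Delta q$ still holds, yielding $\{(+, q), (-, q)\}$ on both sides; and if $D_1$ derives $-\partial q$ together with $-\Delta q$, then $q \notin \mathit{Str} \cup \mathit{Dfs}$ and $R_2[q] = \emptyset$, so $-\Delta q$, $-\lambda q$, and $-\pl q$ all hold vacuously in $D_2$. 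A key observation is that consistency of $\partial$ forbids $q$ and $\non q$ from both lying in $\mathit{Dfs}$, so the two bodyless rules for opposite literals cannot coexist in $R_2$ and disrupt clause~(2.3) of $+\pl$ for one another.

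For part~(ii), I would exhibit the small theory $D = (\emptyset,\, \{r_1: {}\Rightarrow q,\; r_2: {}\Rightarrow \non q\},\, \emptyset)$ and compute its consequences in $\DL(\pl)$. Nothing lies in $P_\Delta$, so both rules satisfy (2.1) and (2.2) of $+\lambda$, yielding $+\lambda q$ and $+\lambda \non q$. Neither $+\pl q$ nor $+\pl \non q$ is derivable, because the opposing empty-bodied rule is not overridden at clause~(2.3). On the negative side, $-\Delta q$, $-\Delta \non q$, $-\pl q$, and $-\pl \non q$ all hold. Hence the signed-literal consequences on $\Sigma(D) = \{q, \non q\}$ are $\{(+, q), (-, q), (+, \non q), (-, \non q)\}$. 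I would then argue that no theory $D'$ in $\DL(\partial)$ can reproduce this pattern: producing $\{(+, q), (-, q)\}$ in $\DL(\partial)$ forces $+\partial q$ together with $-\Delta q$, and similarly for $\non q$; but then $+\partial q$ and $+\partial \non q$ both hold, and consistency of $\partial$ (as already noted in the paper) forces $+\Delta q$ and $+\Delta \non q$, contradicting $-\Delta q$ and $-\Delta \non q$.

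The main obstacle will be the meticulous case analysis in part~(i), especially confirming that clause~(2.3) of $+\pl$ in $D_2$ is satisfied whenever $+\partial q$ is defeasibly derivable in $D_1$; consistency of $\partial$ is precisely what rules out a conflicting rule for $\non q$ appearing in $R_2$, so the argument hinges on invoking that property at the right moment. Part~(ii) is then a one-line consistency argument once the witness and the interpretation of ``modulo tags'' as signed-literal agreement are fixed.
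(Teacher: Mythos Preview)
Both halves of your plan have genuine gaps.

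For part~(i), your construction $D_2$ only covers literals whose proof status in $D_1$ is fully decided: strictly provable ($\mathit{Str}$), defeasibly-but-not-strictly provable ($\mathit{Dfs}$), or provably-not-provable ($-\partial q$ with $-\Delta q$). But $\DL(\partial)$ also admits \emph{undecided} literals: with $q \rightarrow q$ neither $+\Delta q$ nor $-\Delta q$ is derivable; with $q \Rightarrow q$ we get $-\Delta q$ but neither $+\partial q$ nor $-\partial q$; with ${}\Rightarrow q;\ q \rightarrow q$ we get $+\partial q$ but neither $+\Delta q$ nor $-\Delta q$. In all three cases $q \notin \mathit{Str}\cup\mathit{Dfs}$, so your $D_2$ has no fact or rule for $q$ and produces $-\Delta q$ and $-\pl q$, which does not match $D_1$. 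The paper avoids this by classifying each literal into one of six outcome classes (three decided, three undecided) and assembling the simulating theory from per-literal sample theories that reproduce each class exactly; your three-case split is a proper subset of that classification.

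For part~(ii), your witness $\{{}\Rightarrow q,\ {}\Rightarrow \non q\}$ does not separate the two logics under the paper's notion of simulation. In $\DL(\pl)$ it yields $-\Delta q$, $-\Delta \non q$, $-\pl q$, $-\pl \non q$; the \emph{same} theory in $\DL(\partial)$ yields $-\Delta q$, $-\Delta \non q$, $-\partial q$, $-\partial \non q$, which matches perfectly. Your argument survives only because you count $+\lambda q$ as a consequence that $\DL(\partial)$ must reproduce, but $\lambda$ is an auxiliary tag and is not part of the consequences being compared (the paper's own outcome analysis tracks only $\Delta$ and the main defeasible tag). The actual separation, exploited by the paper's witness $\{{}\Rightarrow p,\ \neg p \rightarrow \neg p\}$, lies in clause~(2.2): $+\partial$ requires $-\Delta \non p$ to be \emph{proved}, whereas $+\pl$ only requires $+\Delta \non p$ to be \emph{absent} from $P_\Delta$. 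The strict loop leaves the $\Delta$-status of $\neg p$ undetermined, so $+\pl p$ goes through while no $\DL(\partial)$ theory can simultaneously deliver $+\partial p$, $-\Delta p$, and fail to deliver $-\Delta \neg p$.
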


The argument for the second part is based on the following defeasible theory $D$:
\[
\begin{array}{lrcl}
        &             & \Rightarrow & \phantom{\neg} p \\
        & \neg p  & \rightarrow &  \neg p \\
\end{array}
\]
with empty superiority relation.

$+\lambda p$ and $+\pl p$ are consequences of $D$, as is $-\Delta p$, while $-\Delta \neg p $ is not a consequence.  However, there is no defeasible theory $D'$ in which
$+\partial p$ and $-\Delta p$ are consequences but $-\Delta \neg p $ is not.
See the proof in \ref{app:relexp} for details.
The argument for the second part applies equally to logics $\DL(\partial^*)$, $\DL(\delta)$, $\DL(\delta^*)$
(defined in \cite{TOCL10})
because their inference conditions all have the structure that is used in the proof.
Essentially, this result arises from the fact that inference in  $\DL(\pl)$ uses, for (2.2), the condition $+\Delta \non q \notin P_\Delta$,
whereas the usual defeasible logics use $-\Delta \non q \in P_\Delta$.
Thus all these logics are not more expressive than $\DL(\pl)$, under any kind of addition.

This theorem suggests that defeasible theories in $\DL(\partial)$ (and other logics) could be transformed into theories of $\DL(\pl)$,
and then executed more scalably.
However, the proof does not provide such a transformation.
Furthermore, the overhead of such a transformation and the expansion in size of the theory
could negate the scalability advantages.
Nevertheless, this remains an avenue for future research.

Although $\DL(\pl)$ is able to simulate $\DL(\partial)$ when there are no additions,
it is unable to achieve a simulation when rules can be added.

\begin{theorem}   \label{thm:dl_v_pl}
$\DL(\pl)$ is not more expressive than $\DL(\partial)$
with respect to addition of rules.
\end{theorem}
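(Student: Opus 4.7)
The plan is to exhibit a single theory $D_0$ in $\DL(\partial)$ and a single modular rule-addition $A$ such that, no matter how a candidate simulator $D'$ in $\DL(\pl)$ is chosen, $D_0+A$ and $D'+A$ disagree on a positive consequence. The asymmetry I would exploit is at condition $(2.3.1)$: in $+\partial$ it can be discharged by $-\partial\alpha$, whereas in $+\pl$ one needs the strictly stronger $+\lambda\alpha\notin P_\lambda$, and the latter can be sabotaged by adding a single rule with head $\alpha$.

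I propose to take
\[
\begin{array}{lrcl}
r_1: & & \Rightarrow & q \\
r_2: & \alpha & \Rightarrow & \neg q \\
r_3: & & \Rightarrow & \alpha \\
r_4: & & \Rightarrow & \neg\alpha \\
\end{array}
\]
with empty superiority. A direct computation in $\DL(\partial)$ gives $-\partial\alpha$ and $-\partial\neg\alpha$ (the unordered empty-body rules $r_3,r_4$ leave each other's consequent not $+\partial$-derivable while themselves remaining applicable), and then $(2.3.1)$ of $+\partial$ dispatches $r_2$, yielding $+\partial q$. On $\Sigma(D_0)=\{q,\neg q,\alpha,\neg\alpha\}$ the only positive consequence is $q$, and $-\Delta$ holds for all four literals. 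Any simulating $D'$ must therefore have $\{q\}$ as its positive literals on $\Sigma(D_0)$; in particular $+\Delta\neg\alpha$ must not lie in its $\Delta$-closure, since otherwise $\neg\alpha$ would appear as a positive consequence of $D'$ absent from $D_0$.

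For any such $D'$, I pick fresh labels $s_1,s_2\notin\Lambda(D_0)\cup\Lambda(D')$ and set
\[
A := \{\, s_1:\alpha\Rightarrow\neg q,\ \ s_2:\;\Rightarrow\alpha\,\}
\]
with empty superiority. This $A$ is modular and rule-only. In $\DL(\partial)$, $s_2$ only joins $r_3$ against $r_4$ with no overriding priority, so $-\partial\alpha$ is preserved, $(2.3.1)$ continues to discharge both $r_2$ and $s_1$, and $+\partial q$ remains a consequence of $D_0+A$. In $\DL(\pl)$, the empty-body rule $s_2$ forces $+\lambda\alpha$ into the $\lambda$-closure of $D'+A$ -- condition $(2.2)$ of $+\lambda$ is intact because $A$ introduces no $\Delta$-rule for $\neg\alpha$ and $+\Delta\neg\alpha$ was already excluded from $D'$'s $\Delta$-closure. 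Hence $(2.3.1)$ of $+\pl$ fails on the antecedent of $s_1$, and $(2.3.2)$ fails too, because $s_1$ carries a fresh label and modularity forbids any priority $t>s_1$ with $t\in R_{sd}[q]$ from appearing in the superiority relation of $D'+A$. Since no $\Delta$-rule or fact for $q$ is added either, $q$ cannot be a positive consequence of $D'+A$ in $\DL(\pl)$, contradicting the simulation.

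The main delicate point will be the $\lambda$-closure bookkeeping under addition: I need to be precise that adding only non-$\Delta$ rules for $\alpha$ and $\neg q$ cannot promote $+\Delta\neg\alpha$ into the $\Delta$-closure of $D'+A$, and that no ingenious design of $D'$ (using auxiliary predicates, circular strict rules, or new facts) can simultaneously satisfy the simulation requirements on $\Sigma(D_0)$ while blocking $+\lambda\alpha$ once $A$ is added. The modularity clause $\Sigma(A)\cap\Sigma(D')\subseteq\Sigma(D_0)$ insulates $D'$'s auxiliary vocabulary from interference by $A$, and the forced $\Delta$-agreement on $\Sigma(D_0)$ prevents $D'$ from leveraging $\Delta$-tricks involving $\alpha$ or $\neg\alpha$; together these make the argument uniform over all candidate simulators $D'$.
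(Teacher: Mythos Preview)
Your argument is correct, and it takes a genuinely different route from the paper's proof.

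The paper starts from the \emph{empty} theory $D$ and uses two additions. The first, $A_1=\{r_1:\ \Rightarrow q\}$, is used to extract structural information about any putative simulator $D'$ (namely, that (2.3.1) or (2.3.2) must hold for every rule for $\neg q$ in $D'$). The second, $A_2=\{r_2:\ \Rightarrow q;\ s_2:\ \neg q\rightarrow\neg q\}$, is the one that breaks the simulation: the circular strict rule prevents $-\Delta\neg q$ in $\DL(\partial)$, so $+\partial q$ is \emph{not} a consequence of $D+A_2$; yet $+\Delta\neg q\notin P_\Delta$ still holds, and the analysis forces $+\pl q$ in $D'+A_2$. Thus the paper breaks the simulation by \emph{over}-derivation of $\pl$ relative to $\partial$, exploiting the asymmetry at clause (2.2).

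You instead exploit the asymmetry at clause (2.3.1). Your base theory $D_0$ already has $+\partial q$ via $-\partial\alpha$, and your single addition $A=\{s_1:\alpha\Rightarrow\neg q;\ s_2:\Rightarrow\alpha\}$ does two things at once: $s_2$ forces $+\lambda\alpha$ in any $D'+A$ (using only that $+\Delta\neg\alpha\notin P_\Delta$, which is pinned down by the empty-addition instance of the simulation and is preserved because $A$ contains no strict rules or facts), and the freshly labelled $s_1$ is a rule for $\neg q$ that cannot be handled by either (2.3.1) or (2.3.2). Hence $+\pl q$ fails in $D'+A$ while $+\partial q$ persists in $D_0+A$: the simulation is broken by \emph{under}-derivation.

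Your approach is arguably more direct: it needs only one addition and avoids the preliminary $A_1$-analysis of $D'$. The paper's approach, on the other hand, uses the very same mechanism (a strict self-loop on $\neg q$) that already drives Theorem~\ref{thm:pl_v_dl}'s second part, which gives a pleasing uniformity to the two expressiveness results. Both proofs rely essentially on the label-freshness clause of modularity to disable (2.3.2) for the injected attacking rule.
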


The question of whether
$\DL(\pl)$ can simulate $\DL(\partial)$ wrt addition of facts remains open.

\subsection{Relative Inference Strength}

We compare the inference strength of the new inference rules to
the rules of existing defeasible logics.
We write $d_1 \subseteq d_2$ if,
for every defeasible theory $T$ and literal $q$,
if $+d_1 q$ is inferred from $T$ then also $+d_2 q$ is inferred from $T$.
This expresses that $d_2$ has greater inference strength than $d_1$,
in the sense that any literal $d_1$ can infer can also be inferred by $d_2$.
We can also view this inclusion as saying that $d_1$ is an under-approximation of $d_2$,
or that $d_2$ is an over-approximation of $d_1$.
We write $d_1 \subset d_2$ (i.e., the inclusion is \emph{strict})
if $d_1 \subseteq d_2$ and there is a defeasible theory $T$ and literal $q$ such that
$+d_2 q$ is inferred from $T$ but $+d_1 q$ is not.

The relationship between the inference rules introduced in this paper 
and those of other defeasible logics is presented in Figure \ref{fig:inclusion2}.
(We follow the notation of \cite{GM17} for the inference rules $\supp_X$.)
The figure omits $\partial^* \subset \lambda$, 
which is difficult to include in such a two-dimensional representation.
Examples show that all the containments are strict, and no containments are missing.
The proof relies on results available in \ref{app:ris}.

\begin{theorem}   \label{thm:fig2}
The containments illustrated in Figure \ref{fig:inclusion2} hold and are strict.
In addition, $\partial^* \subset \lambda$ holds.
There are no other missing containments in the figure.
\end{theorem}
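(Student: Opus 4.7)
The plan is to split the statement into three separate obligations: (i) each arrow drawn in Figure~\ref{fig:inclusion2} denotes a genuine containment $d_1 \subseteq d_2$; (ii) the separate claim $\partial^* \subset \lambda$; and (iii) for every pair of tags not joined (even transitively) by an arrow in the figure, a separating defeasible theory exists. Most of (i) is already recorded in the cited literature and in \ref{app:ris}; the remaining work is to combine those lemmas with arguments about $\lambda$, $\pl$, and $\pl^*$, and then to construct the counterexamples.

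For each containment $d_1 \subseteq d_2$, I would proceed by induction on the length of a $d_1$-proof. The induction hypothesis gives that any $+d_1\alpha$ used as a premise is also a $+d_2$ consequence. Then one inspects the clauses of the two inference rules: if every condition triggering $+d_2 q$ is no stronger than the corresponding condition in $+d_1 q$ (after translating $+d_1$ premises into $+d_2$ premises via the induction hypothesis), the step is immediate. For the containments that cross the boundary between logics with negative proof-failure (like $\partial$) and the new logic using meta-level $\notin P_\Delta$ (like $\lambda$ and $\pl$), the key observation is consistency of $\Delta$: whenever $-\Delta \non q$ is a consequence, $+\Delta \non q \notin P_\Delta$ holds automatically, so $\partial$'s clause (2.2) implies $\lambda$'s (2.2). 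An entirely analogous matching of (2.1) and (2.3) yields the remaining containments between $\partial$-style and $\pl$-style rules.

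The isolated claim $\partial^* \subset \lambda$ is proved the same way. For the inductive step, if $+\partial^* q$ is obtained via clause (2), then some $r \in R_{sd}[q]$ has all body literals $+\partial^*$-provable; by induction they are $+\lambda$-provable, so $\lambda$'s (2.1) is satisfied. The premise $-\Delta \non q$ gives $\lambda$'s (2.2) through consistency of $\Delta$, and $\lambda$ imposes no further requirements. Strictness is witnessed by the theory used for Theorem~\ref{thm:pl_v_dl}, namely $\Rightarrow p$ together with $\neg p \rightarrow \neg p$: here $+\lambda p$ (and $+\pl p$) are derived, but since $-\Delta \neg p$ is not obtainable, $+\partial p$, $+\partial^* p$, $+\delta p$, and $+\delta^* p$ all fail, separating $\lambda$ and $\pl$ from the entire $\partial/\delta$-family.

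Strictness of the other containments and the absence of further containments are both settled by a finite catalogue of small defeasible theories. The standard ambiguity-propagation example (two competing defeasible rules whose bodies are themselves in conflict) separates $\partial$ from $\delta$ and their starred variants, as in \cite{TOCL10}. A variant of the Tweety theory with two rules for $q$ and two for $\non q$, without a dominating single rule but with team-defeat satisfied, separates team-defeat tags from individual-defeat tags. The $\supp_X$ tags are handled by the examples of \cite{GM17}. Assembling these separating theories pairwise across the diagram verifies both strictness of each drawn arrow and the non-containment of every pair not linked by a directed path. The main obstacle, and the bulk of the remaining labour, is this last bookkeeping step: one must enumerate every ordered pair of tags in the figure not related by a drawn path, and verify that one of the catalogued counterexamples actually separates that specific pair in the correct direction; a systematic case check, organised by which features (meta-level failure vs.\ provable failure, team vs.\ individual defeat, ambiguity blocking vs.\ propagating) distinguish the two tags, keeps this manageable.
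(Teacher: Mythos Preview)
Your outline follows the paper's own strategy quite closely: cite the literature for the containments among the lower-row tags, prove the new containments (top row, and $\partial,\partial^*\subset\lambda$) by induction on proofs using coherence of $\Delta$ (what you call ``consistency of $\Delta$'' is the coherence property: $-\Delta\non q$ implies $+\Delta\non q\notin P_\Delta$), and then supply separating theories for the non-containments.

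There is a genuine gap in your catalogue of separating theories, however. Your single example from Theorem~\ref{thm:pl_v_dl} only establishes one direction between the upper and lower rows, namely $\pl,\pl^*,\lambda\not\subseteq X$ for $X\in\{\delta^*,\delta,\partial^*,\partial\}$ (and with care, the $\sigma_X$). You still need the reverse direction, e.g.\ $\delta^*\not\subseteq\pl$, and none of the theories you list (ambiguity propagation, team-versus-individual defeat, the examples in \cite{GM17}) provide it: those are all internal to the lower rows and do not interact with $\lambda$. The paper handles this with a dedicated theory (Example~\ref{ex:delta*pl}) in which a literal $p$ is $+\lambda$ but $-\sigma_{\delta^*}$; this makes a rule with $p$ in its body block a $\pl$-conclusion while the same rule is discarded under the $\delta^*$ inference, so $+\delta^* q$ holds but $+\pl q$ does not. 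This is a feature-specific construction (the over-approximation of $\lambda$ relative to $-\sigma_{\delta^*}$) that your case analysis by ``meta-level failure vs.\ provable failure'' does not automatically produce.

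Similarly, you give no theory showing $\sigma_X\not\subseteq\lambda$. The examples in \cite{GM17} cannot help here, since $\lambda$ is new to this paper. The required example (the paper's Example~\ref{ex:sigmalambda}) exploits the fact that $+\lambda$'s clause~(2.2) blocks when $+\Delta\non q\in P_\Delta$, whereas the $+\sigma_X$ rules impose no such condition; a bodiless strict rule for $\non q$ together with a defeasible rule for $q$ achieves this. Without these two constructions your ``systematic case check'' cannot be completed, so the bookkeeping you flag as the main remaining labour is not just labour but requires two ideas you have not yet supplied.
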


\begin{figure*}[t]
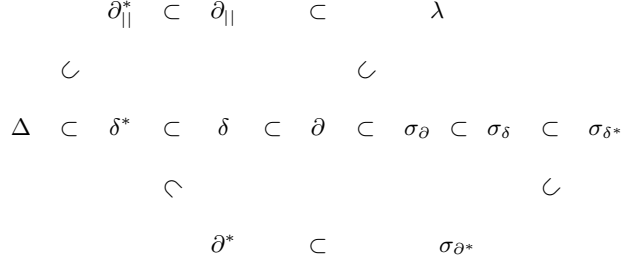

\[
\begin{array}{rccccccccccc}
             &                &  \pl ^*  & \subset & \pl & & \subset & &  \lambda~~~~~~~  \\
\\
             &  \nesubset &                    &   & & & & \nesubset \\
\\
\Delta  & \subset   &    \delta^*          &  \subset  & \delta  &  \subset &  \partial  &  \subset &  \supp_\partial ~~ \subset ~~ \supp_\delta &  \subset &  \supp_{\delta^*} \\
 \\
  & &                                                           &   \sesubset &         & &                &  &                              & \nesubset & \\
 \\
  & &                                                           &                     &      \partial^*  & & \subset & & \supp_{\partial^*}        &\\
\end{array}
\]

\caption{Ordering of defeasible logic inference rules by relative inference strength.
The only unrepresented relation is $\partial^* \subset \lambda$.}
\label{fig:inclusion2}
\end{figure*}

In general, relative inference strength provides an indication of how brave/cautious a logic is in making inferences.
The results show only that $\DL(\pl)$ is incomparable to existing logics.
Nevertheless, the containment $\pl^* \subset \pl$ is noteworthy,
since $\partial^*$ and $\partial$ are incomparable.

\section{Implementation}\label{sec:Implementation}

In this section, we present a generic approach for computing the new inference rules by building
on previous work. Moreover, we outline the implementation for a real-world case study.

\subsection{Import-Apply-Infer}\label{sec:Import-Apply-Infer}
An implementation of the new inference rules should 
first compute the $\Delta$ closure, subsequently the $\lambda$ closure and finally the $\pl$ closure.
It is evident that the $\Delta$ closure computation is conventional rule application,
starting from initial facts and repeatedly applying rules until no new conclusion is derived. Large-scale
closure computation utilizing parallel and distributed settings over big data posses unique challenges, with a wide range of challenges already addressed in the literature for various logics including 
Datalog~\cite{DBLP:journals/tplp/CondieDISYZ18}, $\mathcal{E \kern -0.2em L}^+$~\cite{DBLP:conf/esws/MutharajuHML15},
OWL Horst~\cite{DBLP:conf/bigcomp/KimP15} and RDFS~\cite{DBLP:conf/semweb/HeinoP12}.

\begin{figure}[t]
  \centering
	\includegraphics[width=4.5in]{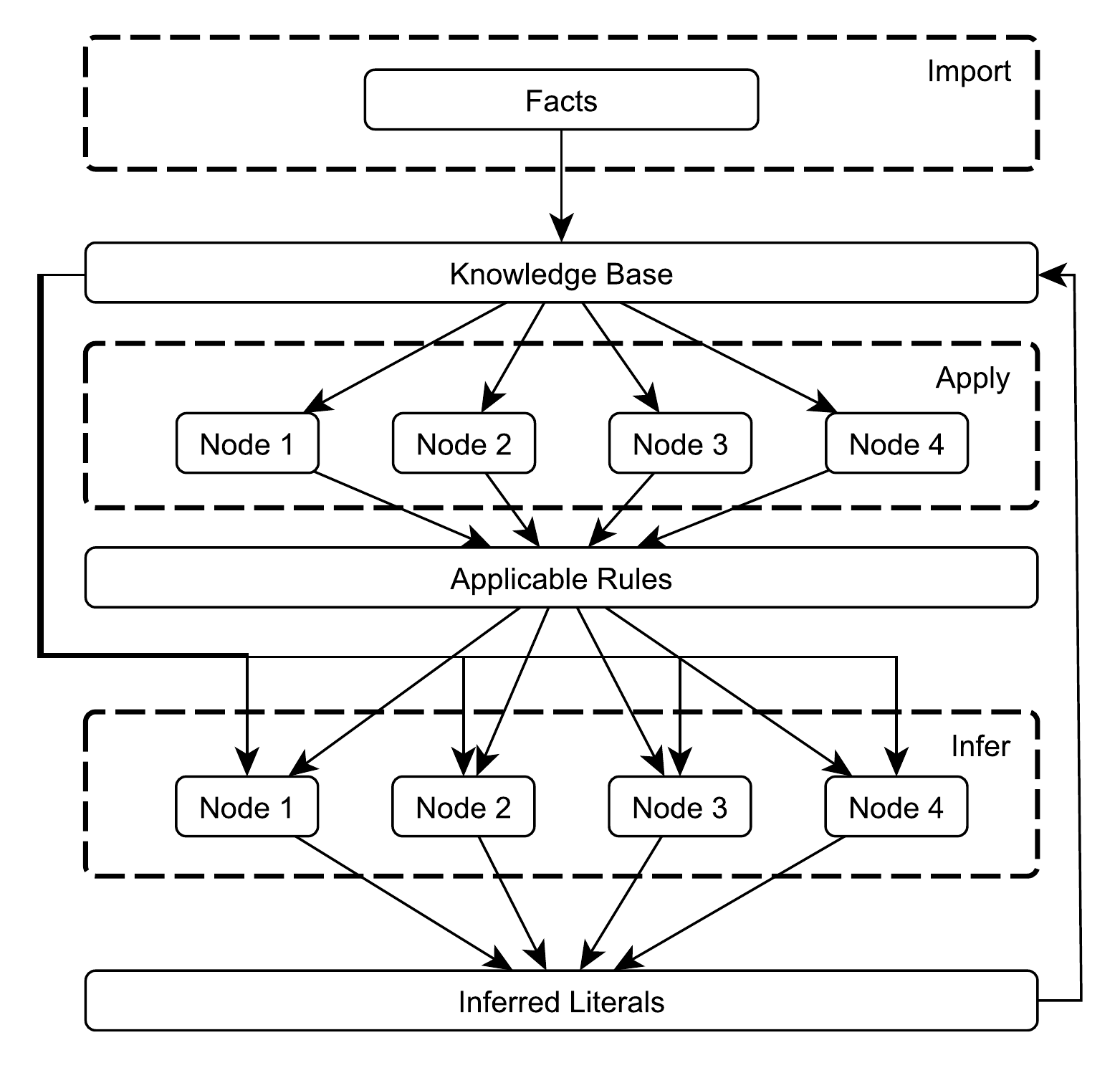}
  \caption{Parallelizing the Import-Apply-Infer model.}
  \label{fig:Import-Apply-Infer}
\end{figure}

For $\lambda$ and $\pl$ inference rules, we propose a three step method called \emph{import-apply-infer},
which can be parallelized as depicted in Figure~\ref{fig:Import-Apply-Infer}. 
Essentially, the first step (\emph{import}) reuses existing knowledge that could be considered as 
facts. Most parallel frameworks provide an efficient data transformation process, thus \emph{import}'s scalability
should be considered self evident.
The second step 
(\emph{apply}) computes all currently applicable rules based on already proved literals. Following
data partitioning, data is divided in chunks with each chunk assigned to a node (4 nodes in Figure~\ref{fig:Import-Apply-Infer}), thus
finding matching literals within each node (e.g., \emph{p(X,Z)} and \emph{q(Z,Y)} match on argument 
\emph{Z} for rule \emph{r*} in Section~\ref{sec:parallel_dr}). Notice that \emph{apply} follows the same 
rule application pattern as the first pass in Section~\ref{sec:parallel_dr}.
The third step (\emph{infer}) resolves existing conflicts (e.g., ``team defeat''),
thus proving and adding new literals to the knowledge base. Notice that \emph{infer} follows the same 
conflict resolution pattern as the second pass in Section~\ref{sec:parallel_dr}. 

Upon close inspection,
$\lambda$ and $\pl$ inference rules are variations (in terms of algorithmic computation) of inference 
rules presented in~\cite{DBLP:conf/ecai/TachmazidisAFKM12}. Thus, the scalability findings 
of~\cite{DBLP:conf/ecai/TachmazidisAFKM12} in terms of
a single computation of steps \emph{import}, \emph{apply} and \emph{infer} are applicable to this work as well.
Considering closure computation,
the \emph{import} step is computed once at the beginning of the process, while steps \emph{apply}
and \emph{infer} are computed repeatedly until no new conclusion is derived. Note that a generic implementation of a parallel reasoner is deferred to future work. 

For the $\lambda$ \emph{closure}, clause (1) of the $\lambda$ inference rule corresponds to the \emph{import} step where 
literals in $P_\Delta$ are treated as given facts, (2.1) of the inference rule corresponds to the 
\emph{apply} step as applicable rules are computed based on already proved $\lambda$ predicates, 
and (2.2) of the inference rule  corresponds to the \emph{infer} step, where a literal $q$ is proved only 
if $+\Delta \non q \notin P_{\Delta}$ (with $P_{\Delta}$ already pre-computed). In terms of scalability,
the \emph{import} step requires importing existing knowledge, which is as scalable as the system's data storage, 
the \emph{apply} step is as scalable as any rule application (including monotonic reasoning), and 
the \emph{infer} step is basic data filtering where knowledge for each literal (both $q$ and $\non q$) 
is processed in parallel by different nodes in the cluster. Note that for any given rule set the 
knowledge for a specific literal is significantly smaller than main memory capacity, while the
large number of literals ensures a high degree of parallelization and scalability.

For the $\pl$ \emph{closure}, clause (1) of the $\pl$ inference rule corresponds to the \emph{import} step where 
literals in $P_\Delta$ are treated as given facts, (2.1) of this inference rule  corresponds to the 
\emph{apply} step as applicable rules are computed based on already proved $\pl$ predicates, 
and clauses (2.2) and (2.3) correspond to the \emph{infer} step, where a literal $q$ 
is proved only if $+\Delta \non q \notin P_{\Delta}$ (with $P_{\Delta}$ already pre-computed), and
either $\exists \alpha \in A(s): +\lambda \alpha \notin P_{\lambda}$ (where $s \in R[\non q]$ 
and $P_{\lambda}$ is already pre-computed) or
$q$ overrides $\non q$ through ``team defeat''. In terms of scalability, $\pl$ \emph{closure}
follows a similar pattern as $\lambda$ \emph{closure} for all three steps. Note that although the \emph{infer} 
step for $\pl$ \emph{closure} requires more complex computations compared to the $\lambda$ \emph{closure},
the amount of processed data for each literal is still significantly smaller than main 
memory capacity, with the large number of literals ensuring a high degree of parallelization and scalability.

\subsection{Apache Spark}\label{sec:Apache_Spark}
We have used Spark\footnote{\url{https://spark.apache.org/}} in our implementations. The main reason is that
the platform is very well suited to parallel data processing in distributed environments. It is elastic in terms of both storage (through the use of HDFS) and computation, which is in contrast with the conventional data systems where each node has to be carefully tuned to its specifications~\cite{cheng2019scalable}. This makes Spark be able to greatly simplify the parallel programming of data applications. Namely, developers only need to focus on the design of high-level workflows and can ignore the underlying parallel executions. To handle the complex workflows in our implementation, we have applied Spark SQL~\cite{armbrust2015spark} in our data processing. Spark SQL is a module in Apache Spark that integrates relational processing with Spark's functional programming API.
Here, we briefly introduce the core abstract of Spark SQL's API - the \textit{DataFrame}.

A DataFrame in Spark SQL is a distributed collection of rows with the same schema. It can be seen as a table in a relational database while its data is distributed over all  computing nodes. A DataFrame can be manipulated and can also perform relational operations over data with existing Spark programs. Currently, DataFrames have supported all the common relational operators, such as projection, filter, join, and aggregations. Moreover, they also enable applications to run SQL queries programmatically and return the result as a DataFrame. Similar to the fundamental data structure of Spark (i.e., RDD), DataFrames are lazy. Namely, in the case that DataFrame object represents a logical plan to compute a dataset, no real parallel execution will occur until an output action such as \textit{save} is called. This mechanism enables Spark SQL to use   data structure information in order to perform rich optimization across all operations that were used to build the DataFrame~\cite{armbrust2015spark}, which is also the main reason why Spark SQL can provide a highly efficient execution solution for data applications.

\subsection{FDA Use Case with Spark}\label{sec:FDA_Use_Case_with_Spark}

The experimental evaluation is based on a FAERS (FDA Adverse Event Reporting System - US Food and Drug Administration) case study,
initially developed for RuleRS~\cite{DBLP:journals/ail/IslamG18}.
More details on this use case are given in Section~\ref{sec:Methodology}.
We have implemented the logic using Spark specifically for this use case,  and made our code, for the evaluated algorithms in this work, publicly 
available\footnote{\url{https://github.com/longcheng11/dReasoning}}. The approach reuses fundamental concepts of \cite{DBLP:conf/ecai/TachmazidisAFKM12}, but is more 
specific to the ruleset, implements the new logic, and uses Spark. Due to the nature of the FDA rule set, reasoning consists mainly of reporting: (a) obligation conclusions (applicable to all FDA cases) that follow from the given rule set, and (b) identified predicates 
for each FDA case. Note that other rule sets 
might require a more elaborate reasoning implementation. 

\begin{algorithm}[t]
\begin{algorithmic}[1]
\State \emph{FDA\_obl\_conclusions} = Seq(``Obligation conclusions'')
\State broadcast(\emph{FDA\_obl\_conclusions})
\State \emph{factsDF} = $\emptyset$
\For{\textbf{each} \emph{inputPath} in \emph{FDA\_Dataset}}
\State \emph{inputDF} = load(\emph{inputPath})
\State \emph{factsDF} += \emph{inputDF}.SQL\_Queries\_To\_Facts()
\EndFor
\State \emph{factsDF} = \emph{factsDF}.Group\_By\_Primary\_ID()
\State \emph{conclusions} = $\emptyset$
\For{\textbf{each} \emph{primaryID} in \emph{factsDF}}
\State \emph{conclusions} += reasoning(\emph{FDA\_obl\_conclusions}, \emph{factsDF}.getFacts(\emph{primaryID}))
\EndFor
\State \emph{conclusions}.count()
\end{algorithmic}
\caption{Spark implementation}
\label{Spark_implementation}
\end{algorithm}

The basic structure of the implementation is described in 
Algorithm~\ref{Spark_implementation}.
First, a set of obligation conclusions, that is a set of obligations
that need to be concluded for all FDA cases  (such as the obligation to report ``Patient age'', i.e., ``obl\_report\_Patient\_age\_to\_FDA''), is loaded in memory (line~1). Note that obligation conclusions are manually extracted from the FDA rule set. In order to allow each node in the cluster to perform reasoning independently by providing all required 
information on the given rule set, the set of obligation
conclusions needs to be broadcast (line~2). 
Prior to applying SQL queries, facts are initially set to an empty DataFrame (line~3). 
Subsequently, each input file in the extracted FDA dataset (lines~4-7) is loaded into a corresponding DataFrame (line~5), and 
SQL queries are executed (line~6), using Spark SQL, in order to extract predicates (facts) that will 
be used for reasoning. Spark SQL ensures parallel evaluation of given SQL queries, 
while the developer needs only to define the queries using the Spark SQL API. Prior to performing 
reasoning, generated facts are grouped based on their \emph{primaryid}, namely each FDA 
case is handled separately (line~8). Data grouping is performed in parallel by Spark.
Note that lines~3-8 should be considered as the \emph{import} step.

Considering the reasoning process itself, conclusions are initially set to an empty DataFrame (line~9), 
while reasoning over each \emph{primaryid} (in parallel) adds new conclusions (lines~10-12). Essentially,
each \emph{primaryid} is evaluated by a different node in the cluster, thus ensuring parallelism.
Note that lines 9-12 should be considered as steps \emph{apply} and \emph{infer}.
Finally, the number of conclusions is counted 
(line~13). Counting the number of final conclusions instead of materialising the
output allows a better focus on the runtime performance of a given implementation.
At the end of Algorithm~\ref{Spark_implementation},  final conclusions are stored in memory  and could be readily used for further processing if required.

\section{Experimental Results}\label{sec:Experimental_results}
In this section, we present the results of our experimental
evaluation on a commodity cluster. We conduct a quantitative
evaluation of our implementation.

\begin{table}[t]
\caption{Input details}
\label{tab:input_details}
\centering
\begin{tabular}{rrrrr}
	\hline\hline
	Copies & Size (GB) & Distinct cases & Rows & Facts \\ [0.5ex]
	\hline\hline
	1 & 3 & 5,285,699 & 43,791,158 & 96,925,980 \\ \hline
	
	3 & 9 & 15,857,097 & 131,373,474 & 290,777,940 \\ \hline

	6 & 18 & 31,714,194 & 262,746,948 & 581,555,880 \\ \hline

    12 & 36 & 63,428,388 & 525,493,896 & 1,163,111,760 \\ 
	\hline\hline	
\end{tabular}
\end{table}

\begin{table}[t]
\caption{Number of rows per file}
\label{tab:Number_of_rows_per_file}
\centering
\begin{tabular}{llllll}
	\hline\hline
	DEMO & DRUG & OUTC & REAC & RPSR & Whole \\ [0.5ex]
	\hline\hline
	5,285,792 & 19,087,015 & 3,649,558 & 15,525,084 & 243,709 & 43,791,158   \\ 
	\hline\hline	
\end{tabular}
\end{table}

\begin{table}[t]
\caption{Number of SQL queries}
\label{tab:number_of_SQL_queries}
\centering
\begin{tabular}{llllll}
	\hline\hline
	DEMO & DRUG & OUTC & REAC & RPSR & Whole \\ [0.5ex]
	\hline\hline
	13 & 5 & 1 & 1 & 1 & 21   \\ 
	\hline\hline	
\end{tabular}
\end{table}

\ignore{
\begin{table}[t]
\caption{Number of conclusions}
\label{tab:number_of_conclusions}
\centering
\begin{tabular}{lllllll}
	\hline\hline
	Copies & DEMO & DRUG & OUTC & REAC & RPSR & Whole \\ [0.5ex]
	\hline\hline
	1 & 310,390,445 & 291,485,006  & 150,875,577 & 280,142,047 & 10,781,366 & 335,354,933  \\ \hline
	
	3 & 931,171,335 & 874,455,018 & 452,626,731 & 840,426,141 & 32,344,098 & 1,006,064,799  \\ \hline

	6 & 1,862,342,670 & 1,748,910,036 & 905,253,462 & 1,680,852,282 & 64,688,196 & 2,012,129,598  \\ \hline

    12 & 3,724,665,360 & 3,497,800,984 & 1,810,491,448 & 3,361,686,226 & 129,376,392 & 4,024,237,482  \\ 
	\hline\hline	
\end{tabular}
\end{table}
}

\begin{table}[t]
\caption{Number of conclusions (in millions)}
\label{tab:number_of_conclusions}
\centering
\begin{tabular}{rrrrrrr}
	\hline\hline
	Copies & DEMO & DRUG & OUTC & REAC & RPSR & Whole \\ [0.5ex]
	\hline\hline
	1 & 310.390 & 291.485  & 150.876 & 280.142 & 10.781 & 335.355  \\ \hline
	
	3 & 931.171 & 874.455 & 452.627 & 840.426 & 32.344 & 1,006.065  \\ \hline

	6 & 1,862.343 & 1,748.910 & 905.253 & 1,680.852 & 64.688 & 2,012.130  \\ \hline

    12 & 3,724.665 & 3,497.801 & 1,810.491 & 3,361.686 & 129.376 & 4,024.237  \\ 
	\hline\hline	
\end{tabular}
\end{table}

\subsection{Methodology}\label{sec:Methodology}
The evaluation of our approach is based on the RuleRS~\cite{DBLP:journals/ail/IslamG18}
FAERS (FDA Adverse Event Reporting System - US Food and Drug Administration) 
case study. The FDA Adverse Event Reporting System (FAERS) is a database that contains 
adverse event reports, medication error reports and product quality complaints 
resulting in adverse events that were submitted to FDA. The database is designed 
to support the FDA's post-marketing safety surveillance program for drug and 
therapeutic biologic products\footnote{\url{https://www.fda.gov/drugs/surveillance/questions-and-answers-fdas-adverse-}\\\url{event-reporting-system-faers}}.

{\bf Dataset.} 
FAERS publishes quarterly data files\footnote{\url{https://fis.fda.gov/extensions/FPD-QDE-FAERS/FPD-QDE-FAERS.html}}, which include: 
\begin{itemize}
  \item \emph{DEMO}: Demographic and administrative information.
  \item \emph{DRUG}: Drug information from the case reports.
  \item \emph{OUTC}: Patient outcome information from the reports.
  \item \emph{REAC}: Reaction information from the reports.
  \item \emph{RPSR}: Information on the source of the reports.
\end{itemize}

Table~\ref{tab:input_details} describes the
details of the used input. The original dataset consists of data published 
between the third quarter of 2014 and the second quarter of 2018 (a total of four calendar years), which 
corresponds to 3GB of storage space, 5,285,699 distinct FDA cases (with
each case indicated by a unique \emph{primaryid}), 43,791,158 rows in the
consolidated CSV files (for more details see Table~\ref{tab:Number_of_rows_per_file}), 
and 96,925,980 generated facts by the SQL queries 
(when all SQL queries were applied). Note that details of applied SQL queries are 
described below.

The initial dataset allows reasoning over 97M facts, which would not highlight the
full potential of the proposed method. For scalability purposes, copies of the aforementioned 
dataset were generated by adjusting the \emph{primaryid} field, where 3, 6 and 12 copies 
correspond to 291M, 582M and 1.16 billion facts respectively. Note that the \emph{primaryid} field
is adjusted by appending a counter, namely for 3 copies the following input:
\begin{center}
	\begin{tabular}{ l l l }
		primaryid & caseid & rpsr\_cod \\
		100208273 & 10020827 & FGN
	\end{tabular}
\end{center}
would be transformed into:
\begin{center}
	\begin{tabular}{ l l l }
		primaryid & caseid & rpsr\_cod \\
		100208273{\bf1} & 10020827 & FGN \\
		100208273{\bf2} & 10020827 & FGN \\
		100208273{\bf3} & 10020827 & FGN				
	\end{tabular}
\end{center}
Note that the first step from 1 copy to 3 copies is counter-intuitive in terms of scalability
(not a power of two), however it still provides interpretable results while allowing
an evaluation of up to 1.16 billion facts (for 12 copies).

{\bf Rule set.} 
The rule set consist of rules that are manually converted from \emph{U.S. ELECTRONIC CODE OF FEDERAL 
REGULATIONS, Title 21: Food and Drugs, PART 310-NEW DRUGS, Subpart D-Records and Reports US Government 
(2014)}\footnote{\url{https://www.ecfr.gov/cgi-bin/text-idx?SID=7bf64fa0b8f5d9185244a769699c5e13&mc=true&node=se21.5.310_1305&rgn=div8}}.
As discussed in~\cite{DBLP:journals/ail/IslamG18}, the regulations: (a) specify  the records and reports concerning
adverse drug experiences on marketed prescription drugs for human use without
approved new drug applications, and (b) include reporting requirements for Manufacturers, Packers, and 
Distributors (MPD) and information reported on various life-threatening serious and unexpected 
adverse drug experience for Individual Case Safety Report (ICSR). 

Consider for example the provision (as part of informed on ICSRs) prescribing to report electronically 
to FDA as ICSRs to include ``Patient age'' while reporting to FDA. Its formal representation in Defeasible 
Deontic Logic is:
\begin{center}
	\begin{tabular}{ l }
        $r_1$ : \  [OAPNP]\:\emph{report\_on\_ICSRs\_to\_FDA}(X) \  $\Rightarrow$ \  [OAPNP]\:\emph{report\_Patient\_age\_to\_FDA}(X)
	\end{tabular}
\end{center}
where [OAPNP] is a deontic operator expressing obligation,
while using the defeasible logic as defined in this work we have:
\begin{center}
	\begin{tabular}{ l }
        $r_1$ : \emph{obl\_report\_on\_ICSRs\_to\_FDA}(X) \ $\Rightarrow$ \ \emph{obl\_report\_Patient\_age\_to\_FDA}(X)
	\end{tabular}
\end{center}
Note that obligation conclusions such as 
\emph{obl\_report\_Patient\_age\_to\_FDA(X)} in rule $r_1$ are loaded in memory and broadcast
to each node in the implementation in order to ensure parallelism (see lines 1-2 in Algorithm~\ref{Spark_implementation}).

{\bf SQL Queries.} 
In~\cite{DBLP:journals/ail/IslamG18} predicate extraction (facts generation) is performed through SQL queries 
over a PostgreSQL database while in our experiments such SQL queries are part of the implementation 
using Spark SQL. Each query represents a single predicate, which is eventually passed as an input 
parameter to the reasoner. Note that the reasoner is implemented using Spark, based on 
Algorithm~\ref{Spark_implementation}. 

The following query illustrates sample test predicates \emph{ICSRs\_contain\_Patient\_age}
in PostgreSQL:
\begin{verbatim}
SELECT primaryid,
CASE WHEN age IS NOT NULL THEN 'report_Patient_age_to_FDA'          
     ELSE '-report_Patient_age_to_FDA'
END
FROM DEMO14Q1
\end{verbatim}

Such query evaluation can be implemented using Spark SQL by loading in memory a
\linebreak
DataFrame, 
called \emph{demoDF}, containing records found in file \emph{DEMO} where attribute \emph{age}
must be defined in a given row. The aforementioned SQL query can be translated in Spark SQL
as follows (note that in this work only positive literals are relevant):
\begin{verbatim}
demoDF
 .where(demoDF.col("age").isNotNull && demoDF.col("age") =!= "")
 .select(demoDF.col("primaryid").as("argument_X"))
 .withColumn("predicate",lit("report_Patient_age_to_FDA"))
\end{verbatim}

The following input (note that file \emph{DEMO} contains more columns, which are 
not included below for readability purposes):
\begin{center}
	\begin{tabular}{ l l l l l }
		primaryid & caseid & ... & age & ... \\
		100051922 & 10005192 & ... & 21 & ...
	\end{tabular}
\end{center}
would be transformed into:
\begin{center}
	\begin{tabular}{ l l }
		argument\_X & predicate \\
		100051922 & report\_Patient\_age\_to\_FDA 	
	\end{tabular}
\end{center}
which essentially represents the fact \emph{report\_Patient\_age\_to\_FDA(100051922)}.

The number of executed queries for each file is included in  Table~\ref{tab:number_of_SQL_queries}, 
where it is clear that the majority of queries are executed over files \emph{DEMO} and \emph{DRUG}. 
Note that all SQL queries are included in our publicly available implementation 
(see Section~\ref{sec:FDA_Use_Case_with_Spark}).

All rules are transformed into the notation defined in this work with the reasoner being implemented 
in Spark specifically for this rule set.  Both fact extraction (Spark SQL) and reasoning (Spark) are implemented 
within a single job as described in Algorithm~\ref{Spark_implementation} (see Section~\ref{sec:FDA_Use_Case_with_Spark}).

\begin{figure}[t]
\centering
\subfloat{
\begin{tikzpicture}[node distance = 0cm, scale=0.8, transform shape]
\begin{axis}[xlabel=Millions of facts,ylabel= CPU time (s),legend pos= north west]
    \addplot+[color=red,mark=otimes, mark size=2,error bars/.cd, y dir=both,y explicit] coordinates {
	(97,123)
	(291,452)
	(582,1109)
	(1163,2390)
    };
    \addplot+[color=blue,mark=x, mark size=2,error bars/.cd, y dir=both,y explicit] coordinates {
	(97,92)			
	(291,247)
	(582,563)
	(1163,1572)
    };
    \addplot+[color=green,mark=diamond, mark size=2, error bars/.cd, y dir=both,y explicit] coordinates {
	(97,76)
	(291,203)			
	(582,412)
	(1163,796)
    };
    \addplot+[color=black,mark=square, mark size=2, error bars/.cd, y dir=both,y explicit] coordinates {
	(97,77)
	(291,129)			
	(582,231)
	(1163,572)
    };
    \addplot+[color=magenta,mark=triangle, mark size=2, error bars/.cd, y dir=both,y explicit] coordinates {
	(97,83)
	(291,111)			
	(582,169)
	(1163,344)
    };
    \legend{1 node, 2 nodes, 4 nodes, 8 nodes, 16 nodes}
    \end{axis}
\end{tikzpicture}
\label{fig:eval-whole-sub}
}
\caption{Time in seconds as a function of number of facts, for various numbers of nodes.}
\label{fig:eval-whole}
\end{figure}
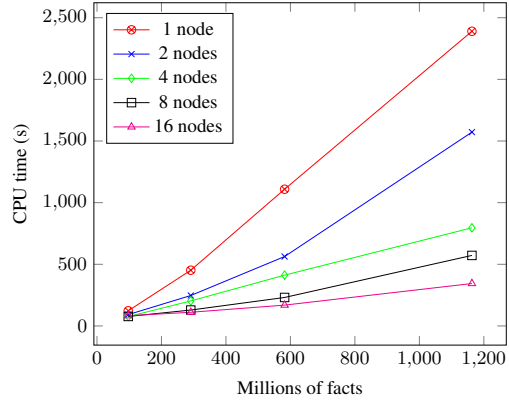

\begin{figure}[t]
\centering
\subfloat{
\begin{tikzpicture}[node distance = 0cm, scale=0.8, transform shape]
\begin{axis}[xtick=data, xticklabels={1, 2, 4, 8, 16}, xlabel=Number of nodes,ylabel= Speed-up,legend pos= north west, legend style={font=\tiny}]
    \addplot+[color=red,mark=otimes, mark size=2,error bars/.cd, y dir=both,y explicit] coordinates {
	(0,1)
	(1,1.336956522)
	(2,1.618421053)
	(3,1.597402597)
	(4,1.481927711)
    };
    \addplot+[color=blue,mark=x, mark size=2,error bars/.cd, y dir=both,y explicit] coordinates {
	(0,1)		
	(1,1.829959514)
	(2,2.226600985)
	(3,3.503875969)
	(4,4.072072072)
    };
    \addplot+[color=green,mark=diamond, mark size=2, error bars/.cd, y dir=both,y explicit] coordinates {
	(0,1)
	(1,1.969804618)			
	(2,2.691747573)
	(3,4.800865801)
	(4,6.562130178)
	};
    \addplot+[color=black,mark=square, mark size=2, error bars/.cd, y dir=both,y explicit] coordinates {
	(0,1)
	(1,1.520356234)			
	(2,3.002512563)
	(3,4.178321678)
	(4,6.947674419)
    };
    \legend{97M, 291M, 582M, 1163M}
    \end{axis}
\end{tikzpicture}
\label{fig:speed-up}
}
\subfloat{
\begin{tikzpicture}[node distance = 0cm, scale=0.8, transform shape]
\begin{axis}[xtick=data, xticklabels={1, 2, 4, 8, 16}, xlabel=Number of nodes,ylabel=Scaled speed-up,legend pos= north east, legend style={font=\tiny}]    
	\addplot+[color=red,mark=otimes, mark size=2,error bars/.cd, y dir=both,y explicit] coordinates {
	(0,1)
	(1,0.668478261)
	(2,0.404605263)
	(3,0.199675325)
	(4,0.092620482)
    };
    \addplot+[color=blue,mark=x, mark size=2,error bars/.cd, y dir=both,y explicit] coordinates {
	(0,1)		
	(1,0.914979757)
	(2,0.556650246)
	(3,0.437984496)
	(4,0.254504505)
    };    
    \addplot+[color=green,mark=diamond, mark size=2, error bars/.cd, y dir=both,y explicit] coordinates {
	(0,1)
	(1,0.984902309)			
	(2,0.672936893)
	(3,0.600108225)
	(4,0.410133136)
	};
    \addplot+[color=black,mark=square, mark size=2, error bars/.cd, y dir=both,y explicit] coordinates {
	(0,1)
	(1,0.760178117)			
	(2,0.750628141)
	(3,0.52229021)
	(4,0.434229651)
    };
    \legend{97M, 291M, 582M, 1163M}
    \end{axis}
\end{tikzpicture}
\label{fig:scaled_speed-up}
}
\caption{Speed-up and scaled speed-up as a function of numbers of nodes, for various numbers of facts.}
\label{fig:speed-ups}
\end{figure}
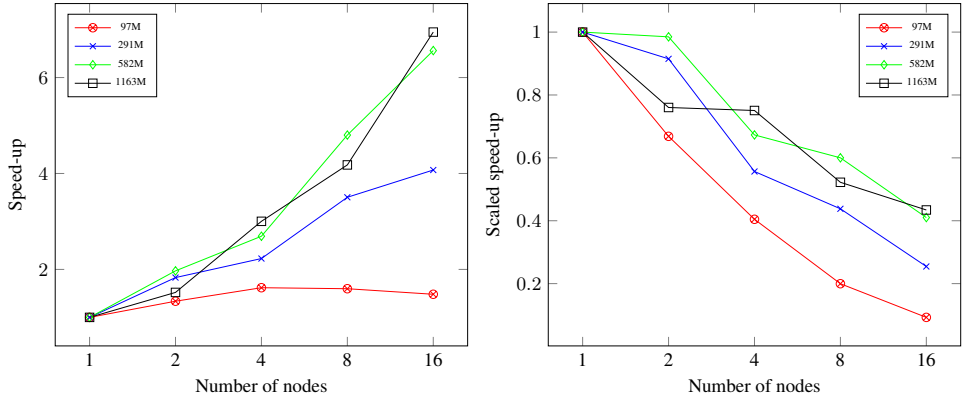

\subsection{Platform}\label{sec:Platform}
Our evaluation platform is the Kay supercomputer located at Irish
Centre for High-End Computing (ICHEC). We choose up to 17 nodes from
the system, and each node we have used contains a 20-core Intel Xeon
Gold 6148 (Skylake) processor running at 2.4GHz with 192GB of RAM and a
single 400GB SSD local disk. The operating system is Linux kernel
version 3.10.0-693 and the software stack consists of Spark version
2.3.1, Hadoop version 2.7.3, Scala version 2.11.8 and Java version
1.8.0\_191.

For Spark, we set the following system parameters:
\textit{spark\_worker\_memory} and \linebreak \textit{spark\_executor\_memory}
are set to 160GB and \textit{spark\_worker\_cores} is to 20. In all
our experiments, the operations of input file reading are on the HDFS
system using the SSD on each node.  We measure runtime as the elapsed
time from job submission to the job being reported as finished and we
record the mean value based on three measurements. 

\subsection{Results}\label{sec:Results}

Table~\ref{tab:number_of_conclusions} provides insight in terms of reasoning.
Specifically, it is evident that queries for \emph{DEMO}, \emph{DRUG} and 
\emph{REAC} are generating comparable numbers of conclusions, \emph{OUTC}
generates approximately half compared to the aforementioned queries, while \emph{RPSR}
generates only a fraction of conclusions. The number of conclusions for each
set of queries is a function of the number of rows in the corresponding file
(see Table~\ref{tab:Number_of_rows_per_file}) and the number of executed queries 
(see Table~\ref{tab:number_of_SQL_queries}). However, providing the exact function that
would allow an accurate prediction of the number of conclusions, based on the number of rows and
executed queries, is out of the scope of this work.

\begin{figure}[t]
\centering
\subfloat{
\begin{tikzpicture}[node distance = 0cm, scale=0.8, transform shape]
\begin{axis}[xtick=data, xticklabels={DEMO, DRUG, OUTC, REAC, RPSR}, xlabel=Queries for 97M facts,ylabel= CPU time (s),legend pos= north east, legend style={font=\tiny}]
    \addplot+[color=red,mark=otimes, mark size=2,error bars/.cd, y dir=both,y explicit] coordinates {
	(0,95)
	(1,77)
	(2,33)
	(3,51)
	(4,21)
    };
    \addplot+[color=blue,mark=x, mark size=2,error bars/.cd, y dir=both,y explicit] coordinates {
	(0,74)		
	(1,53)
	(2,29)
	(3,41)
	(4,17)
    };
    \addplot+[color=green,mark=diamond, mark size=2, error bars/.cd, y dir=both,y explicit] coordinates {
	(0,54)
	(1,39)			
	(2,22)
	(3,27)
	(4,11)
	};
    \addplot+[color=black,mark=square, mark size=2, error bars/.cd, y dir=both,y explicit] coordinates {
	(0,57)
	(1,34)			
	(2,19)
	(3,25)
	(4,9)
    };
    \addplot+[color=magenta,mark=triangle, mark size=2, error bars/.cd, y dir=both,y explicit] coordinates {
	(0,59)
	(1,30)			
	(2,14)
	(3,19)
	(4,11)	
    };
    \legend{1 node, 2 nodes, 4 nodes, 8 nodes, 16 nodes}
    \end{axis}
\end{tikzpicture}
\label{fig:eval-97M-sub}
}
\subfloat{
\begin{tikzpicture}[node distance = 0cm, scale=0.8, transform shape]
\begin{axis}[xtick=data, xticklabels={DEMO, DRUG, OUTC, REAC, RPSR}, xlabel=Queries for 291M facts,ylabel= CPU time (s),legend pos= north east, legend style={font=\tiny}]
    \addplot+[color=red,mark=otimes, mark size=2,error bars/.cd, y dir=both,y explicit] coordinates {
	(0,391)
	(1,301)
	(2,90)
	(3,203)
	(4,22)
    };
    \addplot+[color=blue,mark=x, mark size=2,error bars/.cd, y dir=both,y explicit] coordinates {
	(0,182)	
	(1,168)
	(2,65)
	(3,119)
	(4,18)
    };
    \addplot+[color=green,mark=diamond, mark size=2, error bars/.cd, y dir=both,y explicit] coordinates {
	(0,125)
	(1,109)			
	(2,55)
	(3,98)
	(4,13)
	};
    \addplot+[color=black,mark=square, mark size=2, error bars/.cd, y dir=both,y explicit] coordinates {
	(0,100)
	(1,82)			
	(2,35)
	(3,87)
	(4,11)
    };
    \addplot+[color=magenta,mark=triangle, mark size=2, error bars/.cd, y dir=both,y explicit] coordinates {
	(0,88)
	(1,47)			
	(2,27)
	(3,37)
	(4,9)	
    };
    \legend{1 node, 2 nodes, 4 nodes, 8 nodes, 16 nodes}
    \end{axis}
\end{tikzpicture}
\label{fig:eval-291M-sub}
}\\
\subfloat{
\begin{tikzpicture}[node distance = 0cm, scale=0.8, transform shape]
\begin{axis}[xtick=data, xticklabels={DEMO, DRUG, OUTC, REAC, RPSR}, xlabel=Queries for 582M facts,ylabel= CPU time (s),legend pos= north east, legend style={font=\tiny}]
    \addplot+[color=red,mark=otimes, mark size=2,error bars/.cd, y dir=both,y explicit] coordinates {
	(0,827)
	(1,773)
	(2,231)
	(3,521)
	(4,26)
    };
    \addplot+[color=blue,mark=x, mark size=2,error bars/.cd, y dir=both,y explicit] coordinates {
	(0,443)
	(1,392)
	(2,124)
	(3,297)
	(4,20)
    };
    \addplot+[color=green,mark=diamond, mark size=2, error bars/.cd, y dir=both,y explicit] coordinates {
	(0,347)
	(1,303)			
	(2,121)
	(3,241)
	(4,16)
	};
    \addplot+[color=black,mark=square, mark size=2, error bars/.cd, y dir=both,y explicit] coordinates {
	(0,205)
	(1,163)			
	(2,67)
	(3,127)
	(4,12)
    };
    \addplot+[color=magenta,mark=triangle, mark size=2, error bars/.cd, y dir=both,y explicit] coordinates {
	(0,146)
	(1,87)			
	(2,40)
	(3,76)
	(4,10)	
    };
    \legend{1 node, 2 nodes, 4 nodes, 8 nodes, 16 nodes}
    \end{axis}
\end{tikzpicture}
\label{fig:eval-582M-sub}
}
\subfloat{
\begin{tikzpicture}[node distance = 0cm, scale=0.8, transform shape]
\begin{axis}[xtick=data, xticklabels={DEMO, DRUG, OUTC, REAC, RPSR}, xlabel=Queries for 1163M facts,ylabel= CPU time (s),legend pos= north east, legend style={font=\tiny}]
    \addplot+[color=red,mark=otimes, mark size=2,error bars/.cd, y dir=both,y explicit] coordinates {
	(0,1809)
	(1,1552)
	(2,587)
	(3,1351)
	(4,34)
    };
    \addplot+[color=blue,mark=x, mark size=2,error bars/.cd, y dir=both,y explicit] coordinates {
	(0,954)	
	(1,835)
	(2,333)
	(3,814)
	(4,28)
    };
    \addplot+[color=green,mark=diamond, mark size=2, error bars/.cd, y dir=both,y explicit] coordinates {
	(0,576)
	(1,523)			
	(2,201)
	(3,448)
	(4,18)
	};
    \addplot+[color=black,mark=square, mark size=2, error bars/.cd, y dir=both,y explicit] coordinates {
	(0,492)
	(1,431)			
	(2,160)
	(3,371)
	(4,18)
    };
    \addplot+[color=magenta,mark=triangle, mark size=2, error bars/.cd, y dir=both,y explicit] coordinates {
	(0,300)
	(1,216)			
	(2,88)
	(3,173)
	(4,15)	
    };
    \legend{1 node, 2 nodes, 4 nodes, 8 nodes, 16 nodes}
    \end{axis}
\end{tikzpicture}
\label{fig:eval-1163M-sub}
}
\caption{Time in seconds as a function of queries, for various numbers of facts and nodes.}
\label{fig:eval-queries}
\end{figure}
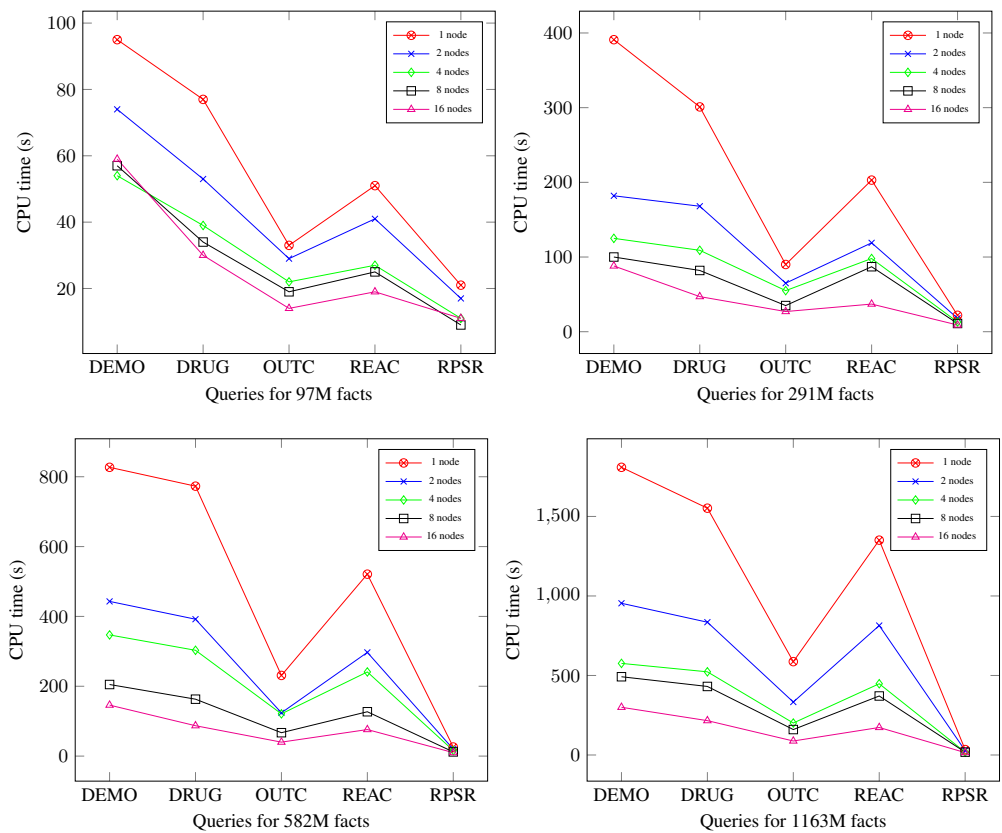

Figure~\ref{fig:eval-whole} shows the scalability results of the implementation
for increasing number of facts. The implementation follows a fairly linear scalability
up to 1.16 billion facts when the number of nodes ranges from 1 to 16. From a 
practical point of view, the initial dataset (four calendar years) can be processed
with 16 nodes in 83 seconds, while 12 copies (corresponding to almost half a century)
can be processed in 5 minutes and 44 seconds. For comparison, the initial dataset
would require more than 5 days with RuleRS~\cite{DBLP:journals/ail/IslamG18}, while 
auditing the 12 copies with RuleRS would require approximately 2 months.
Even though the proposed approach in this work cannot be directly compared to RuleRS, given the fact that RuleRS is based on a serial implementation, our results show a significant scalability advantage of the proposed inference rules.

Figure~\ref{fig:speed-ups} depicts speed-ups and scaled speed-ups\footnote{Speed-up is
calculated as: $\frac{{runtime}_{1node}}{{runtime}_{Nnodes}}$, while scaled speed-up is
calculated as: $\frac{{runtime}_{1node}}{N~*~{runtime}_{Nnodes}}$, where ${runtime}_{1node}$ 
is the required run time for one node, $N$ is the number of nodes and ${runtime}_{Nnodes}$ is 
the required run time for N nodes.} 
for increasing number
of nodes, for various number of facts. It is evident that 97M of facts is relatively small 
input in order to show the benefits of parallelization, this is attributed to the fact
that the majority of time is dedicated to reading the input. On the other hand, larger 
inputs highlight the advantages of the distributed implementation. However, the speed-ups
are sub-linear regardless of the number of facts or nodes. 
Nonetheless, the results are encouraging in terms of a proof of concept.

Figure~\ref{fig:eval-queries} presents the required time in order to execute each set
of queries separately (including reasoning over generated facts). The required time in
declining order is as follows: \emph{DEMO}, \emph{DRUG}, \emph{REAC}, \emph{OUTC} and 
\emph{RPSR}. This is consistent with Tables~\ref{tab:Number_of_rows_per_file} 
and~\ref{tab:number_of_conclusions} since larger files require more time to be read, while
more conclusions mean both longer reasoning time and more generated facts from the 
executed queries. Finally, once the input is large enough, there is a clear trend where adding
more nodes leads to faster runtimes.

\section{Conclusion and Future Work}\label{sec:CONCLUSION}
In this paper, we introduced a scalable defeasible logic that allows reasoning over large amounts of data. In particular, we proposed new inference rules for defeasible reasoning, discussed the theoretical properties of the new defeasible logic and ran experiments over an FDA case study (with rules encoding FDA regulations over publicly
available FDA datasets). Our experimental results indicate that this method can be applied to billions of facts.

In future work, we plan to develop a generic implementation of a parallel reasoner over the logic we propose
in this work. In addition, we plan to study how the proposed inference rules can be extended further in order to model more complex constructs while retaining scalability. In particular, a potential direction could be the introduction of a scalable Defeasible Deontic Logic as an alternative to the one presented in~\cite{DBLP:journals/jphil/GovernatoriORS13}. 
Another direction could be the extension of the proposed defeasible logic in this work to the BOID (Belief, Obligation, Intention, Desire) architecture~\cite{DBLP:journals/aamas/GovernatoriR08}. Such approaches would facilitate reasoning in the legal context, thus providing scalable solutions for processing large amounts of legal documents.

\section*{Acknowledgments}
We thank the referees for their comments, which helped improve this paper.

\bibliography{Scalable_Defeasible_Logic}

\appendix

\newcounter{temp}

\section{Relative Inference Strength}  \label{app:ris}


\begin{proposition}  \label{prop:contain}
\ignore{
$\Delta \subset \pl \subset \lambda$
}
$\Delta \subset \pl^* \subset \pl \subset \lambda$
\end{proposition}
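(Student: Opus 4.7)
My plan is to establish the chain $\Delta \subset \pl^* \subset \pl \subset \lambda$ by proving each $\subseteq$ inclusion directly from the inference rules, and then exhibiting one small defeasible theory per position in the chain to show that each inclusion is strict.

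For the inclusions, I would argue as follows. First, $\Delta \subseteq \pl^*$ is immediate: if $+\Delta q \in P_\Delta$, then clause (1) of the $+\pl^*$ inference rule makes $+\pl^* q$ derivable. Next, $\pl^* \subseteq \pl$ is proved by induction on the length of a $\pl^*$-proof; each step that uses clause (1) carries over unchanged, and each step that uses (2) selects some rule $r$ at (2.1) with $r > s$ for every $s \in R[\non q]$ not already blocked by (2.3.1). Taking $t := r$ witnesses the team-defeat clause (2.3.2) of $\pl$, with the premises of $r$ being $\pl$-provable by the induction hypothesis. Finally, $\pl \subseteq \lambda$ is also by induction: from a $\pl$-derivation of $+\pl q$, applying the hypothesis converts every $+\pl \alpha$ among the premises of the rule used at (2.1) into $+\lambda \alpha$, and condition (2.2) is the same in both rules; since $\lambda$ has no (2.3), this is exactly what is required to derive $+\lambda q$.

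For strictness, I would exhibit three separating theories. To show $\Delta \subsetneq \pl^*$, the one-rule theory $\Rightarrow p$ satisfies $+\pl^* p$ but not $+\Delta p$. To show $\pl^* \subsetneq \pl$, I use the classical team-defeat example with rules $r_1, r_2: \Rightarrow p$ and $s_1, s_2: \Rightarrow \neg p$ together with the priorities $r_1 > s_1$ and $r_2 > s_2$: team defeat yields $+\pl p$, but no single rule for $p$ beats both opposing rules, so $+\pl^* p$ fails. To show $\pl \subsetneq \lambda$, I take $r: \Rightarrow p$ and $s: \Rightarrow \neg p$ with no priority: both $+\lambda p$ and $+\lambda \neg p$ hold since $+\Delta \non q \notin P_\Delta$ for each side, but (2.3) of $\pl$ cannot be satisfied for either literal because $A(s)$ and $A(r)$ are empty (blocking (2.3.1)) and the priority relation is empty (blocking (2.3.2)).

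The main subtlety, and the one step I would write out most carefully, is the induction for $\pl^* \subseteq \pl$: I must verify that using $t := r$ is legitimate uniformly across the quantification over $s \in R[\non q]$, since $\pl^*$ commits to a single $r$ that beats every competing $s$, while $\pl$ allows the choice of $t$ to depend on $s$. This is precisely what makes individual defeat stronger than team defeat, and it is also what drives the separation witnessed by the $r_1, r_2, s_1, s_2$ example above.
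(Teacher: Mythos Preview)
Your proposal is correct and follows essentially the same approach as the paper: the inclusions are argued directly from the inference rules (clause (1) for $\Delta \subseteq \pl^*$; taking $t := r$ to see that the $\pl^*$ clause (2.3.2) implies the $\pl$ clause; and dropping (2.3) to get $\pl \subseteq \lambda$), and the three separating theories you give are exactly the ones the paper uses (with different rule labels in the team-defeat example). Your write-up is somewhat more explicit about the inductive structure than the paper's terse version, but there is no substantive difference.
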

\skipit{
\begin{proof}
\ignore{
The first containment follows immediately from (1) of the $\pl$ inference rule.
The $\lambda$ inference rule is essentially the $\pl$ inference rule with condition (2.3) omitted.
The second containment then follows.
}
The first containment follows immediately from (1) of the $\pl^*$ inference rule.
The only difference between $\pl^*$ and $\pl$ is in (2.3.2),
and the clause for $\pl^*$ implies the clause for $\pl$.
The second containment follows.
The $\lambda$ inference rule is essentially the $\pl$ inference rule with condition (2.3) omitted.
The third containment then follows.

Strictness is shown with straightforward examples.
Strictness of the first containment is shown by $D$ consisting only of $\Rightarrow p$.
Strictness of the second containment is shown by the standard example distinguishing team and individual defeat:
$D$ consists  of:
\[
\begin{array}{lrcl}
r_1:        &             & \Rightarrow & \phantom{\neg} p \\
r_2:        &             & \Rightarrow & \phantom{\neg} p \\
r_3:        &             & \Rightarrow &  \neg p \\
r_4:        &             & \Rightarrow &  \neg p \\
\end{array}
\]
with $r_1 > r_3$ and $r_2 > r_4$.
Then we can conclude $+\pl p$ but not $+\pl^* p$.

Example \ref{ex:lambda-consistent} shows the strictness of the third containment
since $+\lambda q$ is proved but $+\pl q$ cannot be proved.
\end{proof}
}

It is straightforward to see that $\lambda$ is not consistent.

\begin{example}      \label{ex:lambda-consistent}
Consider the defeasible theory
\[
\begin{array}{lrcl}
r: &           & \Rightarrow & \phantom{\neg} q \\
s: &           & \Rightarrow & \neg q \\
\end{array}
\]
with empty superiority relation.

Then we can infer $+\lambda q$ and $+\lambda \neg q$,
but cannot infer $\PD{q}$ nor $\PD{\neg q}$.
Thus $\lambda$ is not consistent.
Furthermore, we cannot infer
$+\pl q$ nor $+\pl \neg q$.
\end{example}


\begin{proposition}   
The inference rule $+\pl$ is consistent.
\end{proposition}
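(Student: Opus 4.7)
The plan is to show that if $+\pl q$ and $+\pl \neg q$ are both consequences of $D$, then $+\Delta q$ and $+\Delta \neg q$ must both lie in $P_\Delta$. I would proceed by a case analysis on which clause of the $+\pl$ inference rule justifies each of the two consequences.

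First I would dispose of the easy cases. If both $+\pl q$ and $+\pl \neg q$ are derived by clause (1), then $+\Delta q, +\Delta \neg q \in P_\Delta$ directly, and consistency holds. If one (say $+\pl q$) is derived by (1) and the other by (2), then clause (1) gives $+\Delta q \in P_\Delta$, while clause (2.2) applied to $+\pl \neg q$ requires $+\Delta q \notin P_\Delta$, an immediate contradiction; so this case cannot occur.

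The main case is when both $+\pl q$ and $+\pl \neg q$ are derived through clause (2). Here I would construct an infinite ascending chain in the superiority relation. Starting from a rule $r' \in R_{sd}[\neg q]$ witnessing (2.1) for $+\pl \neg q$, its antecedents are $+\pl$ provable, hence by the containment $\pl \subseteq \lambda$ (Proposition \ref{prop:contain}) they are also $+\lambda$ provable, so clause (2.3.1) of the rule for $+\pl q$ fails at $s = r'$. Thus (2.3.2) gives some $t_1 \in R_{sd}[q]$ with $t_1 > r'$ whose antecedents are all $+\pl$ provable. Applying the same argument to $t_1$ using the $+\pl \neg q$ derivation yields $t_2 \in R_{sd}[\neg q]$ with $t_2 > t_1$, and so on. Iterating gives an infinite chain $r' < t_1 < t_2 < \cdots$ in $>$. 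Because $R$ is finite, some rule must repeat, producing a cycle in $>$ and contradicting its acyclicity.

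The hard part will be formulating the chain-construction step cleanly: one needs to invoke $\pl \subseteq \lambda$ at each inductive step to rule out (2.3.1), and then explain why the finiteness of $R$ combined with acyclicity of $>$ forbids the resulting infinite strictly ascending chain. Once this is done, the three subcases exhaust the possibilities for simultaneously deriving $+\pl q$ and $+\pl \neg q$, establishing consistency of $+\pl$.
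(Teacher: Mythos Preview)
Your proposal is correct and follows essentially the same approach as the paper's proof: a case analysis reducing to the situation where neither $+\Delta q$ nor $+\Delta \neg q$ lies in $P_\Delta$, followed by the construction of an infinite strictly ascending chain in $>$ (using $\pl \subseteq \lambda$ from Proposition~\ref{prop:contain} to defeat clause (2.3.1) at each step), which contradicts the finiteness of $R$ and the acyclicity of $>$. The only cosmetic difference is that the paper organizes the case split around membership of $+\Delta q$ and $+\Delta \neg q$ in $P_\Delta$ rather than around which clause of the $+\pl$ rule is invoked, but the two decompositions are equivalent.
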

\skipit{
\begin{proof}
Suppose, for some defeasible theory $D$, and some proposition $q$,
that $+\pl q$ and $+\pl \neg q$ are consequences of $D$.

If $\PD{\neg q} \in P_\Delta$ but $\PD{q} \notin P_\Delta$ then,
when attempting to prove $\PD{q}$,
neither (1) nor (2.2) of the $\pl$ inference rule hold
and, thus, $\PD{q}$ cannot be proved.
This contradicts our original supposition, so this case cannot occur.
Similarly, the case where $\PD{q} \in P_\Delta$ but $\PD{\neg q} \notin P_\Delta$ cannot occur.

In the third case, neither $\PD{q}$ nor $\PD{\neg q}$ are consequences.
Since $+\pl q$ is a consequence and (1) does not hold,
(2.1) of the $\pl$ inference rule must hold for some rule $r$ for $q$.
Symmetrically, there is a rule $s$ for $\neg q$ such that (2.1) holds.
Consequently, by Proposition \ref{prop:contain}, for each $\alpha \in A(s)$, $+\lambda \alpha \in P_\lambda$.
Hence, to infer $+\pl q$, there must be a rule $t$ for $q$ with $t > s$ and
for each $\beta \in A(t)$, $+\pl \beta$ is provable and thus $+\lambda \beta \in P_\lambda$.
But then, to infer $+\pl \neg q$, there must be a rule $t'$ with $t' > t$ and
for each $\gamma \in A(t)$, $+\pl \gamma$ is provable.
And so on.
This creates an infinite chain of rules, each superior to the previous rule.
No rule can be repeated, since $>$ is acyclic.
However, the chain cannot be infinite, since the set of rules is finite.
This contradiction shows that this case cannot occur.
 
Thus, by exclusion, both $\PD{q}$ and $\PD{\neg q}$ are consequences,
and the result is proved.
\end{proof}
}

It follows immediately from Propositions~\ref{prop:consistent} and \ref{prop:contain}
that $DL(\pl^*)$ also is consistent.

\begin{corollary}
The inference rule $\pl^*$ is consistent.
\end{corollary}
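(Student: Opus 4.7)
The plan is to derive this corollary directly from the two results it cites, using only the definition of consistency and the containment $\pl^* \subseteq \pl$. Recall that an inference rule $d$ is consistent when, for every defeasible theory $D$ and every proposition $q$, whenever both $+d\,q$ and $+d\,\neg q$ are consequences of $D$, then both $\PD{q}$ and $\PD{\neg q}$ are also consequences.

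I would proceed by direct implication. Fix an arbitrary defeasible theory $D$ and a proposition $q$, and suppose $+\pl^*\,q$ and $+\pl^*\,\neg q$ are both consequences of $D$. By Proposition~\ref{prop:contain} we have $\pl^* \subseteq \pl$, so $+\pl\,q$ and $+\pl\,\neg q$ are also consequences of $D$. Proposition~\ref{prop:consistent} now yields $\PD{q}$ and $\PD{\neg q}$ as consequences, which is exactly what consistency of $\pl^*$ requires. Since $D$ and $q$ were arbitrary, $\pl^*$ is consistent.

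There is no real obstacle here; the only thing to check is that the definition of consistency is preserved under taking an under-approximation of a consistent tag, which is essentially what the argument above says. In slogan form, any tag whose inferences are a subset of those of a consistent tag is itself consistent, because the witnessing $\Delta$-consequences required by consistency continue to exist. Because Proposition~\ref{prop:contain} gives the containment for arbitrary theories, no parameters interact with the definition and the proof is a one-line appeal to monotonicity of the consistency property under $\subseteq$.
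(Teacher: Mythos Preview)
Your argument is correct and matches the paper's own justification exactly: the corollary is stated as following immediately from Propositions~\ref{prop:consistent} and~\ref{prop:contain}, i.e., from $\pl^* \subseteq \pl$ together with the consistency of $\pl$. There is nothing to add.
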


The next two examples show that $\partial $ and $\pl$ are incomparable in inference strength.

\begin{example}    \label{ex:plpartial}
Consider the defeasible theory
\[
\begin{array}{lrcl}
r: &           & \Rightarrow & \phantom{\neg} q \\
s: & \neg q & \rightarrow & \neg q \\
\end{array}
\]
with $r > s$.

Then $\PD{\neg q}$ cannot be inferred, and so $\ppd{q}$ is inferred.
On the other hand, $\MD{\neg q}$ also cannot be inferred, and so $\pd{q}$
cannot be inferred.
Consequently, $\pl \not\subseteq \partial$.

This comes about because of the different treatments of opposing strict inferences
in the two inference rules.
\end{example}

\begin{example}      \label{ex:partialpl}
Consider the defeasible theory
\[
\begin{array}{lrcl}
r: &           & \Rightarrow & \phantom{\neg} q \\
s: &           & \Rightarrow & \neg q \\
t: &           & \Rightarrow & \phantom{\neg} p \\
u: &     q    & \Rightarrow & \neg p \\
\end{array}
\]
with no superiority relation.

Then we can infer $-\partial q$ and $+\lambda q$.
Consequently, we can infer $+\partial p$, but not $+\pl p$.
Hence, $\partial \not\subseteq \pl$.

This comes about because the inference rules for $+\pl$ and $+\partial$ differ at (2.3.1):
$+\pl$ requires $+\lambda \alpha \notin P_\lambda$ while $+\partial$ requires $-\partial \alpha \in P(1..i)$.
\end{example}

\begin{proposition}   \label{prop:partiallambda}
$\partial \subset \lambda$
and
$\partial^* \subset \lambda$
\end{proposition}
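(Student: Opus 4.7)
The plan is to prove both containments by a single structural induction on proof length, exploiting the fact that the $+\lambda$ inference rule is obtained from $+\partial$ (resp.\ $+\partial^*$) essentially by dropping clause (2.3) and by reading the provability of $\non q$ meta-theoretically rather than via the $-\Delta$ tag.

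First, I would show $\partial \subseteq \lambda$ by induction on the position $i+1$ at which $+\partial q$ appears in a $\DL(\partial)$-proof $P$. If $+\partial q$ is justified by clause (1), then $+\Delta q$ is derivable, and so $+\Delta q \in P_\Delta$; this immediately yields $+\lambda q$ by clause (1) of the $+\lambda$ rule. If $+\partial q$ is justified by clause (2), I would translate the three sub-clauses one by one. Clause (2.1) of $+\partial$ gives a rule $r \in R_{sd}[q]$ with $+\partial a \in P[1..i]$ for every $a \in A(r)$; by the inductive hypothesis, $+\lambda a$ is derivable for each such $a$, yielding clause (2.1) of $+\lambda$. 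Clause (2.2) of $+\partial$ gives $-\Delta \non q \in P[1..i]$; by the consistency of $\DL(\Delta)$ (the $+\Delta/-\Delta$ pair are strong negations of each other), this entails $+\Delta \non q \notin P_\Delta$, which is exactly clause (2.2) of $+\lambda$. Clause (2.3) of $+\partial$ is simply discarded, since $+\lambda$ has no analogue. Thus $+\lambda q$ can be appended to a $\lambda$-proof, completing the induction.

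Second, for $\partial^* \subseteq \lambda$, the argument is identical: the inference rules for $+\partial^*$ and $+\partial$ differ only in the team-versus-individual defeat clause (2.3.2), and both versions of clause (2.3) are absent from $+\lambda$. So the same translation from clauses (1), (2.1), and (2.2) carries over verbatim, using the same appeal to consistency of $\Delta$.

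Finally, for strictness in both cases, I would reuse Example~\ref{ex:lambda-consistent}: the theory with $r:\Rightarrow q$, $s:\Rightarrow \neg q$, and empty superiority relation. There $+\lambda q$ is derivable (clauses (2.1) and (2.2) both hold, since $+\Delta \neg q \notin P_\Delta$), but neither $+\partial q$ nor $+\partial^* q$ is derivable, because the opposing rule $s$ is not overridden by any rule and so clause (2.3) fails. This single example witnesses strictness of both containments. The only mildly subtle step is the one flagged above, namely the appeal to consistency of the $\Delta$ fragment to pass from ``$-\Delta \non q$ is derivable'' to ``$+\Delta \non q \notin P_\Delta$''; everything else is clause-by-clause bookkeeping, and no new machinery is required beyond what is already established in Section~\ref{sec:defeasible_logic}.
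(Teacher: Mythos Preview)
Your proposal is correct and follows essentially the same route as the paper's proof: both argue that $+\lambda$ arises from $+\partial$ (resp.\ $+\partial^*$) by dropping clause~(2.3) and weakening clause~(2.2), and both invoke Example~\ref{ex:lambda-consistent} for strictness. One terminological correction: the property you invoke at clause~(2.2)---that $-\Delta\non q$ derivable implies $+\Delta\non q\notin P_\Delta$---is \emph{coherence} in the sense of \cite{TOCL10}, not \emph{consistency} as defined in Section~\ref{sec:defeasible_logic}; your parenthetical makes clear you have the right property in mind, but the paper's own proof names it correctly.
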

\skipit{
\begin{proof}
The $\lambda$ inference rule has no condition (2.3),
and replaces the condition for $\MD{\non q} \in P$ for $\partial$ with $\PD{\non q} \notin P$.
By the coherence of defeasible logics \cite{TOCL10},
$\PD{\non q} \notin P$ is a weaker condition.
Hence every inference that $+\partial$ can make can be duplicated by $+\lambda$.
The result then follows.

The same argument applies to show that $\partial^* \subseteq \lambda$.

Strictness in both cases is straightforward,
using the same example and argument 
(Example~\ref{ex:lambda-consistent}) as in the proof of Proposition \ref{prop:contain}
for the strictness of the third containment.
\end{proof}
}

\ignore{ now obsolete
\begin{proposition} 
$\pl \not\subseteq \partial$ and $\pl \not\subseteq \partial^*$
$\pl^* \not\subseteq \partial$ and $\pl^* \not\subseteq \partial^*$

\end{proposition}
\skipit{
\begin{proof}
Consider the theory $D$ from the proof of Theorem \ref{thm:pl_v_dl}, consisting of 
\[
\begin{array}{lrcl}
        &             & \Rightarrow & \phantom{\neg} p \\
        & \neg p  & \rightarrow &  \neg p \\
\end{array}
\]
Then $+\lambda p$, $+\pl p$, and $+\pl^* p$ are consequences of $D$, as is $-\Delta p$.
On the other hand,  $-\Delta \neg p $ is not a consequence and hence $+\partial p$ is not a consequence.
Thus $\pl \not\subseteq \partial$.
By the same argument, $\pl \not\subseteq \partial^*$.
\end{proof}
}
}

We establish the lack of any additional containments in
Figure~\ref{fig:inclusion2}
using the following three examples.

\begin{example}    \label{ex:lambdasigma}
Consider the defeasible theory
\[
\begin{array}{lrcl}
r: &           & \Rightarrow & \phantom{\neg} q \\
s: & \neg q & \rightarrow & \neg q \\
\end{array}
\]
with $s > r$.

Then we can infer $+\lambda q$ and $+\pl^* q$, but not $+\sigma_{\delta^*} q$.
This arises because the inference rule for $+\lambda$ and $+\pl^*$ requires only that $\PD{\non q}$ is not inferred,
while the inference rule for $+\sigma_{\delta^*}$ must establish $-\delta^* \non q$.
In this case, $-\delta^* \non q$ cannot be inferred.
Thus $\pl^* \not\subseteq \sigma_{\delta^*}$.

It then follows that
$\pl^* \not\subseteq X$, for any $X$ considered in \cite{GM17}
(since $X \subseteq \sigma_{\delta^*}$ \cite{TOCL10}),
and also $\pl \not\subseteq X$ and $\lambda \not\subseteq X$
(since $\pl^* \subseteq \pl$).
\end{example}

We use $\sigma_X$ to denote any of the support inference rules
$\sigma_{\delta^*}$, $\sigma_{\delta}$, $\sigma_{\partial^*}$, and $\sigma_\partial$.


\begin{example}      \label{ex:sigmalambda}
Consider the defeasible theory
\[
\begin{array}{lrcl}
r: &           & \Rightarrow & \phantom{\neg} q \\
s: &           & \rightarrow & \neg q \\
\end{array}
\]
with no superiority relation.

Then we can infer $+\sigma_X q$ but not $+\lambda q$.
Thus $\sigma_X \not\subseteq \lambda$.
This comes about because the inference rules for $+\sigma_X$
ignores the possibility of strict inference of $\non q$, 
while the inference rule for $+\lambda$ does not.

Hence $\sigma_X \not\subseteq \lambda$, for any $X$.
\end{example}

\begin{example}      \label{ex:delta*pl}
Consider the defeasible theory
\[
\begin{array}{lrcl}
r: &           & \Rightarrow & \phantom{\neg} p \\
s: &           & \Rightarrow & \neg p \\
   &           & \Rightarrow & \phantom{\neg} q \\
   &     p    & \Rightarrow & \neg q \\
\end{array}
\]
with $s > r$.

Then we can infer $+\lambda p$, and hence cannot infer $+\pl q$.
On the other hand, we can infer $-\sigma_{\delta^*} p$, since $s > r$,
and hence we can infer $+\delta^* q$.
Thus $\delta^* \not\subseteq \pl$.

Because $\pl^* \subseteq \pl$ and $\delta^* \subseteq X$ for every $X$ discussed in \cite{GM17} except $\Delta$,
we can conclude that neither $\pl$ nor $\pl^*$ contains any $X$ discussed in \cite{GM17} except $\Delta$.
\end{example}


\begin{theorem}   
The containments illustrated in Figure \ref{fig:inclusion2} hold and are strict.
In addition, $\partial^* \subset \lambda$ holds.
There are no other missing containments in the figure.
\end{theorem}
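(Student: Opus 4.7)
The plan is to decompose Theorem~\ref{thm:fig2} into three subclaims: (i) every displayed containment holds; (ii) each is strict; and (iii) no further containment among the tags in the figure is missing beyond those implied by transitive closure of the displayed ones together with $\partial^* \subset \lambda$. Claim (i) I would handle by cases. All containments confined to the bottom portion of Figure~\ref{fig:inclusion2} (the classical tags $\Delta, \delta^*, \delta, \partial, \partial^*$ together with the support tags $\sigma_\partial, \sigma_\delta, \sigma_{\delta^*}, \sigma_{\partial^*}$) are already established in \cite{TOCL10} and \cite{GM17}, so it suffices to cite those results. The containments involving the new tags are then covered: $\Delta \subset \pl^* \subset \pl \subset \lambda$ is Proposition~\ref{prop:contain}, while $\partial \subset \lambda$ and $\partial^* \subset \lambda$ are Proposition~\ref{prop:partiallambda}. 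Any remaining diagonal arrow in the figure follows by transitivity, e.g.\ the ascending edge entering $\lambda$ from the bottom row factors through $\partial$.

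For (ii), each strict containment admits a witnessing defeasible theory either already compiled in this appendix or in the cited works. Example~\ref{ex:lambda-consistent} witnesses strictness of $\pl \subset \lambda$; the four-rule ``team defeat vs.\ individual defeat'' theory used inside Proposition~\ref{prop:contain} witnesses $\pl^* \subset \pl$; strictness of $\Delta \subset \pl^*$ uses the single rule $\Rightarrow p$; and the standard witnesses from \cite{TOCL10, GM17} handle the strictnesses on the bottom row. Each witness transports unchanged when reinterpreted in the extended logic, because the set of tags it exercises is preserved.

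For (iii) the hard work is bookkeeping. The strategy is to leverage the five counterexample theories already given, namely Examples~\ref{ex:plpartial}, \ref{ex:partialpl}, \ref{ex:lambdasigma}, \ref{ex:sigmalambda} and~\ref{ex:delta*pl}, and propagate each by the transitive closure of the inclusions from (i). Concretely, Example~\ref{ex:plpartial} gives $\pl \not\subseteq \partial$ and hence $\pl \not\subseteq X$ for every $X$ below $\partial$; Example~\ref{ex:partialpl} gives $\partial \not\subseteq \pl$, ruling out the reverse direction into $\pl$ and (by containment) into $\pl^*$; Example~\ref{ex:lambdasigma} gives $\pl^* \not\subseteq \sigma_{\delta^*}$, which by $\pl^* \subset \pl \subset \lambda$ and $\sigma_X \subseteq \sigma_{\delta^*}$ rules out containment of any top-row tag into any support tag; Example~\ref{ex:sigmalambda} gives $\sigma_X \not\subseteq \lambda$ for every support $X$, cutting off the opposite direction; and Example~\ref{ex:delta*pl} gives $\delta^* \not\subseteq \pl$, propagating downward through the bottom-row chain to kill $\delta \not\subseteq \pl$, $\partial \not\subseteq \pl$, and the analogous statements for $\pl^*$. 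The main obstacle is the careful enumeration: one must check every unordered pair $(d_1, d_2)$ of tags in the figure for which neither $d_1 \subseteq d_2$ nor $d_2 \subseteq d_1$ is displayed, verify that one of the five examples (possibly after transitive propagation) refutes the appropriate direction, and confirm that no already-proved inclusion in (i) accidentally forces a missing edge. I expect this bookkeeping to be the most labor-intensive step, although conceptually routine once the five examples are in hand.
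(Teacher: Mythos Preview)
Your proposal is correct and follows essentially the same approach as the paper's proof: the same three-way decomposition, the same citations of Propositions~\ref{prop:contain} and~\ref{prop:partiallambda} and of \cite{TOCL10,GM17} for part~(i), and the same reliance on Examples~\ref{ex:lambdasigma}, \ref{ex:sigmalambda}, and~\ref{ex:delta*pl} (propagated by transitivity) for part~(iii). Your invocation of Examples~\ref{ex:plpartial} and~\ref{ex:partialpl} is harmless but redundant, since Example~\ref{ex:lambdasigma} already yields $\pl^* \not\subseteq \sigma_{\delta^*}$ (hence $\pl, \pl^*, \lambda \not\subseteq X$ for every lower-row $X$) and Example~\ref{ex:delta*pl} already yields $\delta^* \not\subseteq \pl$ (hence $X \not\subseteq \pl, \pl^*$ for every lower-row $X \neq \Delta$); the paper uses only these three examples for part~(iii) and leaves the strictness witnesses in part~(ii) to the reader, whereas you spell them out.
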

\skipit{
\begin{proof}
The containments on the top row of the diagram are established in Proposition \ref{prop:contain}.
The containments on and between the lower two rows are established in \cite{TOCL10,GM17},
including their strictness and the lack of any other containments among them.
The containments $\partial^* \subset \lambda$ and $\partial \subset \lambda$
are established in Proposition \ref{prop:partiallambda}.

Example \ref{ex:lambdasigma} shows that no tag in the lower rows contains a tag in the upper row.
Furthermore, Example \ref{ex:delta*pl} shows that $\pl^*$ and $\pl$ do not contain any tag on the lower rows, except for $\Delta$.
and Example \ref{ex:sigmalambda} shows that $\lambda$ does not contain any of the $\sigma_X$ tags.
Examples showing that containments are strict are straightforward and left to the reader.
\end{proof}
}

\section{Complexity}  \label{app:complexity}

In this appendix we prove results on the complexity of $\DL(\pl)$.

As a result of the structure of the inference rules
it is straightforward to compute the consequences of $\Delta$ and $\lambda$ efficiently.
\begin{lemma}  
The $\Delta$ and $\lambda$ closures, $P_\Delta$ and $P_\lambda$, of a propositional defeasible theory can be computed in linear time.
\end{lemma}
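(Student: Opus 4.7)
The plan is to adapt the classical linear-time forward-chaining algorithm of Dowling and Gallier for Horn satisfiability, as already used for $\DL(\partial)$ in \cite{linear}. Both the $+\Delta$ and $+\lambda$ inference rules are, up to a single side condition, pure monotonic forward-chaining rules: a head is added to the closure as soon as all antecedents of some applicable rule are present. The only extra ingredient for $+\lambda$ is the check $+\Delta \non q \notin P_{\Delta}$, which by then is a fixed set.

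First I would compute $P_{\Delta}$. For each strict rule $r$, maintain a counter $c(r)$ initialised to $|A(r)|$; for each literal $a$, maintain the list $\mathrm{occ}_s(a)$ of strict rules that contain $a$ in their body. Setting up these indices takes time proportional to $\sum_{r \in R_s}(|A(r)|+1)$, which is $O(|D|)$. Initialise a worklist with the facts in $F$ and the set $P_{\Delta} := F$. While the worklist is non-empty, dequeue a literal $a$, and for every $r \in \mathrm{occ}_s(a)$ decrement $c(r)$; when $c(r)$ reaches $0$ and the head $q$ of $r$ is not already in $P_{\Delta}$, insert $q$ and enqueue it. Each antecedent occurrence triggers at most one decrement, and each literal is enqueued at most once, so the total work is $O(|D|)$. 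Correctness follows because the algorithm adds exactly those literals that can be obtained by successive applications of the $+\Delta$ inference rule.

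Next, I would compute $P_{\lambda}$ by essentially the same procedure over $R_{sd}$, starting from the literals in $P_{\Delta}$ (clause (1) of $+\lambda$) and using fresh counters and occurrence lists indexed by $R_{sd}$. When a counter for a rule with head $q$ reaches $0$, I add $q$ to $P_{\lambda}$ only if $+\Delta \non q \notin P_{\Delta}$; this test is an $O(1)$ hash lookup into the already-computed $P_{\Delta}$. Apart from this constant-time guard, the analysis is identical to the $\Delta$ phase, so the $\lambda$ closure is likewise produced in $O(|D|)$ time.

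The only delicate point, and thus the main obstacle to watch for, is ensuring that the side condition $+\Delta \non q \notin P_{\Delta}$ does not force us to revisit rules as $P_{\lambda}$ grows. This is avoided because $P_{\Delta}$ is frozen before the $\lambda$ phase starts, so the check is monotone in the workflow and needs to be performed at most once per literal. Combining the two phases gives the claimed linear bound on computing $P_{\Delta}$ and $P_{\lambda}$.
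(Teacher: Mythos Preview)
Your proposal is correct and follows essentially the same approach as the paper: both reduce $+\Delta$ to Dowling--Gallier-style unit propagation over the strict rules, and then repeat the procedure for $+\lambda$ over $R_{sd}$ with the single constant-time side-check $+\Delta\non q\notin P_{\Delta}$ against the already-frozen $P_{\Delta}$. Your write-up is in fact more explicit about the counter/worklist bookkeeping than the paper's sketch, but the underlying idea is identical.
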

\skipit{
\begin{proof}
(Sketch)
The inference rule for $+\Delta$ is already treated in \cite{linear}.
Alternatively, this inference rule is essentially treating strict rules as definite clauses,
where negative literals ($\neg p$) are considered as atoms (e.g. $not\_p$).
Such inference can be done in time linear in the size of facts and strict rules
\cite{DowlingGallier}.

Similarly, the inference rule for $+\lambda$ essentially treats strict and defeasible rules as definite clauses,
with an extra condition about $\Delta$ consequences.
Once the $\Delta$ consequences have been computed,
it takes constant extra time for each rule to check the extra condition.
Consequently, the inference of $+\lambda$ consequences takes time linear in the size
of facts, strict rules and defeasible rules.

Similarly, $-\Delta$ and $-\lambda$ consequences (see \ref{app:relexp})
are also computed in linear time
(although this information is not necessary for the results in this appendix).
\end{proof}
}

The inference problem for propositional $\DL(\partial)$ has linear complexity \cite{linear},
and we use the same techniques to show that
the inference problem for propositional $\DL(\pl)$ also has linear complexity.


\begin{theorem}   
The set of all consequences of a propositional defeasible theory can be computed in time linear in the size of the defeasible theory.
Consequently, 
the inference problem for propositional $\DL(\pl)$ can be solved in linear time.
\end{theorem}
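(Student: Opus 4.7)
The plan is to build on the preceding lemma, which gives $P_\Delta$ and $P_\lambda$ in linear time, and reduce the task to computing the closure $P_\pl$ in linear time by adapting the worklist/counter technique used in \cite{linear} for $\DL(\partial)$. Once all three closures are materialised, the inference problem is a constant-time lookup, giving the second sentence of the theorem for free.

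First, I would do a one-pass preprocessing to materialise the lookups used by the $+\pl$ inference rule. For each literal $q$ record the bit $[+\Delta \non q \in P_\Delta]$, and for each rule $s$ record $\mathrm{disc}(s) = [\exists \alpha \in A(s): +\lambda \alpha \notin P_\lambda]$, which is exactly the condition of clause (2.3.1) for the opposer $s$. Both tests cost $O(|D|)$ since each rule is scanned once. I would also build, for each literal $q$, the incidence list of rules with $q$ in the body, and for each rule $t$ the list $\{s : t > s\}$; each list has total size bounded by $|D|$.

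Next, the main loop maintains for each rule $r$ a body counter $bc(r)$, initially $|A(r)|$, and for each literal $q$ a defeat counter $opp(q)$ initialised to the number of rules $s \in R[\non q]$ with $\mathrm{disc}(s)$ false. A rule $r$ \emph{fires} when $bc(r)$ hits $0$. On firing a rule $r$ with head $q$ I would: (a) mark $q$ as having a fired supporter if not done yet, and (b) scan each $s$ with $r > s$ and $C(s) = \non q$, and for each such $s$ that is not already flagged as defeated, set a per-pair "defeated" bit and decrement $opp(q)$. Whenever $q$ has a fired supporter, $+\Delta \non q \notin P_\Delta$, and $opp(q) = 0$, I add $+\pl q$ to $P_\pl$, push it onto the worklist, and walk the incidence list to decrement $bc$ on all rules containing $q$ in their body.

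The standard amortised analysis shows linearity: each rule fires at most once, every body-literal occurrence participates in $O(1)$ counter decrements, and each superiority pair $(t,s)$ is processed at most once when $t$ fires; the per-pair "defeated" bit keeps the processing idempotent. Summed over the theory this is $O(|D|)$, and combined with the preceding lemma the entire closure is computed in linear time. The main obstacle I anticipate is the bookkeeping around team defeat: the defeat bit must be kept per pair $(q,s)$ with $s \in R[\non q]$ so that a rule discarded via (2.3.1) is not double-decremented when it is subsequently overruled via (2.3.2), and the superiority indexing must be arranged so that firing $t$ costs $O(|\{s : t > s\}|)$ rather than requiring a scan of the whole relation.
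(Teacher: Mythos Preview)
Your proposal is correct in outline and would yield a linear-time algorithm, but it follows a genuinely different route from the paper's proof. The paper does \emph{not} handle the superiority relation or defeaters directly inside the fixpoint computation. Instead, it first argues that the structure-preserving transformations $\mathit{regular}$, $\mathit{elim\_dft}$, and $\mathit{elim\_sup}$ from \cite{TOCL01}, originally proved correct for $\DL(\partial)$, are also correct for $\DL(\pl)$ because the two inference rules share the same clause structure. These transformations eliminate defeaters and the superiority relation entirely by introducing intermediate literals, so that the resulting theory has only strict and defeasible rules and an empty $>$. After precomputing $P_\Delta$ and $P_\lambda$, the paper performs three one-shot simplifications (driven by clauses (1), (2.2), and (2.3.1)) that bake the $\Delta$ and $\lambda$ information into the theory, deletes the strict rules, and then invokes the transition system of \cite{linear} on what is now a very plain theory---only four of the original transitions are needed, and none of them mention $>$.

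Your approach, by contrast, keeps the superiority relation in play and discharges clause~(2.3.2) on the fly via the $opp(q)$ counter and per-pair defeat bits. This is more self-contained and avoids the need to re-verify the \cite{TOCL01} transformations for the new logic, which is arguably the most delicate step in the paper's argument. The cost is the extra bookkeeping you already flagged, plus one detail you glossed over: when a rule $r$ fires you must check $r \in R_{sd}$ before using it either as the supporter in~(2.1) or as the overriding $t$ in~(2.3.2); defeaters may fire (their $bc$ can reach~$0$) but must not contribute to either role. With that caveat, both arguments go through; the paper's buys modularity and reuse of prior machinery, yours buys directness at the price of a slightly more intricate invariant.
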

\skipit{
\begin{proof}
(Sketch)
We adapt the approach of \cite{linear}.
This is possible largely because the structure of the inference rules for $\partial$ and $\pl$ are the same.
First, observe that the transformations of \cite{TOCL01}  for $\DL(\partial)$
are also correct for $\DL(\pl)$.
These transformation are used in \cite{linear} to reduce the input defeasible theory $D$
to an equivalent theory in simpler form.

There are three transformations in \cite{TOCL01} .
The first, $\mathit{regular}$, separates strict rules from the superiority relation,
and it is straightforward to see that this is valid for a wide range of defeasible logics, including $\DL(\pl)$.
The other two, $\mathit{elim\_dft}$ and $\mathit{elim\_sup}$, 
which are used to eliminate defeaters and the superiority relation respectively,
employ the same technique to achieve their respective aims:
they introduce an intermediate literal in a rule that might be attacked.

For example, roughly speaking, a rule $B \Rightarrow h$ is replaced by
$B \Rightarrow temp$ and $temp \Rightarrow h$, and
a defeater $B' \leadsto \non h$ is replaced by $B' \Rightarrow \non temp$.
Similarly,
if we have rules $r_1 ~:~ B_1 \Rightarrow h$ and $r_2 ~:~ B_1 \Rightarrow \non h$
with $r_1 > r_2$ then these are replaced by
$B_1 \Rightarrow temp_1$, $temp_1  \Rightarrow h$, $B_2 \Rightarrow temp_2$, $temp_2  \Rightarrow \non h$, 
and $temp_1 \Rightarrow \neg temp_2$, where the latter rule encodes $r_1 > r_2$.
In each case, when the defeater or overriding rule is active
the intermediate literal fails to be proved because it is attacked by another rule,
and consequently the application of the original rule is prevented.
Because the structure of the inference rules is the same for $\DL(\pl)$ and $\DL(\partial)$,
the introduction of intermediate literals and the effect of an attacking rule is the same in both logics.
Thus the technique is also correct in $\DL(\pl)$. 

We have already seen that $P_\Delta$ and $P_\lambda$ can be computed in linear time.
Now we can simplify the transformed version of $D$ and deduce some $\pl$ consequences.

Let $C$ be a set of consequences, initially $\emptyset$.
\begin{enumerate}
\item  \label{item:delta}
For each literal $q$:
If $+\Delta q \in P_\Delta$ then delete all defeasible rules for $\non q$, add $+\pl q$ to $C$,
and delete all occurrences of $q$ from the body of rules.
\item  \label{item:lambda}
For each literal $q$:
If $+\lambda q \notin P_\lambda$ and $q$ occurs in the body of a rule, delete the rule.
\item \label{item:strict}
Delete all strict rules.
\ignore{
\item \label{item:>}
For each rule $r$, with head $q$ say,
If there is a rule $s$ for $\non q$ with empty body and $s > r$,
and there is no rule $t$ for $\non q$ with $s > t$,
then delete $r$.
}
\end{enumerate}

Simplification \ref{item:delta} is justified by (2.2) and (1) of the inference rule.
Simplification \ref{item:lambda} is justified by (2.3.1) of the inference rule,
and by Proposition \ref{prop:contain} (which implies that such rules cannot be used in (2.1) for $\non q$).
Simplification \ref{item:strict} is justified because all definite consequences are already available in $P_\Delta$
and, as a result of the $\mathit{regular}$ transformation, no other use is made of these rules.

The simplified theory $D'$ incorporates the all the effects of references to $P_\Delta$ and $P_\lambda$.
Consequently, the transition system of \cite{linear} applies also to $D'$ with initial consequences $C$, for $\DL(\pl)$.
In fact, only the transitions numbered 2, 4, 5, and 8 are needed, since the remaining transitions involve strict rules or negative tags, though 5 is modified by dropping the reference to $-\Delta\non q$.
The simplifications above can be viewed as variants of transitions:
simplification \ref{item:delta} corresponds to transitions 6 and 1;
and
simplification \ref{item:lambda} corresponds to transition 10.
Simplification \ref{item:strict} is essentially redundancy removal, given $C$.
Furthermore, the data structure used in \cite{linear} to achieve linear complexity
in application of the transition system is also applicable to $\DL(\pl)$.

Thus all positive consequences of a propositional defeasible theory $D$ in $\DL(\pl)$
can be computed in time linear in the size of $D$.
\end{proof}
}


\begin{corollary}  
The set of all consequences of a defeasible theory can be computed in time exponential in the size of the defeasible theory.
Furthermore, the inference problem for defeasible theories is EXPTIME-complete.
\end{corollary}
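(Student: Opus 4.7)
The plan splits into an upper bound on the cost of computing all consequences and a matching lower bound on the inference problem.

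\textbf{Upper bound via grounding.} I would first argue that any defeasible theory $D$ is equivalent, for the purpose of consequence computation, to its propositional grounding $\mathrm{ground}(D)$, obtained by instantiating every rule with every ground substitution over the constants occurring in $D$. By range-restrictedness every fact is already ground. If $D$ contains $c$ constants and each rule has at most $k$ variables, then $\mathrm{ground}(D)$ has at most $O(|D|\cdot c^{k})$ ground rules; since $c,k \le |D|$, this is $2^{O(|D|\log|D|)}$, i.e.\ of size exponential in $|D|$. Applying Theorem \ref{thm:linear} to $\mathrm{ground}(D)$ then computes all consequences in time linear in $|\mathrm{ground}(D)|$ and thus exponential in $|D|$. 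This proves the first sentence of the corollary and gives the EXPTIME upper bound for the inference problem.

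\textbf{EXPTIME-hardness.} For the matching lower bound I would reduce from combined-complexity evaluation of Datalog programs, which is known to be EXPTIME-complete. A Datalog program together with its EDB can be encoded directly as a defeasible theory $D$ whose facts are the EDB and whose rules are exactly the Datalog rules written as strict rules $b_1,\dots,b_n \rightarrow h$, with no defeasible rules, no defeaters, and empty superiority relation. Under this encoding the $+\Delta$ inference rule coincides with ordinary Datalog derivability, so a ground atom $q$ is derivable in the Datalog program iff $+\Delta q$ is a consequence of $D$. This gives EXPTIME-hardness of the inference problem for $d=\Delta$; the same reduction works for $d=\lambda$ and $d=\pl$ because clause (1) of each of those inference rules promotes every $+\Delta$ conclusion to a $+\lambda$ (resp.\ $+\pl$) conclusion, and conversely in the absence of competing rules no other source of $+\lambda q$ or $+\pl q$ is available.

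\textbf{Expected obstacles.} The counting step for grounding is routine bookkeeping. The delicate point is justifying that the semantics of $\DL(\pl)$ is preserved under grounding, not only that of $\DL(\partial)$: I would verify this separately for each of the three tags $\Delta$, $\lambda$, $\pl$ by observing that all quantifications in their inference rules range over ground instances of rules from $R_s$, $R_{sd}$, or $R$, so instantiating every rule against every ground substitution does not change the set of inferences. Given this, the EXPTIME reduction is immediate and no further difficulty is anticipated.
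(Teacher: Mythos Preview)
Your proposal is correct and follows essentially the same route as the paper: ground the theory to a propositional one of exponential size and invoke Theorem~\ref{thm:linear} for the upper bound, then reduce Datalog (combined complexity) for EXPTIME-hardness. The only cosmetic differences are that the paper encodes Datalog rules as \emph{defeasible} rules rather than strict ones, and it explicitly accounts for the blow-up of the superiority relation in the grounding (each superiority statement between two rules yields up to $c^{2n}$ ground statements), a point your size estimate $O(|D|\cdot c^{k})$ omits but which does not affect the exponential bound.
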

\skipit{
\begin{proof}
Construct a propositional defeasible theory $D'$ from the original defeasible theory $D$ by taking all variable-free instances of all rules using the constants that appear in $D$.
Two instances of rules are related by the superiority relation iff the rules of which they are instances are so related.
Let $n$ be the maximum number of variables in a rule of $D$ and $c$ be the number of constants in $D$.
Then there are at most $c^n$ propositional instances of a rule of $D$, and at most $c^{2n}$ derived superiority statements for each superiority statement in $D$.
Since both $n$ and $c$ may be O($|D|$),
the size of $D'$ is O($|D|^{2|D|}$),
which is O($2^{p(|D|)}$), for a polynomial $p$.

$D$ and $D'$ have the same consequences.
By Theorem \ref{thm:linear} the consequences of $D'$ can by computed in linear time in the size of $D'$,
which is EXPTIME in the size of $D$.

The inference problem is shown EXPTIME-complete
by reduction of the same problem for Datalog (see \cite{DEGV}, Theorem 4.5).
Each Datalog rule is expressed as a defeasible rule.
A positive literal is inferred in
$\DL(\pl)$ iff it is inferred in Datalog.
\end{proof}
}


\begin{theorem}  
The inference problem for propositional defeasible logics is P-complete.
\end{theorem}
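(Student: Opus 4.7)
The plan is to show the inference problem is both in P and P-hard, with the latter by reduction from Horn satisfiability (more precisely, the Horn clause entailment problem: given a set of propositional Horn clauses and a target atom $q$, decide whether $q$ is true in the minimal model).

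For membership in P, Theorem \ref{thm:linear} already supplies a linear-time algorithm for $\DL(\pl)$, and analogous linear-time algorithms exist for the other standard propositional defeasible logics \cite{linear}, so the inference problem for each of them lies in P (indeed, in LINTIME). The interesting direction is P-hardness. I would first prove P-hardness of inferring $+\Delta q$: given an instance $(H, q)$ of Horn entailment, let $D$ consist of one strict rule $p_1, \ldots, p_k \rightarrow h$ for each Horn clause $p_1 \wedge \cdots \wedge p_k \rightarrow h$ of $H$, one fact for each unit clause, and an empty superiority relation. Then by an easy induction on derivation length, $+\Delta q$ is a consequence of $D$ iff $q$ is entailed by $H$. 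Since Horn entailment is P-complete \cite{CookNguyen} under log-space (or even $\mathrm{NC}^1$) reductions, this gives P-hardness of $+\Delta$-inference, and hence of the inference problem for every defeasible logic that supports $\Delta$.

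To strengthen the result as the paper asserts, namely that P-hardness holds even for logics/tags that do not rely on strict rules, I would redo the same reduction using defeasible rules: replace each $\rightarrow$ by $\Rightarrow$, and retain unit clauses as facts. The constructed theory has no complementary literals at all, so for every literal $q$ occurring in $D$ there is no rule in $R[\non q]$, which means clauses (2.2) and (2.3) of the $+\pl$ inference rule are vacuously satisfied. Consequently $+\pl$ collapses to forward chaining on the defeasible rules, and $+\pl q$ is derivable iff $q$ is entailed by $H$. The same vacuous-conflict argument applies verbatim to $+\partial$, $+\partial^*$, $+\delta$, $+\delta^*$, $+\lambda$, $+\pl^*$ and the support tags $\sigma_X$, giving P-hardness uniformly.

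The step I expect to take the most care is the second one: one must verify that in the conflict-free theory produced by the reduction, the full inference rule really does reduce to the Horn forward-chaining rule, for each tag under consideration. The positive half is immediate for $+\pl$ (since (2.2) and (2.3.1) hold vacuously), but to handle $+\partial$ and similar tags one also needs each such logic's notion of ``no competing rule'' to degenerate properly, which in turn may need $-\partial$ of certain bodies; by using only positive atoms as heads and bodies (so complements never occur in $D$) this too is immediate. The only other subtlety is to confirm that the reduction is log-space (or at worst polynomial-time) computable, which is clear since it is a rule-by-rule syntactic rewrite of linear size. Combining the P upper bound with this uniform P-hardness construction yields the theorem.
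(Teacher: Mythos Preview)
Your proposal is correct and follows essentially the same route as the paper: both establish P-hardness by encoding propositional Horn clauses as rules in a conflict-free defeasible theory, so that forward chaining in the logic coincides with Horn inference, and both note that the argument works equally well with strict or defeasible rules since no competing rules arise. The only cosmetic difference is that the paper reduces from Horn \emph{satisfiability}, introducing a fresh atom $\false$ as the head of negative clauses and asking whether $+\Delta\,\false$ is derivable, whereas you reduce from Horn \emph{entailment} of a designated atom $q$; these are interreducible formulations of the same P-complete problem, so nothing is gained or lost either way.
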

\skipit{
\begin{proof}
We show that the inference problem for $\PD{}$ is P-complete,
by reduction of the Horn satisfiability problem, which is P-complete \cite{CookNguyen}.  
For completeness, we first specify this problem.
A Horn clause is a disjunction of literals containing at most one positive literal.

\ \\
\noindent
\textbf{The Horn Satisfiability Problem}

\noindent
\textbf{Instance} \\
A set $H$ of propositional Horn clauses.

\noindent
\textbf{Question} \\
Is $H$ satisfiable, that is, is there an assignment of Boolean values to propositional variables
such that each clause of $H$ evaluates to true?
\ \\

In the reduction,
each of the propositional variables in the Horn satisfiability problem is represented by itself,
and we add an extra propositional variable $\false$.
For clarity, we write the Horn clauses in the logic programming style.

For every Horn clause of the form
\[
\begin{array}{rcl}
A & \leftarrow & B_1, \ldots, B_n
\end{array}
\]
the defeasible theory contains the strict rule
\[
\begin{array}{rcl}
B_1, \ldots, B_n & \rightarrow & A
\end{array}
\]

Similarly, for every Horn clause of the form
\[
\begin{array}{rcl}
 & \leftarrow & B_1, \ldots, B_n
\end{array}
\]
the defeasible theory contains the strict rule
\[
\begin{array}{rcl}
B_1, \ldots, B_n & \rightarrow & \false
\end{array}
\]

It is straightforward to show that 
$\PD{q}$ is inferred by a defeasible logic iff
$q$ is true in every model of the definite clause subset of $H$, and
$\PD{\false}$ is inferred by a defeasible logic iff
$H$ is unsatisfiable.

Strict inference is a part of any defeasible logic, so the result applies to all defeasible logics.
Even without a separate notion of strict inference,
the proof extends easily to any inference rule that allows the chaining of defeasible or strict rules,
since the superiority relation and conflicting rules do not arise in the reduction.
This includes all defeasible logics we are aware of.
\end{proof}
}

\section{Relative Expressiveness}   \label{app:relexp}

\newcommand{\p}{{\em ~~Poss~~}}
\newcommand{\np}[1]{{\em ~~NP(#1)~~}}

In this appendix we prove Theorems \ref{thm:pl_v_dl} and  \ref{thm:dl_v_pl}.

Relative expressiveness involves both positive and negative tags, so we first introduce the inference rules for $-\lambda$ and $-\pl$.
These inference rules are a kind of negation of the corresponding positive inference,
under the Principle of Strong Negation \cite{flexf}.
However, the notion of strong negation must be extended to address expressions of the form
$t \: \alpha \notin P$, which were not considered in  \cite{flexf}.
In these cases we define the strong negation of $t \: \alpha \notin P$ to be $t \: \alpha \in P$.

The closure $P_\Delta$ must be closed under both $+\Delta$ and $-\Delta$ inference rules,
that is, it must contain all $+\Delta$ and $-\Delta$ consequences.

The $-\lambda$ inference rule is as follows.

\begin{tabbing}
$-\lambda$: \=We may append $P(\imath + 1) = -\lambda q$ if both \\
\> (1) \=$-\Delta q \in P_{\Delta}$ and \\
\> (2)	\>(2.1) $\forall r \in R_{sd}[q] ~ \exists \alpha \in A(r): -\lambda \alpha \in P(1..\imath)$ or \\
\> \>(2.2) $+\Delta \non q \in P_{\Delta}$ \\
\end{tabbing}

The $\lambda$ closure $P_{\lambda}$
contains all $+\lambda$ and $-\lambda$ consequences of $D$.

\begin{tabbing}
$-\pl$: \=We may append $P(\imath + 1) = -\pl q$ if both \\
\> (1) \=$-\Delta q  \in P_{\Delta}$ and  \\
\> (2)	\>(2.1) $\forall r \in R_{sd}[q] ~ \exists \alpha \in A(r): -\pl \alpha \in P(1..\imath)$ or \\
\> \>(2.2) $+\Delta \non q \in P_{\Delta}$ or \\
\> \>(2.3) \=$\exists s \in R[\non q]$ such that \\
\> \> \>(2.3.1) $\forall \alpha \in A(s): +\lambda \alpha \in P_{\lambda}$ and \\
\> \>\>(2.3.\=2) $\forall t \in R_{sd}[q]$  either \\
\> \>\>\>$\exists \alpha \in A(t): -\pl \alpha \in P(1..\imath)$ or $t \not> s$
\end{tabbing}

To prove the first part of Theorem \ref{thm:pl_v_dl} we employ an analysis
introduced in \cite{table,TOCL01}.
For each proposition $p$ we can identify exactly six
different possible outcomes of the proof theory.
With each outcome we present a simple theory that achieves this outcome.
\begin{itemize}
\item[A:]  $\MD{p} \notin P_\Delta$ and $\ppd{p} \notin P_{\pl}$  \\
$p \rightarrow p$
\item[B:]  $\ppd{p} \in P_{\pl}$ and $\ppd{p} \notin P_\Delta$ and $\MD{p} \notin P_\Delta$  \\
$ \Rightarrow p; p \rightarrow p$
\item[C:]  $\PD{p} \in P_\Delta$ (and also $\ppd{p} \in P_{\pl}$)  \\
$ \rightarrow p$
\item[D:]  $\ppd{p} \in P_{\pl}$ and $\MD{p} \in P_\Delta$  \\
$ \Rightarrow p$
\item[E:]  $\MD{p} \in P_\Delta$ and $\ppd{p} \notin P_{\pl}$  and $\mmd{p} \notin P_{\pl}$\\
$p \Rightarrow p$
\item[F:]  $\mmd{p} \in P_{\pl}$ (and also $\MD{p} \in P_\Delta$)  \\
$\emptyset$, the empty theory
\end{itemize}
Similarly, there are the same six possibilities for $\neg p$.
We can represent the outcomes in terms of a Venn diagram
in Figure \ref{venn}.

\begin{figure}[ht]
\includegraphics[width=10cm,height=6cm]{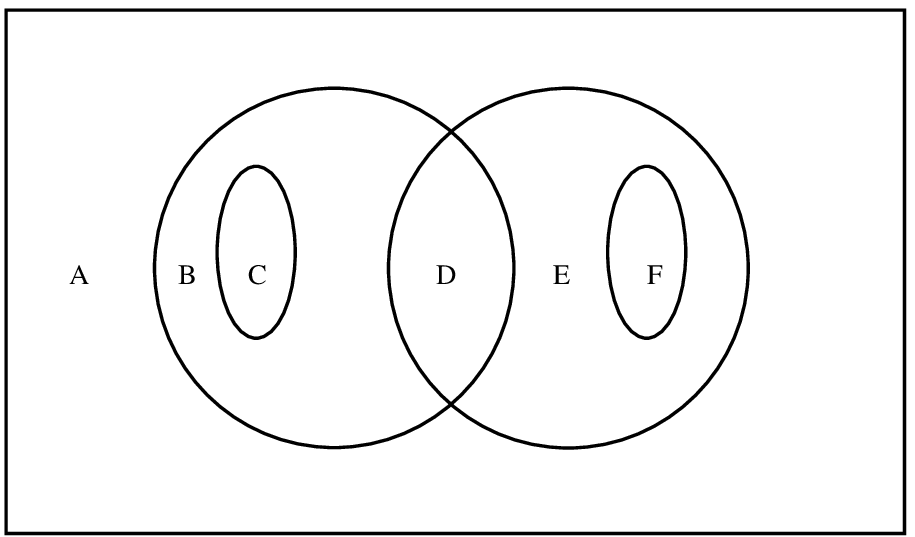}
\caption{Possible outcomes for a single literal in $\DL(\pl)$}
\label{venn}
\end{figure}

In Figure \ref{venn},
the circle on the left -- containing B, C, and D --
represents the literals $p$ such that $+\pl{p}$ can be proved,
and the ellipse inside it (i.e. C)
represents the literals $p$ such that $\PD{p}$ can be proved.
The circle on the right -- containing D, E, and F --
represents the literals $p$ such that $\MD{p}$ can be proved,
and the ellipse inside it (i.e. F)
represents the literals $p$ such that $-\pl{p}$ can be proved.

Due to the relationship between $p$ and $\neg p$, many fewer than
the 36 possible combinations are possible outcomes of the proof theory.

In what follows, $p$ ranges over literals and $\non p$ denotes
the complement of $p$.
We first establish some simple properties that will eliminate many combinations.

\begin{proposition}   \label{prop:props}
Consider a defeasible theory $D$,
with $\Delta$ closure $P_\Delta$ and $\pl$ closure $P_{\pl}$.
\begin{enumerate}
\item
If $\PD{p} \notin P_\Delta$ and $\PD{\non p}  \in P_\Delta$ 
then $+\pl{p} \notin P_{\pl}$

\item
If $\MD{p}  \in P_\Delta$ and $\PD{\non p}  \in P_\Delta$ 
then $-\pl{p} \in P_{\pl}$

\item
If $\PD{p} \notin P_\Delta$ and $\PD{\non p} \notin P_\Delta$
then we cannot have both $+\pl{p} \in P_{\pl}$ and $+\pl{\non p} \in P_{\pl}$

\end{enumerate}
\end{proposition}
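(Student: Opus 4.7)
The plan is to handle the three parts separately by short direct appeals to the inference rules for $+\pl$ and $-\pl$, with the third part reduced to the previously established consistency of $+\pl$ (Proposition~\ref{prop:consistent}).

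For part (1), I would show that no proof step can append $+\pl p$. Clause~(1) of the $+\pl$ rule requires $\PD{p} \in P_\Delta$, which is excluded by the hypothesis $\PD{p} \notin P_\Delta$. In clause~(2), condition (2.2) requires $\PD{\non p} \notin P_\Delta$, which is also excluded by the hypothesis $\PD{\non p} \in P_\Delta$. Since these two clauses are the only ways to introduce $+\pl p$ into a proof, we conclude $+\pl p \notin P_\pl$.

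For part (2), I would verify that the $-\pl$ inference rule applies to $p$. Clause~(1) of that rule holds because $\MD{p} \in P_\Delta$ by hypothesis. For the second requirement, it suffices to observe that subclause (2.2) holds, since $\PD{\non p} \in P_\Delta$ by hypothesis. Hence $-\pl p$ is derivable, and by the definition of $P_\pl$ we get $-\pl p \in P_\pl$.

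For part (3), I would invoke Proposition~\ref{prop:consistent}, which asserts the consistency of $+\pl$: if both $+\pl p$ and $+\pl \non p$ were consequences of $D$, then both $\PD{p}$ and $\PD{\non p}$ would have to be consequences as well, i.e.\ both would lie in $P_\Delta$, contradicting the hypothesis that neither does. I do not anticipate any real obstacle — the substantive work (the infinite ascending chain in the superiority relation) is already encapsulated in the proof of Proposition~\ref{prop:consistent}, so part (3) here amounts to little more than quoting that result under the present hypothesis.
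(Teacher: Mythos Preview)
Your argument is correct. Parts (1) and (2) match the paper exactly: it too says these follow directly from the proof rules for $+\pl$ and $-\pl$, and you have simply spelled out which clauses fire.

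For part (3) there is a small but genuine organizational difference. The paper gives a self-contained direct argument: assuming both $+\pl p$ and $+\pl\non p$, it shows (1) fails and (2.2) holds on each side, uses $\pl\subseteq\lambda$ to see that (2.3.1) cannot discharge the opposing rule, and then repeatedly applies (2.3.2) to build an infinite $>$-ascending chain of rules, contradicting finiteness and acyclicity of $>$. You instead observe that this is precisely the content of Proposition~\ref{prop:consistent} (consistency of $+\pl$) and invoke it. Your route is tidier and avoids duplicating the chain argument already carried out in the proof of Proposition~\ref{prop:consistent}; the paper's route keeps the proposition self-contained. Either is fine, and the underlying mathematics is identical.
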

\begin{proof}
Statements 1 and 2 follow directly from the
proof rules for  $+\pl$ and  $-\pl$.
Statement 3 is proved as follows.

Suppose this combination is possible.
Then, when applying the inference rule $\pl$ for $p$ (and for $\non p$),
(1) does not apply and (2.2) is satisfied.
We also must have $+\lambda p$ and $+\lambda \non p$ because $\pl \subseteq \lambda$
and by definition of D (because we have $+\pl p$ and $+\pl \non p$).
There must be a rule $r$ for $p$ (and one for $\non p$) such that
$\forall \alpha \in A(r) ~ +\pl \alpha \in P_{\pl}$.
Consequently, there is a rule $s$ for $p$ (and one for $\non p$)
such that $\forall \alpha \in A(s) ~ +\lambda \alpha \in P_\lambda$.
Hence (2.3.1) of inference rule $\pl$ does not apply, and so (2.3.2) must.
That is, there is a rule $r$ for $p$ (and one for $\non p$) such that
$\forall \alpha \in A(t) ~ +\pl \alpha \in P_{\pl}$ and $t > s$.
In this way we obtain a chain of rules:
for each rule $s_i$ for $\non p$ there is a superior rule $t_i$ for $p$, and
for each rule $t_i$ for $p$ there is a superior rule $s_i$ for $\non p$.
Since $D$ is finite and $>$ is acyclic, this can never occur.
\end{proof}

\ignore{
Property 3 in the above proposition is a weaker form of the \emph{consistency}
of the defeasible logic \cite{TOCL10}.  It demonstrates that any inconsistency
(proving both $p$ and $\neg p$ wrt $\pl$) occurs if there is already inconsistency in
}

In terms of the diagram (Figure \ref{venn}),
the properties of the previous proposition have the following effects:
\begin{enumerate}
\ignore{\item
If $p$ satisfies B, C, or D,
and $\non p$ satisfies B, C, or D,
and $D$ is acyclic
then $p$ satisfies C
and $\non p$ satisfies C (Property 1).
Consequently, if $D$ is acyclic, it is not possible for
$p$ to satisfy B or D,
and $\non p$ to satisfy B, C, or D.
}
\item 
If $p$ satisfies A, B, D, E, or F,
and $\non p$ satisfies C
then $p$ satisfies A, E, or F (Property 1).
Consequently, it is not possible for
$p$ to satisfy B or D,
and $\non p$ to satisfy C.
\item 
If $p$ satisfies D, E, or F,
and $\non p$ satisfies C
then $p$ satisfies F (Property 2).
Consequently, it is not possible for
$p$ to satisfy D or E,
and $\non p$ to satisfy C.
\item 
If $p$ satisfies B or D and $\non p$ satisfies B or D
we have a contradiction.
That is,   it is not possible for
$p$ to satisfy B or D,
and $\non p$ to satisfy B or D.
\end{enumerate}

These effects apply for $p$ a positive or negative literal.

\ignore{
F is kind of obvious

A is consistent with all letters for much the same reason as F,
because the formulation of (2.2) in the $\pl$ inference rule treats both cases the same way.

C \& C and E \& E are both obvious

B \& E:
p -> p; => p;
~p => ~p
$\lambda \non p \notin P$
$\pl p \in P$

D \& E
=> p
~p => ~p
$\lambda \non p \notin P$
$\pl p \in P$
}

\begin{figure}
\begin{center}
\begin{tabular}{lr||l|c|c|c|c|c|}
           \multicolumn{8}{c}{~~~$\neg p$}  \\ \\ 
           &          &   A  &   B      &   C     &   D       &   E      &   F     \\ \cline{2-8} 
           & ~A~  & \p   & \p        & \p       & \p        & \p        & \p      \\ \cline{2-8}
           & ~B~  & \p   & \np{3}  & \np{1} & \np{3}  & \p       & \p      \\ \cline{2-8}
\multirow{2}{*}{$p~~~$} & ~C~  & \p   & \np{1} &  \p      & \np{1}   & \np{2}   & \p    \\ \cline{2-8}
           & ~D~  & \p   & \np{3}  & \np{1} & \np{3} & \p         & \p     \\ \cline{2-8}
           & ~E~  & \p   & \p        & \np{2} & \p       & \p         & \p      \\ \cline{2-8}
           & ~F~  & \p   & \p        & \p       & \p       & \p          & \p     \\ \cline{2-8}
\end{tabular}
\end{center}
\caption{Table of all combinations of conclusions for $p$ and $\neg p$ in $\DL(\pl)$,
indicating whether or not the combination is possible, and, if not, why not.}
\label{fig:table}
\end{figure}


In the table in Figure \ref{fig:table} we display the possible combinations of
conclusions for a proposition $p$ and its negation $\neg p$.
The table is symmetric across the leading diagonal,
since the treatment of literals in defeasible logic 
is independent of the polarity of the literal.
Those combinations which are possible are displayed as \p.
Those combinations which are not possible
are displayed as \np{i}, where $i$ is the property number in Proposition \ref{prop:props}
that implies that they are impossible.

For the possible combinations,
a sample theory can be exhibited by combining the sample theories for each letter (for $p$ and $\non p$, respectively).
We leave this for the reader to verify.

It is now straightforward to compare this table, for $\DL(\pl)$, 
with the table in \cite{table,TOCL01} for $\DL(\partial)$.
Every combination that is possible for $\DL(\partial)$ is also possible for $\DL(\pl)$.
Thus, for any defeasible theory $D$, and each proposition $p$, 
we can identify which combination of conclusions $\DL(\partial)$ entails
and simulate that behaviour with the sample theory for that combination for $p$.
This completes the proof of the first part of Theorem \ref{thm:pl_v_dl}.
Thus we have


\begin{theorem}  
$\DL(\partial)$ is less expressive than $\DL(\pl)$ when there are no additions.
More specifically, 
\begin{itemize}
\item
every defeasible theory in $\DL(\partial)$ can be simulated by a defeasible theory in $\DL(\pl)$
\item
there is a defeasible theory $D$ whose consequences in $\DL(\pl)$
cannot be expressed by any defeasible theory in $\DL(\partial)$
\end{itemize}
\end{theorem}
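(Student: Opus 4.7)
The plan is to prove the two bullet points separately, using the six-outcome analysis for single literals that precedes the statement.

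For the first bullet, the approach is to enumerate, for any defeasible theory $D$ in $\DL(\partial)$, which combination of $\Delta$ and $\partial$ consequences arises for each proposition $p$ together with its complement $\neg p$. Following the analysis of \cite{table,TOCL01}, this yields six cases (labelled A--F) per literal, and the admissible joint outcomes for the pair $(p, \neg p)$ form a $6\times 6$ table constrained by consistency-style properties. I would then compare this $\DL(\partial)$ table with the table in Figure~\ref{fig:table}, which records the admissible joint outcomes in $\DL(\pl)$. The key observation is that every cell marked possible in the $\DL(\partial)$ table is also marked possible in the $\DL(\pl)$ table. Consequently, for each defeasible theory $D$ in $\DL(\partial)$, I can build $D'$ in $\DL(\pl)$ by taking, for each proposition $p$ of $D$, the disjoint union of the sample theories listed for the letters that characterise $p$ and $\neg p$ (using fresh labels and the vocabulary of $D$ only). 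The resulting $D'$ simulates $D$, since on every literal the $\pl$-outcome of $D'$ matches the $\partial$-outcome of $D$ modulo tag renaming.

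For the second bullet I would exhibit the concrete theory hinted at earlier in the excerpt, namely
\[
\begin{array}{rrcl}
r_1: &             & \Rightarrow & p \\
r_2: & \neg p      & \rightarrow & \neg p \\
\end{array}
\]
with empty superiority. A short computation gives $P_\Delta=\emptyset$ (neither $\PD{p}$ nor $\PD{\neg p}$ can be inferred: no strict rule fires for $p$, and the $-\Delta$ rule fails for $\neg p$ because $r_2$ has body $\neg p$ and we cannot establish $-\Delta \neg p$). From this, $+\lambda p$ follows using $r_1$ and $+\Delta \neg p \notin P_\Delta$, and in turn $+\pl p$ follows, because the only competing rule $r_2$ has $\neg p$ in its body and $+\lambda \neg p \notin P_\lambda$. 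Thus $+\pl p$, $-\Delta p$ are consequences while $-\Delta \neg p$ is not. To show that no theory $D'$ in $\DL(\partial)$ can realise this triple of consequences, I would argue directly from the $+\partial$ inference rule: any derivation of $+\partial p$ that does not use clause~(1) must satisfy clause~(2.2), which requires $-\Delta \neg p \in P$; any derivation via clause~(1) requires $+\Delta p \in P$, which contradicts $-\Delta p$. Either way, it is impossible in $\DL(\partial)$ to have both $+\partial p$ and $-\Delta p$ without $-\Delta \neg p$.

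The main obstacle will be the first part, in two respects. First, I must verify carefully that each combination marked possible in both tables really is realisable by the combined sample theories; the risk is that combining sample theories can create unintended rule interactions (for instance, shared literals in bodies or heads across the two halves). Disjointness of the propositional vocabularies for $p$ and $\neg p$ in the sample theories is automatic here, but the fact that $p$ and $\neg p$ are literally complementary means some combinations (like C with C) need special sample theories, not a disjoint union. Second, one must check that the $\DL(\partial)$ and $\DL(\pl)$ tables align in every non-trivial cell; the structural parallelism between the two inference rules makes this routine but non-vacuous. Once these checks are in hand, the contrast with the second bullet is clean and completes the theorem.
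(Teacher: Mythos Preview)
Your proposal is correct and follows essentially the same approach as the paper: the first part uses the six-outcome (A--F) per-literal analysis and table comparison against \cite{table,TOCL01}, constructing the simulating theory from the sample theories for each proposition's combination, and the second part uses the same two-rule theory and the same direct argument from clauses (1) and (2.2) of the $+\partial$ rule. The obstacles you flag (verifying that combined sample theories realise the intended cells, and checking the table alignment) are exactly the residual checks the paper also leaves to the reader.
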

\skipit{
\begin{proof}
The proof of the first part is established by the preceding work in this section.
For the second part,
let $D$ consist of 
\[
\begin{array}{lrcl}
        &             & \Rightarrow & \phantom{\neg} p \\
        & \neg p  & \rightarrow &  \neg p \\
\end{array}
\]
with empty superiority relation.

Then $+\lambda p$ and $+\pl p$ are consequences of $D$, as is $-\Delta p$, while $-\Delta \neg p $ is not a consequence.

We now show that $\DL(\partial)$ cannot simulate this theory.
That is, for no defeasible theory $D'$ 
are both $+\partial p$ and $-\Delta p$ consequences, and $-\Delta \neg p $ not a consequence.
Suppose $+\partial p$ is a consequence of $D'$.
Then either (1) $+\Delta p$ or (2) $-\Delta \neg p$ must be consequences of $D'$,
from the inference condition for $+\partial$.
But these contradict the requirements on the $\Delta$-consequences of $D'$.
Thus $\DL(\partial)$ is unable to simulate the consequences of $D$ under $\DL(\pl)$.
\end{proof}
}



It is interesting to note that the comparison of tables, identifies several different theories that
might be used to establish the second part of Theorem \ref{thm:pl_v_dl}.
However, they all have a similar structure: 
a literal $q$ is defeasibly provable, despite a loop for $\non q$.


\begin{theorem}   
$\DL(\pl)$ is not more expressive than $\DL(\partial)$
with respect to addition of rules.
\end{theorem}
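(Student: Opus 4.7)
The plan is to exhibit a specific theory $D$ in $\DL(\partial)$ and a single modular rule addition $A$ such that no theory $D'$ in $\DL(\pl)$ can match the behaviour of $D$ under both the empty addition and $A$. The guiding intuition is that $\DL(\partial)$ and $\DL(\pl)$ differ fundamentally at clause $(2.3.1)$: the former uses the object-level $-\partial \alpha$, which can hold even when $\alpha$ is subject to an unresolvable conflict, whereas the latter uses $+\lambda \alpha \notin P_\lambda$, which is switched on by any defeasible rule for $\alpha$ with an empty (or otherwise $+\lambda$-derivable) body, irrespective of pre-existing conflict.

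I would take the theory from Example~\ref{ex:partialpl}, namely
\[
\begin{array}{lrcl}
r:  &           & \Rightarrow & \phantom{\neg} q \\
s:  &           & \Rightarrow &  \neg q \\
t:  &           & \Rightarrow & \phantom{\neg} p \\
u:  &     q    & \Rightarrow &  \neg p \\
\end{array}
\]
with empty superiority relation, and the adversarial addition $A = \{r'': \Rightarrow q,~ u': q \Rightarrow \neg p\}$ with fresh labels $r''$ and $u'$ and no superiority. $A$ is modular since $\Sigma(A) \subseteq \{p, q\} = \Sigma(D)$ and its labels are disjoint from $\Lambda(D) \cup \Lambda(D')$. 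A short computation shows that $D + A$ in $\DL(\partial)$ still derives $+\partial p$: the conflict between $\{r, r''\}$ and $s$ keeps $-\partial q$ in place, which satisfies clause $(2.3.1)$ of $+\partial$ for both $u$ and $u'$, while $(2.1)$ and $(2.2)$ are immediate.

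Next I would argue that any candidate $D'$ in $\DL(\pl)$ that simulates $D$ under the empty addition cannot simulate $D$ under $A$. Matching the negative consequences of $D$ modulo tags forces $+\Delta \neg q \notin P_\Delta$ in $D'$; otherwise $\neg q$ would be positively derived (via $\Delta$) in $D'$ but not in $D$. Since $A$ contributes no strict rule, $+\Delta \neg q \notin P_\Delta$ persists in $D' + A$. The defeasible rule $r''$ in $A$ has empty body and head $q$, so clauses $(2.1)$ and $(2.2)$ of the $+\lambda$ inference rule are immediately satisfied for $q$, yielding $+\lambda q \in P_\lambda$ in $D' + A$. Consequently clause $(2.3.1)$ of $+\pl$ fails for $u'$, whose only antecedent is $q$, and $(2.3.2)$ also fails, because the fresh label $u'$ lies outside $\Lambda(D')$ and $A$ contains no superiority statements involving it. Hence $+\pl p$ is not derivable in $D' + A$; a parallel observation rules out $+\Delta p$ (which would contradict simulation under the empty addition, where $D$ gives $-\Delta p$). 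Thus $p$ is not positively derived in $D' + A$, while $+\partial p$ is derived in $D + A$, contradicting simulation modulo tags on $\Sigma(D + A) = \{p, q\}$.

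The main obstacle will be ruling out clever uses of auxiliary propositions in $\Sigma(D') \setminus \Sigma(D)$ or of a non-trivial superiority relation inside $D'$. Modularity of $A$ is decisive here: the fresh label $u'$ cannot appear in any superiority statement of $D'$, locking $D'$'s priorities out of $(2.3.2)$; and the activation of $+\lambda q$ in $D' + A$ is forced by a purely local two-clause check on the single rule $r''$, independent of any auxiliary scaffolding in $D'$. The residual work is a routine case analysis verifying that neither auxiliary strict rules nor defeaters internal to $D'$ can prevent $+\Delta \neg q \notin P_\Delta$ or produce an override of $u'$, after which the contradiction above completes the proof.
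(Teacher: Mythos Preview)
Your argument is correct and complete; in fact the ``residual work'' paragraph is unnecessary.  The blocking of $+\pl p$ in $D'+A$ depends only on (i) $+\Delta p,\ +\Delta\neg q \notin P_\Delta$, which is inherited from simulation under the empty addition and unchanged by $A$ since $A$ has no strict rules, and (ii) the single fresh rule $u'$, for which $(2.3.1)$ fails because $r''$ forces $+\lambda q$, and $(2.3.2)$ fails because modularity forbids any $t>u'$.  None of this is sensitive to auxiliary propositions or to the superiority relation inside $D'$, so no further case analysis is needed.

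Your route is genuinely different from the paper's.  The paper takes $D$ to be the empty theory and uses \emph{two} non-trivial additions, $A_1=\{\Rightarrow q\}$ and $A_2=\{\Rightarrow q;\ \neg q\rightarrow\neg q\}$.  The phenomenon it exploits is the difference at clause $(2.2)$: the strict loop $\neg q\rightarrow\neg q$ prevents $-\Delta\neg q$ in $\DL(\partial)$, so $+\partial q$ is blocked in $D+A_2$, whereas $+\Delta\neg q\notin P_\Delta$ still holds in $D'+A_2$, allowing $+\pl q$.  The paper then threads information obtained from $A_1$ (about how $D'$ handles rules for $\neg q$) into the analysis of $D'+A_2$.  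You instead reuse the theory of Example~\ref{ex:partialpl} and exploit the difference at clause $(2.3.1)$ directly, with a single addition; this avoids the need to transfer constraints between additions and gives a shorter, more self-contained argument.  What the paper's choice buys is a witness with a minimal base theory (empty $D$); what yours buys is a one-shot contradiction that is essentially structure-free in $D'$.
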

\skipit{
\begin{proof}
\ignore{
Consider the defeasible theory $D$
\[
\begin{array}{lrcl}
r_1: &        & \Rightarrow & \phantom{\neg} q \\
\end{array}
\]
and the rule addition
\[
\begin{array}{lrcl}
s: & \neg q & \rightarrow & \neg q \\
\end{array}
\]
The consequences of $D$ in $\DL(\partial)$ are
$+\partial q$,  $-\partial \neg q$, $\MD(q)$ and $\MD(\neg q)$ (and negative consequences for any other literal).
The only consequence of $D+A$ in $\DL(\partial)$ related to $q$ and $\neg q$ is
$\MD{q}$.

if we simulate $D$ as
$a => q$,
$=> a$,
$\neg q -> \neg a$
then $D$ and $D+A$ in $\DL(\pl)$ have exactly the same consequences (when restricted to language $q$).
Same when $A'$ is $\neg q$.

}

Let $D$ be the empty defeasible theory $(\emptyset,\emptyset,\emptyset)$.
The consequences of $D$ in $\DL(\partial)$ are
$-\partial q$,  $-\partial \neg q$, $\MD{q}$ and $\MD{\neg q}$ for every proposition $q$.
Suppose, to achieve a contradiction, that $D'$ is a simulation of $D$ wrt addition of rules in $\DL(\pl)$.
Then 
$\MD{q}$ and $\MD{\neg q}$ are consequences of $D'$.
We will consider several additions to $D$.

For the addition $A_1$
\[
\begin{array}{lrcl}
r_1: &        & \Rightarrow & \phantom{\neg} q \\
\end{array}
\]
$D+A_1$ has consequences $+\partial q$ and $-\partial \neg q$.

Then $+\pl q$ is a consequence of $D'+A_1$.
It is straightforward that $\MD{q}$ and $\MD{\neg q}$ are consequences of $D'+A_1$.
\ignore{
We cannot have $+\Delta q \in P_\Delta$ for $D'+A_1$:
given $A$, this could only occur if $+\Delta q$ is a consequence of $D'$,
which contradicts the assumption that $D'$ is a simulation of $D$.
Similarly, we cannot have $+\Delta \neg q \in P_\Delta$.
}
It follows, from the $\pl$ inference rule, that (2.3.1) or (2.3.2) holds for each rule $s$ for $\neg q$ in $D'+A_1$.

Note that $+\lambda q \in P_\lambda$ for $D'+A_1$, by Proposition \ref{prop:contain}. 
Now, if $+\lambda \alpha \notin P_\lambda$ for $D'+A_1$, for some $\alpha$,
then also
$+\lambda \alpha \notin P_\lambda$ for $D'$,
because $P_\lambda$ monotonically increases with the addition of $+\lambda$ consequences.
Furthermore,
$r_1$ is not superior to any rule in $D'$ (by definition of modular addition).
Consequently, (2.3.1) or (2.3.2) holds for each rule $s$ for $\neg q$ in $D'$.

For addition $A_2$
\[
\begin{array}{lrcl}
r_2: &        & \Rightarrow & \phantom{\neg} q \\
s_2: & \neg q & \rightarrow & \neg q \\
\end{array}
\]
The only consequence of $D+A_2$ in $\DL(\partial)$ related to $q$ and $\neg q$ is
$\MD{q}$, and hence this is the only consequence of $D'+A_2$ in $\DL(\pl)$ related to $q$ and $\neg q$ .

As we saw from $A_1$, (2.3.1) or (2.3.2) holds for each rule $s$ for $\neg q$ in $D'$.
(2.3.2) cannot apply to $s_2$, by definition of modular addition.
If (2.3.1) applies to $s_2$ in $D'+A_2$ then
$+\lambda \neg q \notin P_\lambda$ for $D'+A_2$.    
Because neither $+\Delta q$ nor $+\Delta \neg q$ appear in $P_\Delta$ for $D'+A_2$,
this can only hold if every rule for $\neg q$ in $D'+A_2$
contains a body literal $\alpha$ such that $+\lambda \alpha \notin P_\lambda$ for $D'+A_2$,
by the $+\lambda$ inference rule.

Now consider the application of the $+\pl$ inference rule to prove $+\pl q$ in  $D'+A_2$.
(2.1) is satisfied by $r_2$ and (2.2) is satisfied.
For every rule for $\neg q$ in $D'+A_2$, (2.3.1) is satisfied, as shown in the previous paragraph.
Hence $+\pl q$ is a consequence of $D'+A_2$.
However, this contradicts the supposed simulation.
Thus there is no $D'$ that simulates $D$ wrt addition of rules in $\DL(\pl)$.

\ignore{
\finish{Isn't this a contradiction right here???? do we need $A_3$???}

For addition $A_3$ we use a new proposition $p$.
\[
\begin{array}{lrcl}
r_3: &        & \Rightarrow & \phantom{\neg} p \\
s_3: & \neg q & \Rightarrow & \neg p \\
\end{array}
\]
Consequences of $D+A_3$ in $\DL(\partial)$  include $+\partial p$
(because $s_3$ is not applicable) and $\MD{p}$ and $\MD{\neg p}$.
However, we cannot infer $+\pl p$ from $D'+A_3$.

Consider the application of the $\pl$ inference rule to infer $+\pl p$.
First, $\MD{p}$ and $\MD{\neg p}$ are in $P_\Delta$
and so (1) does not hold and (2.2) does hold.
(2.1) holds via rule $r_3$.
However, 
(2.3.1) does not apply to $s_3$ because $+\lambda \neg q \in P_\lambda$, and
(2.3.2) cannot apply to $s_3$, by definition of modular addition.
Thus $+\pl p$ is not a consequence of $D'+A_3$.
Hence $D'$ is not a simulation of $D$ wrt addition of rules.
}
\end{proof}
}

\end{document}